\newtheorem{thm}{Theorem}
\newtheorem{cor}{Corollary}
\newtheorem{defi}{Definition}
\newtheorem{exa}{Example}
\def\Y{{\mathbf Y}}
\def\bZ{{\mathbf Z}}
\def\t{{\boldsymbol \theta}}
\def\X{{\mathbf X}}
\def\norm{{\mathcal N}}
\def\I{{\mathbf I}}
\def\one{{\mathbf 1}}
\def\R{{\mathbb R}}
\def\M{{\mathcal M}}
\def\x{{\mathbf x}}
\newtheorem*{rep@theorem}{\rep@title}
\newcommand{\newreptheorem}[2]{%
\newenvironment{rep#1}[1]{%
 \def\rep@title{#2 \ref{##1}}%
 \begin{rep@theorem}}%
 {\end{rep@theorem}}}
\begin{document}

\twocolumn[

\aistatstitle{Bayesian Inference Under Differential Privacy With Bounded Data}

\aistatsauthor{Zeki Kazan \And Jerome P.~Reiter}

\aistatsaddress{Duke University \And  Duke University}]

\begin{abstract}
  We describe Bayesian inference for the parameters of Gaussian models of bounded data protected by differential privacy. 
  Using this setting, we demonstrate that analysts can and should take constraints imposed by the bounds into account when specifying prior distributions. Additionally, we provide theoretical and empirical results regarding what classes of default priors produce valid inference for a differentially private release in settings where substantial prior information is not available. We discuss how these results can be applied to Bayesian inference for regression with differentially private data.
\end{abstract}

\section{INTRODUCTION}

Differential privacy (DP) \citep{dwork2006calibrating} is the current gold standard for protecting individual privacy when performing data releases involving confidential data. A common method of ensuring a release satisfies DP is to add calibrated random noise to sufficient statistics of the data. Once the noisy statistics are released, analysts in theory can use them to perform statistical tasks such as the estimation of unknown parameters of interest, prediction of new data values, and quantification of the uncertainty around these quantities. Performing these tasks in a principled way, however, can be nontrivial and is an area of active research.

Bayesian methods are natural for statistical inference after a DP release. By sampling from posterior distributions, analysts can perform estimation and prediction tasks automatically with ``built-in'' uncertainty quantification. As examples, \cite{bernstein2018differentially} and \cite{ju2022data} use Gibbs sampling to estimate differentially private posterior distributions, the former for exponential families and the latter for likelihoods satisfying a record additivity condition. \cite{gong2022exact} uses approximate Bayesian computation for settings where a perturbation mechanism is applied. Other works investigate Bayesian estimation of subset proportions \citep{li2022bayesian},  regression coefficients \citep{bernstein2019differentially}, and Poisson factorization \citep{schein2019locally}. These methods employ variations of the following scheme: repeatedly impute plausible values of the nonprivate statistics based on their private counterparts, and obtain samples of the parameters of interest via a nonprivate Bayesian analysis.

DP is related to Bayesian methods more generally. 
Releasing a sample from a posterior distribution can be used to achieve DP \citep{dimitrakakis2017differential, geumlek2017renyi, jewson2024differentially, minami2016differential, wang2015privacy, zhang2023dp, zhang2016differential}. Another line of work examines Bayesian semantics for DP and their relationship to disclosure risk assessment \citep{dwork2006calibrating, kasiviswanathan2014semantics, kazan2023prior, kazan2022assessing, kifer2022bayesian,kifer2014pufferfish, mcclure2012differential}. 

Existing methods for Bayesian inference under DP generally presume that the analyst sets an informative prior distribution. In practice, however, the analyst may not have sufficient prior knowledge to set an accurate, informative prior. Additionally, methods in the literature typically rely on prior distributions with unbounded support. Often, however, the confidential data are assumed to be constrained in some way; for example, values lie within an interval $[a,b]$ \citep{bernstein2019differentially, du2020differentially, sheffet2017differentially, wang2018revisiting, zhang2012functional}. 
Arguably, samples from the posterior distribution should reflect this constraint as well. Yet, existing methods for Bayesian inference under DP  can produce estimates or predictions outside the feasible range.  This may lead analysts to resort to ad hoc fixes such as clipping estimates or predictions to be at the bounds of the interval, which can lead to undesirable and unpredictable  behavior.

We address incorporating 
constraints in Bayesian inference under DP, including implications for prior specification. 
Our primary contributions include:
\begin{enumerate}
\item We propose an improved Gibbs sampler for Gaussian models to enable posterior inference after the release of DP statistics. The sampler uses exact forms of all full conditional distributions, avoiding 
approximations
that are typically used in other Bayesian post-processing inferential approaches. 
    \item We propose a method 
    to incorporate constraints into a Bayesian inference procedure after a DP data release that is computationally efficient, simple, and generalizes to a range of settings. We illustrate its benefits for estimation and prediction.
    \item We present a first-of-its-kind examination of default prior distributions analysts might consider in lieu of highly informative proper priors, and present theoretical results demonstrating the need for caution when considering weakly informative priors in a DP analysis.
\end{enumerate}

Throughout, we use the univariate Gaussian setting as a motivating case study. This setting is among the most important and commonly encountered in statistics but is relatively understudied in the DP literature, with only a few works providing frequentist methods for interval estimation \citep{d2015differential, du2020differentially, ferrando2022parametric, karwa2017finite} or hypothesis testing \citep{couch2019differentially, kazan2023test, pena2022differentially}.

\section{BACKGROUND AND SETTING}

We first briefly review 
DP and several of its properties. We then describe the DP univariate Gaussian setting and present our novel Gibbs sampling scheme.

\subsection{Differential Privacy}

As described by \cite{dwork2006calibrating}, a release mechanism $\M$ satisfies $\varepsilon$-DP if for all data sets $x,x'$ that differ in only one row and all $S \subseteq Range(\M)$,
\begin{equation} \label{eq:dp}
    P[\M(x) \in S] \leq e^{\varepsilon}\, P[\M(x') \in S].
\end{equation}
DP has many desirable properties \citep{dwork2006calibrating}. Here, we make use of \textit{post-processing.} For any data set $x$, if $\M(x)$ satisfies $\varepsilon$-DP and $g$ is not a function of $x$, then $g(\M(x))$ satisfies $\varepsilon$-DP. We also make use of \textit{composition.} For any data set $x$, if $m_1 = \M_1(x)$ satisfies $\varepsilon_1$-DP and $m_2 = \M_2(x, m_1)$ satisfies $\varepsilon_2$-DP, then $\M(x) = (m_1, m_2)$ satisfies $(\varepsilon_1 + \varepsilon_2)$-DP.

For any function $f$ applied to any data set $x$, the sensitivity is defined as $\Delta f = \max_{x, x'} |f(x) - f(x')|$ for all data sets $x,x'$ that differ in only one row.  The sensitivity of $f$ is used in the \textit{Laplace Mechanism,} which is a common method for achieving DP. If a function $f(x)$ has sensitivity $\Delta f$, then the mechanism $\M(x) = \textsf{Lap}(f(x), \Delta f/\varepsilon)$ satisfies $\varepsilon$-DP.

\subsection{The Univariate Gaussian Setting} \label{sec:gibbs}

Let $Y_1, \ldots, Y_n$ be confidential, scalar data values for $n$ individuals. The $Y_i$ are contained in some interval $[a,b]$
set by the data curator that accords with scientific understanding of the phenomenon under study and, to maintain the DP guarantee, is independent of the realized  data values. 
 As examples, test scores may be bounded between $[a,b] = [0, 100]$ and IQ scores between $[a, b] = [40, 250]$.  This interval is made public by the data curator.  
 
The data curator releases private versions of the sample mean and variance, i.e., 
$\bar{Y} = \sum_{i=1}^n Y_i / n$ and $S^2 = \sum_{i=1}^n (Y_i - \bar{Y})^2/(n-1)$. As shown in \cite{du2020differentially}, $\bar{Y}$ and $S^2$ have sensitivities $(b-a)/n$ and $(b-a)^2/n$, respectively. To achieve $\varepsilon$-DP, we presume the data curator releases $\bar{Y}^* \sim \textsf{Lap}(\bar{Y}, (b-a)/(\varepsilon_1 n))$ and $S^{2*} \sim \textsf{Lap}(S^2, (b-a)^2/(\varepsilon_2 n))$, where $\varepsilon_1 + \varepsilon_2 = \varepsilon$. 

We make the simplifying assumptions $a=0$ and $b=1$. We can do so without loss of generality, as described in Theorem \ref{thm:scale}, which is proved in Appendix \ref{app:scale}.
\begin{thm} \label{thm:scale}
    Let $Y_1, \ldots, Y_n \in [a,b]$ and let $\tilde{Y}_i = (Y_i - a)/(b - a) \in [0,1]$. Let $\bar{Y}$ and $S^2$ be  
    the sample mean and variance for $\{Y_i\}$ and let $\tilde{\bar{Y}}$ and $\tilde{S}^2$ be the sample mean and variance for $\{ \tilde{Y}_i\}$. Suppose each statistic is released via the Laplace Mechanism under $\varepsilon$-DP and denote the DP statistics $\bar{Y}^*$, $S^{2*}$, $\tilde{\bar{Y}}^*$, and $\tilde{S}^{2*}$. Then, $\bar{Y}^* \overset{d}{=} (b-a)\tilde{\bar{Y}}^* + a$ and $S^{2*} \overset{d}{=} (b-a)^2 \tilde{S}^{2*}$.
\end{thm}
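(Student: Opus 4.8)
The plan is to reduce everything to the affine-equivariance of the Laplace distribution, after first recording how the sample mean and variance transform under the rescaling $\tilde{Y}_i = (Y_i - a)/(b-a)$. First I would note the deterministic identities $\tilde{\bar{Y}} = (\bar{Y} - a)/(b-a)$ and $\tilde{S}^2 = S^2/(b-a)^2$; both follow immediately from linearity, since $\tilde{Y}_i - \tilde{\bar{Y}} = (Y_i - \bar{Y})/(b-a)$, so averaging gives the mean identity while summing squares and dividing by $n-1$ gives the variance identity. Equivalently, $\bar{Y} = (b-a)\tilde{\bar{Y}} + a$ and $S^2 = (b-a)^2 \tilde{S}^2$.

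Next I would record the relevant sensitivities. Since the $\tilde{Y}_i$ lie in $[0,1]$, the general formulas quoted above (with $b-a$ replaced by $1$) give $\Delta \tilde{\bar{Y}} = 1/n$ and $\Delta \tilde{S}^2 = 1/n$, whereas $\Delta \bar{Y} = (b-a)/n$ and $\Delta S^2 = (b-a)^2/n$. Consequently, under the privacy split $\varepsilon = \varepsilon_1 + \varepsilon_2$, the Laplace Mechanism releases $\bar{Y}^* \sim \textsf{Lap}(\bar{Y}, (b-a)/(\varepsilon_1 n))$ and $S^{2*} \sim \textsf{Lap}(S^2, (b-a)^2/(\varepsilon_2 n))$, while $\tilde{\bar{Y}}^* \sim \textsf{Lap}(\tilde{\bar{Y}}, 1/(\varepsilon_1 n))$ and $\tilde{S}^{2*} \sim \textsf{Lap}(\tilde{S}^2, 1/(\varepsilon_2 n))$.

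The key step is the affine-transformation property of the Laplace law: if $Z \sim \textsf{Lap}(\mu, s)$ and $c \neq 0$, then $cZ + d \sim \textsf{Lap}(c\mu + d, |c|s)$, a one-line change of variables in the density. Applying this with $c = b-a > 0$, $d = a$, $Z = \tilde{\bar{Y}}^*$, together with the identity $\bar{Y} = (b-a)\tilde{\bar{Y}} + a$, yields $(b-a)\tilde{\bar{Y}}^* + a \sim \textsf{Lap}\big((b-a)\tilde{\bar{Y}} + a,\ (b-a)/(\varepsilon_1 n)\big) = \textsf{Lap}\big(\bar{Y},\ (b-a)/(\varepsilon_1 n)\big)$, which is exactly the law of $\bar{Y}^*$. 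The same argument with $c = (b-a)^2$, $d = 0$, $Z = \tilde{S}^{2*}$, and $S^2 = (b-a)^2 \tilde{S}^2$ gives $(b-a)^2 \tilde{S}^{2*} \sim \textsf{Lap}(S^2, (b-a)^2/(\varepsilon_2 n))$, the law of $S^{2*}$.

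There is no genuine obstacle here — the argument is routine — but the point requiring care is consistency of the noise scales: one must check that scaling the $[0,1]$-calibrated noise by $(b-a)$ (respectively $(b-a)^2$) reproduces precisely the scale dictated by the $[a,b]$ sensitivity, which works out because the sensitivities themselves scale by exactly those factors. It is also worth being explicit that $b-a>0$, so that $|b-a| = b-a$ and the Laplace scale parameter remains positive throughout.
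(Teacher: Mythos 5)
Your proposal is correct and follows essentially the same route as the paper's proof: establish the deterministic identities $\bar{Y} = (b-a)\tilde{\bar{Y}} + a$ and $S^2 = (b-a)^2\tilde{S}^2$, compute the sensitivities on each scale, and then invoke the affine-transformation property of the Laplace distribution to match location and scale parameters. The only cosmetic difference is that you write the noise scales using the split budget $(\varepsilon_1, \varepsilon_2)$ while the paper's appendix proof uses a single $\varepsilon$ per statistic; the argument is unchanged.
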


A data analyst who only has access to $\bar{Y}^*$ and $S^{2*}$, as well as the particulars of the DP mechanism used to create them, considers these released summary statistics as noisy versions of $\bar{Y}$ and $S^2$, which themselves are estimates of underlying population parameters $\mu$ and $\sigma^2$. To make inferences for these population parameters, we presume the analyst posits a model for the underlying confidential data, namely  
$Y_i \sim \norm(\mu, \sigma^2)$ independently for all $i$. While theoretically this model has support outside the interval $[a,b]$, in practice it accurately describes data distributions when the observed values are symmetric with an empirical range not near the bounds. Indeed, data analysts  frequently use normal models for bounded data; for example, normal models are typical for quantities like proportions and standardized test scores, which have known ranges. They also are used, for example, for quantities like health measurements (e.g., blood pressures, birth weights) and stock market returns, which can be bounded based on extensive historical observations and scientific considerations.

We consider an analyst who follows the Bayesian paradigm to obtain posterior inferences about $(\mu, \sigma^2)$ given $(\bar{Y}^*, S^{2*})$.  We consider first the case of an analyst who incorporates prior beliefs into their analysis via the convenient, informative prior distribution,
\begin{align} \label{eq:prior}
    \sigma^2 \sim \textsf{IG}\left(\nu_0/2, \nu_0\sigma_0^2/2 \right), \hspace{1.25mm}  \mu \mid \sigma^2 \sim \norm\left(\mu_0, \sigma^2/\kappa_0\right)
\end{align}
for analyst-specified hyperparameters $(\nu_0, \kappa_0, \mu_0, \sigma_0^2).$  Here, $\textsf{IG}$ denotes the inverse-gamma distribution. The prior distribution in \eqref{eq:prior} is  conjugate for the normal model, which facilitates computation.  Using \eqref{eq:prior}, the posterior distribution $p(\mu, \sigma^2 | \bar{Y}^*, S^{2*})$ can be estimated using Markov chain Monte Carlo techniques. 

\cite{bernstein2018differentially, bernstein2019differentially} present Gibbs samplers for similar inferential tasks, in which one samples from the full conditional of each unknown parameter in $(\mu, \sigma^2, \bar{Y}, S^2)$ at each iteration.  However, these samplers rely on an
approximation to the full conditional distribution of $S^2$, namely that it is normally distributed.  Clearly this is not the case,
since $S^2$ must be positive.
Thus, we come to our first innovation: we derive a Gibbs sampler that 
utilizes the exact full conditional distribution of  $S^2$, thereby eschewing the normal approximation for $S^2$. 
Specifically, when $\varepsilon_2 < 2$, the full conditional for $S^2$ is a distribution we call a truncated gamma mixture (TGM).\footnote{For the distribution of $S^2$ to be a TGM, we require $\sigma^2 < (n-1)/(2n\varepsilon_2)$. We show in Corollary \ref{cor:sigma_sq} that in this setting we can guarantee $\sigma^2 \leq 1/4$, which implies $\sigma^2 < (n-1)/(2n\varepsilon_2)$ automatically when $\varepsilon_2 < 2$. Within the Gibbs sampler, we reject and resample any draws for which $\sigma^2 \geq (n-1)/(2n\varepsilon_2)$.}
\begin{defi}
    $X \sim \textsf{TGM}(\alpha, \beta, \lambda, \tau)$ for $\alpha > 0$, $\tau \in \R$, and  $\beta > \lambda \geq 0$ if its distribution is as follows. When $\tau \leq 0$, $X \sim \textsf{Gamma}(\alpha, \beta + \lambda)$ and when
    when $\tau > 0$, it has probability density function
    \begin{align}
        p(x) = \begin{cases}
            \pi_1 \frac{(\beta - \lambda)^\alpha}{\gamma(\alpha, (\beta - \lambda)\tau)} x^{\alpha-1} e^{-(\beta - \lambda)x}, & \mbox{if } x \leq \tau; \\
            \pi_2 \frac{(\beta + \lambda)^\alpha}{\Gamma(\alpha, (\beta + \lambda)\tau)} x^{\alpha-1} e^{-(\beta + \lambda)x} & \mbox{if } x > \tau,
        \end{cases}
    \end{align}
    where
    \begin{align}
        \begin{matrix}
            \pi_1 &= \frac{e^{-\lambda \tau} \frac{\gamma(\alpha, (\beta - \lambda)\tau)}{(\beta - \lambda)^\alpha}}{e^{-\lambda \tau} \frac{\gamma(\alpha, (\beta - \lambda)\tau)}{(\beta - \lambda)^\alpha} + e^{\lambda \tau} \frac{\Gamma(\alpha, (\beta + \lambda)\tau)}{(\beta + \lambda)^\alpha}}, \\
            \pi_2 &= \frac{e^{\lambda \tau}\frac{\Gamma(\alpha, (\beta + \lambda)\tau)}{(\beta + \lambda)^\alpha}}{e^{-\lambda \tau} \frac{\gamma(\alpha, (\beta - \lambda)\tau)}{(\beta - \lambda)^\alpha} + e^{\lambda \tau} \frac{\Gamma(\alpha, (\beta + \lambda)\tau)}{(\beta + \lambda)^\alpha}}. 
        \end{matrix}
        \label{eq:weights}
    \end{align}
\end{defi}
Here, $\gamma(\alpha, x)$ and $\Gamma(\alpha, x)$ are the lower and upper incomplete gamma functions. We provide an algorithm for sampling from the TGM in Appendix \ref{app:TGM}.

Derivations for all full conditionals for the Gibbs sampler are provided in Appendix \ref{app:Gibbs}. The full conditionals for $\mu$ and $\sigma^2$ are identical to those of the nonprivate setting. To obtain the full conditional for $\bar{Y}$, we follow the strategy of \cite{bernstein2018differentially} based on the Bayesian LASSO \citep{park2008bayesian} and use the fact that $\bar{Y}^* \sim \textsf{Lap}(\bar{Y}, 1/(\varepsilon_1 n))$ is equal in distribution to $\bar{Y}^* \sim \norm(\bar{Y}, \omega^2)$ for $\omega^2 \sim \textsf{Exp}(\varepsilon_1^2 n^2 / 2)$. The full conditional for $\bar{Y}^*$ is Gaussian, the full conditional for $1/\omega^2$ is inverse-Gaussian, and the full conditional for $S^2$ is a TGM. The parameters of the full conditionals and an explicit algorithm for our Gibbs sampler are provided in Appendix \ref{app:full_cond}.

A major advantage of this sampler is its runtime. The computational complexity of this sampler is $\mathcal{O}(T)$, where $T$ is the number of Gibbs iterations. The runtime does not scale with $n$. The Gibbs sampler of  \cite{bernstein2018differentially}  
applied to the univariate Gaussian setting also has computational complexity $\mathcal{O}(T)$. This sampler, however, requires a Gaussian approximation for $S^2$, which, particularly when $\sigma^2$ is near zero or $n$ is small, can lead to unreliable results. The sampler of \cite{ju2022data}  does not require any approximations; however, it has computational complexity $\mathcal{O}(nT)$, making it potentially infeasible for large $n$. Our proposed sampler achieves both the exact sampling of \cite{ju2022data} and the $\mathcal{O}(T)$ complexity of \cite{bernstein2018differentially}.

\section{ENFORCING CONSTRAINTS} \label{sec:bounds}

The modeling in Section \ref{sec:gibbs} does not account for the fact that the data values lie within some $[a, b]$.  In this section, we show that analysts who incorporate the constraints on the data values can reduce the uncertainty in estimates of parameters of interest ``for free.'' They need not make additional assumptions; they merely determine how conditions on the data values constrain the model used for analysis.  

There are several ways an analyst could incorporate these constraints. In this work, we introduce implied constraints on all unknown parameters as part of the modeling and sampling.  This leads to straightforward estimation methods that can be readily generalized beyond the simple normal model.  Another approach is to  put constraints directly on the likelihood function for the model. We illustrate this approach for the normal model in Appendix \ref{app:constraints_comp}. Ultimately, we find that our approach is computationally advantageous 
and can offer more reliable repeated sampling performance; 
see Appendix \ref{app:constraints_comp} for details.

\subsection{Constraints for the Gaussian Setting} \label{sec:Gauss_bounds}

In the univariate Gaussian setting from Section \ref{sec:gibbs}, the constraint that $Y_i \in [0,1]$ affects the model in two ways. Firstly, the bound constrains the parameters $\mu$ and $\sigma^2$, since bounded random variables have bounded moments. Secondly, the bound constrains the sufficient statistics $\bar{Y}$ and $S^2$, since samples of bounded random variables have bounded means and variances. We demonstrate how we establish these bounds in a manner that can be easily incorporated into the Gibbs sampler from Section \ref{sec:gibbs}. See Appendix \ref{app:bound} for proofs.

First, Theorem \ref{thm:par_bounds} establishes conditional bounds for $\mu \mid \sigma^2$ and $\sigma^2 \mid \mu$. These bounds follow from monotonicity of expectation. As a result, since the data values are assumed to be bounded, we may replace the unbounded normal-inverse gamma prior from (\ref{eq:prior}) with a prior of the same form truncated to be within the feasible region. This yields a posterior of the same form as in Section \ref{sec:gibbs} truncated to the feasible region.

\begin{thm} \label{thm:par_bounds}
    Let $Y_i \in [0,1]$ have moments $E[Y_i] = \mu$ and $V[Y_i] = \sigma^2$.
    It follows that $\sigma^2 \in [0, \mu(1-\mu)]$ and $\mu \in [1/2 - \sqrt{1/4 - \sigma^2}, 1/2 + \sqrt{1/4 - \sigma^2}]$.
\end{thm}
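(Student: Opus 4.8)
The plan is to exploit the elementary pointwise inequality that a random variable supported on $[0,1]$ satisfies $Y_i^2 \le Y_i$, together with monotonicity of expectation. First I would note that since $0 \le Y_i \le 1$ almost surely, we have $Y_i^2 \le Y_i$, so $E[Y_i^2] \le E[Y_i] = \mu$. Combining this with the identity $\sigma^2 = E[Y_i^2] - \mu^2$ gives $\sigma^2 \le \mu - \mu^2 = \mu(1-\mu)$; the lower bound $\sigma^2 \ge 0$ is immediate since a variance is nonnegative. This establishes the first claim.

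For the second claim, I would treat the inequality $\sigma^2 \le \mu(1-\mu)$ as a constraint on $\mu$ for fixed $\sigma^2$. Rearranging yields $\mu^2 - \mu + \sigma^2 \le 0$, a quadratic in $\mu$ opening upward with roots $\mu_\pm = 1/2 \pm \sqrt{1/4 - \sigma^2}$. Hence the inequality holds precisely when $\mu \in [\mu_-, \mu_+]$, which is the stated interval. I would also remark that these roots are real: the first part already forces $\sigma^2 \le \mu(1-\mu) \le 1/4$, so $1/4 - \sigma^2 \ge 0$ and the interval is well-defined.

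The argument is short, and the only point requiring care is ensuring the discriminant $1/4 - \sigma^2$ is nonnegative so that the interval for $\mu$ makes sense; this is guaranteed by the bound derived in the first part, so there is no genuine obstacle. If desired, one can additionally note that both bounds are tight, being attained by two-point distributions placing mass on $\{0,1\}$, which also confirms that no sharper deterministic bound is available from the moment constraints alone.
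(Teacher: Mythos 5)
Your proof is correct and follows essentially the same route as the paper's: the pointwise inequality $Y_i^2 \le Y_i$ plus monotonicity of expectation gives $\sigma^2 \le \mu(1-\mu)$, and the interval for $\mu$ comes from locating the roots of the quadratic $\mu^2 - \mu + \sigma^2 \le 0$. Your added remarks on the nonnegativity of the discriminant and tightness via two-point distributions are nice touches not present in the paper but not needed for the result.
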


\begin{cor} \label{cor:sigma_sq}
    If $Y_i \in [0,1]$, then $V[Y_i] = \sigma^2 \leq 1/4$.
\end{cor}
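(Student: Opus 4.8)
The plan is to derive this immediately from Theorem \ref{thm:par_bounds}, which I am permitted to assume. That theorem already gives the containment $\sigma^2 \in [0, \mu(1-\mu)]$ for any bounded random variable $Y_i \in [0,1]$ with mean $\mu$ and variance $\sigma^2$. So the only remaining task is to bound the quantity $\mu(1-\mu)$ uniformly over the range of admissible means.

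First I would note that, since $Y_i \in [0,1]$, its mean $\mu$ necessarily lies in $[0,1]$. Then I would bound $\mu(1-\mu)$ over this interval by completing the square: write $\mu(1-\mu) = 1/4 - (\mu - 1/2)^2$, so that $\mu(1-\mu) \leq 1/4$ with equality exactly when $\mu = 1/2$. (Equivalently, one could invoke the AM--GM inequality on $\mu$ and $1-\mu$.) Chaining this with the upper bound $\sigma^2 \leq \mu(1-\mu)$ from Theorem \ref{thm:par_bounds} gives $\sigma^2 \leq 1/4$, which is the claim.

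There is essentially no obstacle here: the corollary is a one-line consequence of the preceding theorem together with an elementary quadratic estimate, and no further probabilistic input is needed. The only point worth stating carefully is that the mean of a $[0,1]$-valued random variable is itself confined to $[0,1]$, which is what makes the quadratic bound applicable; this is immediate from monotonicity of expectation, exactly as in the proof of Theorem \ref{thm:par_bounds}.
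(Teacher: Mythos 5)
Your proposal is correct and matches the paper's proof: both deduce $\sigma^2 \leq \mu(1-\mu)$ from Theorem \ref{thm:par_bounds} and then bound the quadratic $\mu(1-\mu)$ by its maximum value $1/4$ at $\mu = 1/2$. The only cosmetic difference is that you complete the square and remark that $\mu \in [0,1]$, neither of which is strictly needed since $\mu(1-\mu) = 1/4 - (\mu - 1/2)^2 \leq 1/4$ holds for all real $\mu$.
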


Theorem \ref{thm:stat_bounds} establishes conditional bounds for $\bar{Y} \mid S^2$ and $S^2 \mid \bar{Y}$. This yields a posterior of the same form as in Section \ref{sec:gibbs} truncated to the feasible region.

\begin{thm} \label{thm:stat_bounds}
    Let $Y_1, \ldots, Y_n$ be such that each $Y_i \in [0,1]$. Then $S^2 \in [0, n/(n-1)\bar{Y}(1-\bar{Y})]$ and $\bar{Y} \in [1/2 - \sqrt{1/4 - (n-1)/n \cdot S^2}, 1/2 + \sqrt{1/4 - (n-1)/n \cdot S^2}]$.
\end{thm}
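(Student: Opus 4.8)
The plan is to reduce Theorem~\ref{thm:stat_bounds} to Theorem~\ref{thm:par_bounds} by passing to the empirical distribution. Introduce a random variable $Z$ that takes each of the values $Y_1, \ldots, Y_n$ with probability $1/n$. Since every $Y_i \in [0,1]$, we have $Z \in [0,1]$ almost surely, and a direct computation gives $E[Z] = \sum_{i=1}^n Y_i / n = \bar{Y}$ and
\begin{equation*}
    V[Z] = \frac{1}{n}\sum_{i=1}^n (Y_i - \bar{Y})^2 = \frac{n-1}{n}\,S^2 .
\end{equation*}
Thus $Z$ is a $[0,1]$-valued random variable whose first two moments are $\mu = \bar{Y}$ and $\sigma^2 = \frac{n-1}{n}S^2$, which is exactly the situation covered by Theorem~\ref{thm:par_bounds}.

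Applying Theorem~\ref{thm:par_bounds} to $Z$ yields $\frac{n-1}{n}S^2 \in [0, \bar{Y}(1-\bar{Y})]$ and $\bar{Y} \in [1/2 - \sqrt{1/4 - \frac{n-1}{n}S^2}, \ 1/2 + \sqrt{1/4 - \frac{n-1}{n}S^2}]$. Rearranging the first containment gives $S^2 \in [0, \frac{n}{n-1}\bar{Y}(1-\bar{Y})]$, and the second is already in the stated form. So the theorem follows immediately once the reduction is set up.

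If one prefers to avoid invoking Theorem~\ref{thm:par_bounds}, the same two facts can be established directly along the lines of that theorem's proof. For the bound on $S^2$, write $\frac{n-1}{n}S^2 = \frac{1}{n}\sum_i Y_i^2 - \bar{Y}^2$; since $0 \le Y_i \le 1$ implies $Y_i^2 \le Y_i$, we get $\frac{1}{n}\sum_i Y_i^2 \le \bar{Y}$, hence $\frac{n-1}{n}S^2 \le \bar{Y} - \bar{Y}^2 = \bar{Y}(1-\bar{Y})$, together with the trivial $S^2 \ge 0$. For the bound on $\bar{Y}$, the inequality $\frac{n-1}{n}S^2 \le \bar{Y}(1-\bar{Y})$ is the quadratic inequality $\bar{Y}^2 - \bar{Y} + \frac{n-1}{n}S^2 \le 0$ in $\bar{Y}$; its discriminant is $1 - 4\cdot\frac{n-1}{n}S^2 \ge 0$ (which is nonnegative precisely because $\frac{n-1}{n}S^2 \le \bar{Y}(1-\bar{Y}) \le 1/4$), so the roots are real and the solution set is the stated interval $[1/2 - \sqrt{1/4 - \frac{n-1}{n}S^2}, \ 1/2 + \sqrt{1/4 - \frac{n-1}{n}S^2}]$.

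There is no serious obstacle here; the only point requiring care is the bookkeeping of Bessel's correction, i.e., recognizing that the ``population variance'' of the empirical distribution is $\frac{n-1}{n}S^2$ rather than $S^2$, which is what produces the factor $\frac{n}{n-1}$ in the bound on $S^2$ and the factor $\frac{n-1}{n}$ inside the square roots. One should also note for completeness that the $S^2$ bound is attained (e.g., by a two-point configuration of the $Y_i$ at $0$ and $1$), so the feasible region cannot be tightened without further assumptions.
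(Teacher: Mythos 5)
Your proposal is correct. Your primary argument, however, takes a genuinely different route from the paper: you reduce Theorem~\ref{thm:stat_bounds} to Theorem~\ref{thm:par_bounds} by observing that $(\bar{Y}, \tfrac{n-1}{n}S^2)$ are exactly the mean and variance of the empirical distribution placing mass $1/n$ on each $Y_i$, which is a $[0,1]$-valued random variable, so the parameter bounds apply verbatim. This is clean and economical --- it makes the factor $\tfrac{n}{n-1}$ appear for a structural reason (Bessel's correction) rather than from algebra, and it would generalize immediately to any other moment constraint proved at the population level. The paper instead argues directly: it expands $(n-1)S^2 = \sum_i Y_i^2 - n\bar{Y}^2$, applies $Y_i^2 \le Y_i$ termwise to get $(n-1)S^2 \le n\bar{Y}(1-\bar{Y})$, and then solves the quadratic inequality in $\bar{Y}$ exactly as in the proof of Theorem~\ref{thm:par_bounds}; this is precisely your fallback argument, so your second paragraph coincides with the paper's proof (and states the key inequality in the correct direction, $Y_i^2 \le Y_i$, where the paper's write-up contains a transposed typo). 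The only substantive content you add beyond the paper is the observation that the $S^2$ bound is attained by $\{0,1\}$-valued configurations, which is correct since $Y_i^2 = Y_i$ holds with equality there; this is not needed for the theorem but confirms the bound is tight.
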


The full conditionals under these constraints are described in Appendix \ref{app:full_cond}.

\subsection{Example: The Blood Lead Dataset} \label{sec:bounds_lead}

We now demonstrate the effect of incorporating these constraints using genuine data.\footnote{Code to reproduce all experiments is in Appendix \ref{app:code} and at \url{https://github.com/zekicankazan/dp_priors}.}
All computations are performed on the $[0,1]$ scale, but all quantities are converted back to the original scale for clarity of presentation. Interval estimates are those with highest posterior density (HPD).\footnote{HPD intervals are preferred over quantile-based intervals when the posterior distribution is skewed \citep{kruschke2014doing}. HPD intervals can be computed with standard tools, such as the \texttt{HDInterval} package in \texttt{R} \citep[License:~GPL-3]{hdi}.} The following example, adapted from an introductory statistics textbook \citep[Section~7.1]{diez2012openintro}, is 
representative of a common 
statistical analysis for numerical, scalar data.

\begin{exa} \label{ex:lead}
    Researchers sampled the blood lead levels of $n = 43$ policemen assigned to outdoor work in Egypt to examine the effect of leaded gasoline on exposed individuals \citep{mortada2001study}. The sample mean was $\bar{Y} = 32.08 \, \mu g/dL$, and the sample variance was $S^2 = 16.98^2 \, \mu g^2/dL^2$. An expert advises that blood lead levels in this region are reasonably bounded above by $100 \, \mu g/dL$. Using these bounds, the researchers use the Laplace Mechanism with $\varepsilon_1 = \varepsilon_2 = 0.25$  to release noisy  statistics $\bar{Y}^* = 34.30 \, \mu g/dL$ and $S^{2*} = 47.16^2 \, \mu g^2/dL^2$. A secondary data analyst seeks inferences for $\mu$ and $\sigma^2$, the average and variance of blood lead levels of all policemen assigned to outdoor work in Egypt. She believes blood lead levels are reasonably approximated by a Gaussian distribution. To reflect prior knowledge, she  uses \eqref{eq:prior} with $(\mu_0 = 12.5, \sigma_0^2 = 3.8^2, \kappa_0 = 1, \nu_0 = 1)$, roughly representing the equivalent information of a single ``prior data point.'' 
\end{exa}

\begin{figure}[t]
    \centering
    \includegraphics {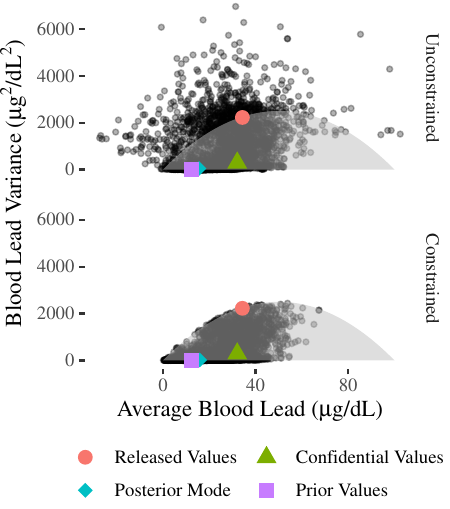}
    \caption{Joint posterior distribution for $(\mu,\sigma^2)$ in Example \ref{ex:lead}. Here, $(\bar{Y}^* = 34.30, S^{2*} = 47.17^2)$ is represented by the red circle, the unreleased $(\bar{Y} = 32.08, S^2 = 16.96^2)$ is represented by the green triangle, the analyst's posterior mode is represented by the blue diamond, and $(\mu_0 = 12.5, S^2 = 3.8^2)$ is represented by the purple square. Upper and lower panels display the posterior when constraints are and are not accounted for, respectively. The shaded area represents the feasible region for $(\mu, \sigma^2)$ from Theorem \ref{thm:par_bounds}. Plots based on 5,000 Gibbs sampler iterations.}
    \label{fig:bounding}
\end{figure}

Figure \ref{fig:bounding} presents the joint posterior distribution for $(\mu, \sigma^2)$ with and without constraints accounted for. For this particular release, $S^{2*}$ is larger than the nonprivate $S^2$ and is near the upper bound on $\sigma^2$ determined by Theorem \ref{thm:par_bounds}. Thus, more than 10\% of samples from the unconstrained analysis have infeasibly large $\sigma^2$. Additionally, the unconstrained analysis produces samples with $\mu$ either negative or above the upper bound. The constrained analysis respects the bounds and produces only reasonable samples for $\mu$ and $\sigma^2$.

In both analyses, the analyst's posterior mode is around $\hat{\mu} \approx 16 \, \mu \textup{g}/\textup{dL}$ and $\hat{\sigma}^2 \approx 5^2 \, \mu \textup{g}^2/\textup{dL}^2$. The analyses, however, have substantially different amounts of uncertainty. The 95\% HPD interval for $\sigma^2$ is $[1.2^2, 50.8^2]$ without constraints and $[1.0^2, 39.3^2]$ with constraints.  Accounting for constraints allows the analyst to rule out the region with infeasibly high $\sigma^2$, yielding a  tighter interval estimate. The 95\% HPD interval for $\mu$ is $[3.4, 48.2]$ without constraints and $[1.9, 42.0]$ with constraints. Similarly to the above, the tighter interval in the constrained analysis is due primarily to ruling out $\mu$ in the region where $\sigma^2$ is infeasibly large. In both analyses, the posterior concentration is highest near the prior values, $\mu_0$ and $\sigma_0^2$; there is much less density near the released noisy values. This indicates that the posterior inference is likely quite sensitive to the analyst's prior distribution.

The constrained posterior is not merely a truncated version of the unconstrained posterior; truncation in this way would yield inaccurate point and interval estimates. Instead, by truncating within the Gibbs sampler, we have mathematically coherent estimates. 

\section{DEFAULT PRIOR CHOICES} \label{sec:default}

As evident in Example \ref{ex:lead}, the results of a Bayesian analysis under DP can be sensitive to the
prior distribution. This is unsurprising, since the addition of noise yields data with less information about model parameters than in the public setting, making it difficult to overwhelm the information in the prior distribution. One potential fix---used in examples in the Bayesian DP inference literature---is to increase the prior variance, yielding a weakly informative prior. One also could examine the limit in which the prior variance is infinite. We show, however, that this strategy may lead to improper limiting posterior distributions, even in settings where the nonprivate posterior is proper. In such cases, weakly informative priors may yield unreliable inference and should be avoided \citep{gelman2006prior}.

\subsection{Default Priors for the Gaussian Setting} \label{sec:flat_priors}

For the unconstrained analysis in the univariate Gaussian setting, a weakly informative prior of the form in (\ref{eq:prior}) is created by making $\kappa_0$ and $\nu_0$ small. In the limit as these hyperparameters go to zero, this produces the default prior $p(\mu, \sigma^2) \propto (\sigma^2)^{-1}$. In the nonprivate setting, this is called the independent Jeffrey's prior \citep{sun2007objective}. While it does not correspond to a proper probability distribution, the posterior distribution it produces is not only proper, but also frequentist matching.
In the private setting, unfortunately, the posterior distribution produced by $p(\mu, \sigma^2) \propto (\sigma^2)^{-1}$ is not a proper probability distribution. The following result, proved in Appendix \ref{app:proper}, demonstrates this.

\begin{thm} \label{thm:IJ_proper}
    For confidential data $Y_1, \ldots, Y_n \overset{iid}{\sim} \norm(\mu, \sigma^2)$ where $\bar{Y}^* \sim \textsf{Lap}(\bar{Y}, 1/(\varepsilon_1 n))$ and $S^{2*} \sim \textsf{Lap}(S^2, 1/(\varepsilon_2 n))$ are released, if an analyst has prior $p(\mu, \sigma^2) \propto (\sigma^2)^{-1}$, then their posterior $p(\mu, \sigma^2 \mid \bar{Y}^*, S^{2*})$ is not a proper probability distribution.
\end{thm}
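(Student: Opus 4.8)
The plan is to write the posterior density as an integral over the latent confidential statistics $(\bar{Y}, S^2)$ and show that this integral diverges. Specifically, by the hierarchical construction in Section \ref{sec:gibbs}, the joint density of everything is proportional to
\[
(\sigma^2)^{-1} \cdot \exp\!\left(-\tfrac{\kappa\text{-free normal likelihood for } \bar Y, S^2 \text{ given } \mu, \sigma^2}{}\right) \cdot \exp\!\left(-\varepsilon_1 n |\bar{Y}^* - \bar Y|\right)\cdot \exp\!\left(-\varepsilon_2 n |S^{2*} - S^2|\right),
\]
where the middle factor is the standard normal-model likelihood $p(\bar Y, S^2 \mid \mu, \sigma^2)$, namely a Gaussian in $\bar Y$ times a (scaled) $\chi^2_{n-1}$ density in $S^2$. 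First I would integrate out $\bar Y$ and $S^2$: both Laplace factors are bounded by $1$, so the marginal $p(\bar{Y}^*, S^{2*} \mid \mu, \sigma^2)$ is at most a constant, but more usefully it is \emph{bounded below} by integrating the likelihood only over a fixed compact neighborhood of $(\bar{Y}^*, S^{2*})$, on which the Laplace factors are bounded below by a positive constant. This reduces the problem to showing $\int_0^\infty \int_{\mathbb R} (\sigma^2)^{-1} \, c(\mu,\sigma^2)\, d\mu\, d\sigma^2 = \infty$, where $c(\mu,\sigma^2)$ is a lower bound on the mass the normal likelihood places on that compact neighborhood.

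The key step is the small-$\sigma^2$ analysis. As $\sigma^2 \to 0$ with $\mu$ near $\bar{Y}^*$, the Gaussian factor $p(\bar Y \mid \mu, \sigma^2)$ concentrates and contributes an $O(1)$ amount of mass near $\bar Y = \bar{Y}^* $; the delicate factor is the $\chi^2_{n-1}$-type density for $S^2$ evaluated near the \emph{fixed positive} value $S^{2*}$. The density of $S^2$ is proportional to $(\sigma^2)^{-(n-1)/2}(S^2)^{(n-3)/2}\exp(-(n-1)S^2/(2\sigma^2))$, which for $S^2$ bounded away from $0$ decays like $\exp(-c/\sigma^2)$ as $\sigma^2 \to 0$ — that kills integrability at the origin, so the divergence does \emph{not} come from $\sigma^2 \to 0$. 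Hence the real source of impropriety must be the tail $\sigma^2 \to \infty$ (or equivalently $|\mu| \to \infty$). There, for each large $\sigma^2$, choosing $\mu$ in a window of width $\sim \sigma$ around $\bar{Y}^*$ keeps the Gaussian factor's contribution of order $\sigma^{-1}\cdot\sigma = O(1)$ after integrating $\mu$ over that window, while the $\chi^2$ factor, evaluated at the fixed $S^{2*}$, behaves like $(\sigma^2)^{-(n-1)/2}\exp(-c S^{2*}/\sigma^2) \sim (\sigma^2)^{-(n-1)/2}$ for large $\sigma^2$. Combining with the $(\sigma^2)^{-1}$ prior and the $d\sigma^2$ from the radial integration, I would get an integrand that behaves like $(\sigma^2)^{-(n-1)/2 - 1 + 1/2}$ — and the point is that, unlike the nonprivate case, the Laplace convolution has \emph{flattened} the likelihood so that one extra factor of $S^2$ (up to $S^{2*} + O(1)$) is available, effectively weakening the decay by the margin needed to make $\int^\infty (\sigma^2)^{-\text{(something} \le 1)}\,d\sigma^2 = \infty$.

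The main obstacle I anticipate is making the tail estimate rigorous and uniform: one must carefully lower-bound $p(\bar{Y}^*, S^{2*}\mid\mu,\sigma^2)$ for large $\sigma^2$ by exhibiting an explicit region of $(\bar Y, S^2)$-space (e.g., $\bar Y$ within $O(1)$ of $\bar Y^*$ and $S^2 \in [S^{2*}/2,\, S^{2*}]$ or similar) on which \emph{both} the Gaussian density and the $\chi^2$ density are simultaneously bounded below by the claimed rate, then integrate the resulting bound against $(\sigma^2)^{-1}\,d\mu\,d\sigma^2$ and show the double integral diverges. I would also double-check the boundary between the convergent small-$\sigma^2$ regime and the divergent large-$\sigma^2$ regime, since the exact exponent of $\sigma^2$ in the final integrand determines whether the argument needs $n$ small, or works for all $n$; if it only diverges for small $n$ one must either restrict or find a sharper lower bound (for instance, enlarging the $S^2$-window to $[\epsilon, \sigma^2]$ to harvest more mass from the $\chi^2$ tail). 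Everything else — the post-processing-style bound replacing the Laplace factors by a positive constant on a compact set, and the change to polar-type coordinates in $(\mu,\sigma)$ — is routine.
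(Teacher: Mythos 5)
There is a genuine gap: you have identified the wrong source of the divergence, and the regime you propose to exploit does not actually diverge. The paper's proof shows that the normalizing integral diverges as $\sigma^2 \to 0$, not as $\sigma^2 \to \infty$. The step in your argument that leads you astray is lower-bounding $p(\bar{y}^*, s^{2*} \mid \mu, \sigma^2)$ by integrating the latent $(\bar{y}, s^2)$ only over a compact neighborhood of the observed $(\bar{y}^*, s^{2*})$ with $s^2$ bounded away from $0$. That restriction throws away exactly the mass that causes the impropriety. If instead you keep the full range $s^2 \in (0,\infty)$, then for small $\sigma^2$ the $\textsf{Gam}(\tfrac{n-1}{2}, \tfrac{n-1}{2\sigma^2})$ density of the latent $s^2$ concentrates near $0$, and the Laplace factor satisfies $e^{-\varepsilon_2 n |s^{2*} - s^2|} \geq e^{-\varepsilon_2 n s^{2*}}$ there, a positive constant independent of $\sigma^2$. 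Hence $\int_0^\infty p(s^{2*} \mid s^2)\, p(s^2 \mid \sigma^2)\, \mathrm{d}s^2$ is bounded \emph{below} by a positive constant uniformly as $\sigma^2 \to 0$ (the paper's explicit bound is $\tfrac{\varepsilon_2 n}{2}\bigl(\tfrac{(n-1)/(2\varepsilon_2 n)}{(n-1)/(2\varepsilon_2 n) + \sigma^2}\bigr)^{(n-1)/2}$), and combined with the prior this gives $\int_0^1 (\sigma^2)^{-1}\,\mathrm{d}\sigma^2 = \infty$. Your statement that the $\exp(-c/\sigma^2)$ decay ``kills integrability at the origin'' is true only for the truncated likelihood you constructed, not for the actual marginal likelihood of the released statistics; this is precisely the qualitative difference the Laplace noise introduces relative to the nonprivate problem.

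Your fallback, that the divergence must then come from $\sigma^2 \to \infty$, also fails. For large $\sigma^2$ the marginal of $s^{2*}$ decays like $(\sigma^2)^{-(n-1)/2}$ (the gamma density evaluated near the fixed $s^{2*}$, smeared by an $O(1)$-width Laplace kernel), the $(\mu, \bar{y})$ integral contributes exactly $1$ (integrate the Gaussian over $\mu$, then the Laplace over $\bar{y}$ -- there is no extra factor of $\sigma$ to harvest), so the tail integrand is of order $(\sigma^2)^{-1-(n-1)/2}$, which is integrable at infinity for every $n \geq 2$. Even your own exponent bookkeeping arrives at $(\sigma^2)^{-n/2}$, integrable at infinity for $n > 2$, which you flag as a possible need to restrict $n$; the theorem carries no such restriction, and the resolution is that the divergence was at the origin all along.
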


An alternative strategy for finding a default prior for this setting is to focus on the likelihood's Laplace portion. We show in Appendix \ref{app:lap_unif} that for a Laplace likelihood in the nonprivate setting, a uniform prior produces a proper, frequentist matching posterior distribution. A uniform prior on $(\mu, \sigma^2)$ thus may be a reasonable choice in the private setting. Theorem \ref{thm:Unif_proper}, proved in Appendix \ref{app:proper}, demonstrates that $p(\mu, \sigma^2) \propto 1$ does indeed produce a proper posterior distribution.

\begin{figure}[t]
    \centering
    \includegraphics {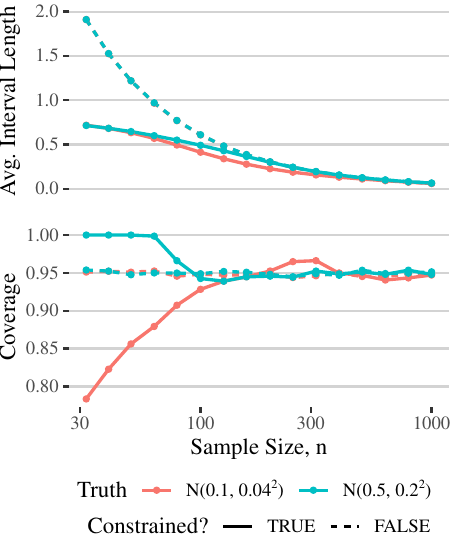}
    \caption{The average length (top) and coverage rate (bottom) of 95\% HPD intervals for $\mu$ for different $n$. Results based on 10,000 simulated datasets $Y_i \in [0,1]$ released with $\varepsilon_1 = \varepsilon_2 = 0.1$ and analyzed with prior $p(\mu, \sigma^2) \propto 1$. Analyses with constraints accounted for are solid lines and not accounted for are dashed lines. Data generating model is either $\norm(\mu = 0.1, \sigma^2 = 0.04^2)$ (red) or $\norm(\mu = 0.5, \sigma^2 = 0.2^2)$ (blue). Note that the red dashed line is below the blue dashed line. Each Gibbs sampler is run for 20,000 iterations.}
    \label{fig:coverage}
\end{figure}

\begin{thm} \label{thm:Unif_proper}
    For confidential data $Y_1, \ldots, Y_n \overset{iid}{\sim} \norm(\mu, \sigma^2)$ where $n > 3$ and $\bar{Y}^* \sim \textsf{Lap}(\bar{Y}, 1/(\varepsilon_1 n))$ and $S^{2*} \sim \textsf{Lap}(S^2, 1/(\varepsilon_2 n))$ are released, if an analyst has prior $p(\mu, \sigma^2) \propto 1$, then their posterior $p(\mu, \sigma^2 \mid \bar{Y}^*, S^{2*})$ is a proper probability distribution.
\end{thm}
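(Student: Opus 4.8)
The plan is to prove properness by bounding the normalizing constant of the posterior directly. Since the prior $p(\mu,\sigma^2)\propto 1$ contributes only a constant factor and the likelihood $p(\bar Y^*,S^{2*}\mid\mu,\sigma^2)$ is strictly positive everywhere, the posterior is a proper probability distribution if and only if
\[
I \;:=\; \int_0^\infty \!\!\int_{-\infty}^\infty p(\bar Y^*, S^{2*}\mid\mu,\sigma^2)\, d\mu\, d\sigma^2 \;<\;\infty ,
\]
so the entire proof reduces to showing this one double integral is finite.

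The first step is to expand the marginal likelihood by reintroducing the nonprivate statistics $(\bar Y, S^2)$ as latent variables. The classical sampling theory for $Y_i\overset{iid}{\sim}\norm(\mu,\sigma^2)$ gives $\bar Y\mid\mu,\sigma^2\sim\norm(\mu,\sigma^2/n)$ independently of $S^2$, with $(n-1)S^2/\sigma^2\sim\chi^2_{n-1}$ (equivalently $S^2\mid\sigma^2\sim\textsf{Gamma}\!\big(\tfrac{n-1}{2},\tfrac{n-1}{2\sigma^2}\big)$ in shape--rate form), while the Laplace noise is applied to $(\bar Y,S^2)$ alone, so $\bar Y^*$ and $S^{2*}$ are conditionally independent given $(\bar Y,S^2)$ with no further dependence on $(\mu,\sigma^2)$. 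Hence
\[
p(\bar Y^*,S^{2*}\mid\mu,\sigma^2)=\int_0^\infty \!\!\int_{-\infty}^\infty \textsf{Lap}\!\big(\bar Y^*;\bar Y,\tfrac{1}{\varepsilon_1 n}\big)\,\textsf{Lap}\!\big(S^{2*};S^2,\tfrac{1}{\varepsilon_2 n}\big)\,p(\bar Y\mid\mu,\sigma^2)\,p(S^2\mid\sigma^2)\,d\bar Y\, dS^2 .
\]
Every factor is nonnegative, so by Tonelli I may evaluate $I$ as a four-fold integral over $(\mu,\sigma^2,\bar Y,S^2)$ in whatever order is convenient.

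Next I would dispatch the easy directions. Integrating the Gaussian density $p(\bar Y\mid\mu,\sigma^2)$ over $\mu\in\R$ yields $1$ for every $\bar Y,\sigma^2$, and then integrating $\textsf{Lap}(\bar Y^*;\bar Y,\tfrac{1}{\varepsilon_1 n})$ over $\bar Y\in\R$ yields $1$, leaving $I=\int_0^\infty\!\!\int_0^\infty \textsf{Lap}(S^{2*};S^2,\tfrac{1}{\varepsilon_2 n})\,p(S^2\mid\sigma^2)\,dS^2\,d\sigma^2$. I then integrate over $\sigma^2$ first: substituting $u=1/\sigma^2$ in the $\textsf{Gamma}$ density turns $\int_0^\infty p(S^2\mid\sigma^2)\,d\sigma^2$ into $\text{(const)}\cdot(S^2)^{(n-1)/2-1}\int_0^\infty u^{(n-1)/2-2}e^{-(n-1)S^2 u/2}\,du$; the powers of $S^2$ cancel, and this $u$-integral converges — to a finite constant $C_n=(n-1)/(n-3)$ independent of $S^2$ — precisely when the shape exponent $(n-1)/2-2$ exceeds $-1$, i.e.\ when $n>3$. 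Since $\int_0^\infty\textsf{Lap}(S^{2*};S^2,\tfrac{1}{\varepsilon_2 n})\,dS^2\le 1$, I conclude $I\le C_n<\infty$, completing the proof.

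The crux — and the sole place the hypothesis $n>3$ is used — is that $\sigma^2$-integral: marginalizing a scaled $\chi^2_{n-1}$ density for $S^2$ against a flat prior on $\sigma^2$ produces an improper density in $S^2$ that is integrable exactly for $n>3$, and that is the tail behaviour governing properness. Everything else (the latent-variable expansion, the Tonelli justification, the two unit integrals, the boundedness of the Laplace density) is routine bookkeeping. It is worth contrasting this with the Jeffreys-type prior of Theorem~\ref{thm:IJ_proper}: pushing an extra $(\sigma^2)^{-1}$ factor through the same computation gives $\int_0^\infty(\sigma^2)^{-1}p(S^2\mid\sigma^2)\,d\sigma^2\propto(S^2)^{-1}$, a non-integrable singularity at $S^2=0$ once integrated against the bounded, positive Laplace density, which is exactly why that prior yields an improper posterior.
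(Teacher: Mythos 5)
Your proposal is correct and follows essentially the same route as the paper's proof: marginalize out $\mu$ and $\bar{Y}$ (each integral equals $1$), swap the order of integration, recognize the $\sigma^2$-integral of the scaled $\chi^2_{n-1}$ density as an inverse-gamma kernel yielding the constant $(n-1)/(n-3)$ precisely when $n>3$, and bound the remaining Laplace integral. The only cosmetic difference is that the paper evaluates that final Laplace integral exactly via its CDF whereas you bound it by $1$, which changes nothing.
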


Similarly, we can determine posterior propriety for the  constrained analysis.  Since a constrained uniform distribution has bounded support, it is a proper prior distribution and thus produces a proper posterior distribution. Likewise, it can be verified empirically that the posterior produced by $p(\mu, \sigma^2) \propto (\sigma^2)^{-1}$ also is improper in the constrained analysis. 

We use simulation experiments to assess the properties of the posterior distributions based on priors that use or disregard the constraints on the data values. For each of 10,000 simulated datasets of a given $n$, we create a 95\% HPD interval based on the Gibbs sampler in Appendix \ref{app:full_cond}. Figure \ref{fig:coverage} displays the results; see Appendix \ref{app:compute} for run-time details. Without enforcing constraints, the 95\% credible interval has approximately the nominal 95\% coverage rate  regardless of $n$, and the average interval length decays proportionally to $1/n$.
\footnote{A regression of log average interval length on log $n$ has a slope of $-1$ (with $R^2 > 0.999$), indicating that average interval length decays proportionally to $1/n$. A similar result holds for the RMSE; see Appendix \ref{app:add_sims}.}  When enforcing the constraints, the results are similar to those for the unconstrained analysis when $n$ is large. In these cases, credible intervals for $\mu$ are far from the boundary, so that the constraints have little effect. When $n$ is small, the 95\% credible interval has lower than 95\% coverage rate when the true $\mu$ is close to the boundary and near 100\% coverage when $\mu$ is in the middle of the range.  In fact, it can be shown that the prior $p(\mu, \sigma^2) \propto 1$ over the region where $\mu \in [0,1]$ and $\sigma^2 \in (0,\mu(1-\mu)]$ induces the marginal distribution $\mu \sim \textsf{Beta}(2,2)$. This Beta distribution places more probability density in the center of the distribution than near $0$ or $1$, resulting in the observed over and under coverage, depending on the true value of $\mu$. This suggests that $p(\mu, \sigma^2) \propto 1$  is most appropriate when an analyst believes $\mu$ more likely is near the center of the distribution than the tails. That is, while the prior is ``non-informative'' in the unconstrained analysis, it is not so in the constrained analysis.

Enforcing the constraint, however, does have advantages. For all points in Figure \ref{fig:coverage}, the average interval length for the constrained analysis is less than $0.75$. Meanwhile, the average interval length without constraints is greater than $1$ for $n \leq 50$, indicating that the credible intervals must include values that are not within $[0,1]$. Such intervals often include all of $[0,1]$; these are practically useless. Indeed, the calibrated coverage rate for the unconstrained analysis is of dubious merit, as these credible intervals are inflated by permitting probability mass on infeasible regions.  
Additionally, when we estimate $\mu$ with its posterior mode, the RMSE is 
uniformly lower in the constrained analysis than in the unconstrained; see Figure \ref{fig:coverage_RMSE} in Appendix \ref{app:add_sims}. In practical terms, the incorporation of constraints yields estimates that are more sensible scientifically, have less error, and have tighter interval estimates. 

The results described above use a total $\varepsilon$ of $0.2$. When $\varepsilon$ is larger, the constrained and unconstrained analyses are more similar; see Figure \ref{fig:coverage_2} in Appendix \ref{app:add_sims} for additional simulations with total $\varepsilon=2$. When $\varepsilon$ is larger, the unconstrained analysis no longer offers approximately exact coverage: the 95\% posterior credible intervals cover more than 95\% of the time.

\subsection{Default Priors for Example \ref{ex:lead}} \label{sec:prediction}

\begin{figure}[t]
    \centering
    \includegraphics{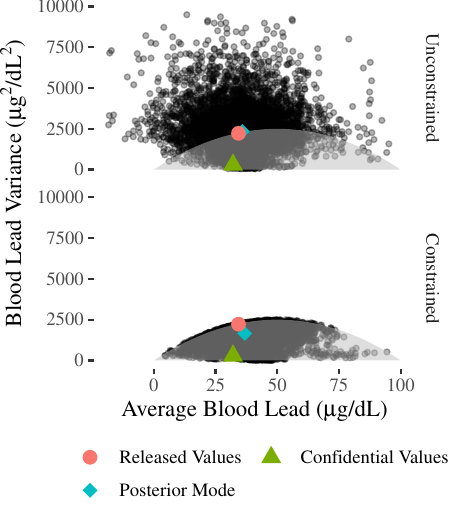}
    \caption{Plot of the joint posterior distribution for $(\mu,\sigma^2)$ in Example \ref{ex:lead} under a uniform prior. The point $(\bar{Y}^* = 34.30, S^2 = 47.17^2)$ is represented by the red circle, the unreleased point $(\bar{Y} = 32.08, S^2 = 16.96^2)$ is represented by the green triangle, and the analyst's posterior mode is represented by the blue diamond. The upper and lower panels provide the posterior when constraints are and are not accounted for, respectively. The shaded area represents the feasible region for $(\mu, \sigma^2)$ from Theorem \ref{thm:par_bounds}. This plot is based on 5,000 Gibbs iterations.}
    \label{fig:bounding_flat}
\end{figure}

Returning to the setting of Example \ref{ex:lead}, we examine the effect of replacing the informative prior used in Section \ref{sec:bounds_lead} with the default prior $p(\mu, \sigma^2) \propto 1$. Figure \ref{fig:bounding_flat} presents a plot analogous to Figure \ref{fig:bounding} with the new prior. In the unconstrained analysis, the posterior mode is approximately equal to the released statistics, but more than 50\% of posterior draws are outside of the feasible region. In the constrained  analysis, as discussed above, the posterior mode for $\mu$ is shifted towards the center of the distribution. A similar effect is observed for $\sigma^2$; the prior has more mass closer to zero and so the posterior mode is smaller than $S^{2*}$.

Analysts also might be interested in using the Bayesian model for prediction. Figure \ref{fig:predictive} plots the posterior predictive distribution for the uniform prior, which is computed via $Y_{\textup{new}(t)} \sim \norm(\mu_{(t)}, \sigma^{2}_{(t)})$ for each Gibbs iteration $t$, under the unconstrained and constrained analyses.\footnote{For the constrained analysis, the distribution is replaced by a truncated Gaussian.} The constrained posterior predictive distribution only includes values in the feasible region of $0$ to $100 \, \mu \textup{g}/\textup{dL}$, with the distribution centered around the posterior mode. The unconstrained posterior predictive distribution, meanwhile, generates 24\% of predictive draws as negative values and 10\% as values greater than $100 \, \mu \textup{g}/\textup{dL}$. This leads to the constrained analysis having substantially lower predictive variance than the unconstrained: the standard deviation is $53 \, \mu \textup{g}/\textup{dL}$ without constraints and $25 \mu \textup{g}/\textup{dL}$ with constraints.

To obtain reasonable predictions with the draws from the unconstrained analysis, an analyst must resort to ad hoc methods. They might, for example, re-code all negative predictions to $0$ and all predictions greater than $100$ to $100$. This would lead to 24\% of predictions being exactly zero, which is not reasonable scientifically, and results in an inflated predictive standard deviation of $35 \, \mu \textup{g}/\textup{dL}$. Alternatively, the analyst could sample predictions from a truncated Gaussian distribution. The variance of this  distribution is still inflated by the samples of overly large values of $\sigma^{2}_{(t)}$, leading to a predictive standard deviation of $27 \, \mu \textup{g}/\textup{dL}$ under this ad hoc approach, which is larger than that of the theoretically principled constrained analysis.

\begin{figure}[t]
    \centering
    \includegraphics {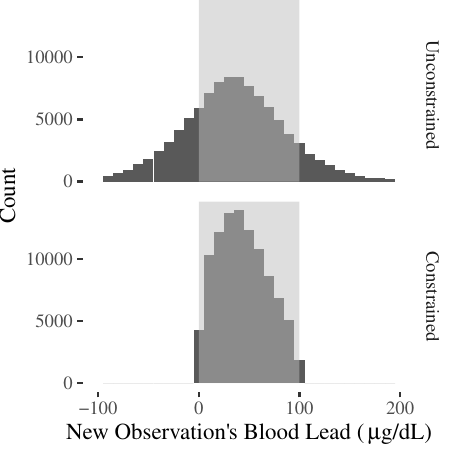}
    \caption{Posterior predictive distribution of a new observation for the posterior draws from Example \ref{ex:lead} and prior $p(\mu, \sigma^2) \propto 1$.
    Upper and lower panels provide the posterior predictive distributions without and without accounting for constraints, respectively. The shaded area represents the feasible region for a new observation. Plots based on 100,000 Gibbs sampler iterations.}
    \label{fig:predictive}
\end{figure}

\section{REGRESSION APPLICATION} \label{sec:regression}

The strategies we discuss can be generalized to any Bayesian analysis under DP. To demonstrate, we consider the DP Bayesian linear regression method of \cite{bernstein2019differentially}. In particular, we adapt an example from \cite{bernstein2019differentially} using data from \cite{data_x20} to illustrate how the strategies for enforcing constraints in Section \ref{sec:bounds} can be used off-the-shelf to supplement an existing method.

\begin{figure*}
    \centering
    \includegraphics{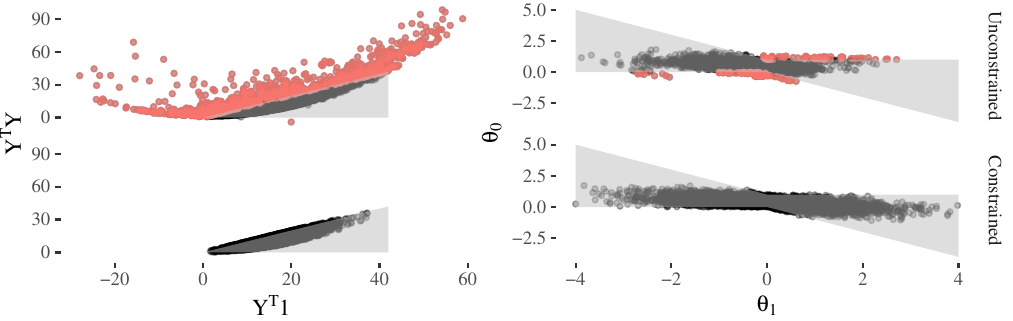}
    \caption{Plot of posterior draws from the linear regression method of \cite{bernstein2019differentially}. The left panels represent draws of the imputed sufficient statistics $\Y^\top \one$ and $\Y^\top \Y$, while the right panels represent draws of the parameters $\theta_1$ and $\theta_0$. The upper and lower panels provide the posterior when constraints are and are not accounted for, respectively. The shaded areas represent the feasible regions; points outside the feasible regions are colored in red. This plot is based on 10,000 Gibbs iterations.}
    \label{fig:BeSh_comparison}
\end{figure*}

The data include the response variable cirrhosis rate, $\{y_i\}$, explanatory variable drinking rate, $\{x_i\}$, and sample size $n = 46$. The authors rescale both variables to lie in $[0,1]$ and assume the range
is public information. Letting $\Y$
be the vector of responses and $\X$
be a design matrix with a column of ones, the authors' model is a simple linear regression of the form $\Y \sim \norm_n(\X\t, \sigma^2 \I_n)$ with the 
conjugate prior
\begin{align}
    \t \mid \sigma^2 \sim \norm_2\left(
    \begin{bmatrix}
        1 \\ 0
    \end{bmatrix},
    \frac{\sigma^2}{4} \mathbf{I}_2
    \right), \quad
    \sigma^2 \sim \textsf{IG}\left(\frac{40}{2}, \frac{1}{2} \right),
\end{align}
where $\mathbf{I}_p$ is the $p \times p$ identity matrix. Their proposed method involves applying the Laplace Mechanism to each element of the sufficient statistics $\X^\top \X$, $\X^\top \Y$ and $\Y^\top \Y$. To ensure a valid solution exists, they enforce that the matrix $\bZ^\top \bZ$, where $\bZ = [\X \hspace{0.08in} \Y]$, is positive semi-definite (PSD) by projecting samples to the nearest PSD matrix. They apply their \texttt{Gibbs-SS-Noisy} method, which also adds Laplace noise to the matrix of fourth sample moments of $\X$. 

Since it is known that $x_i,y_i \in [0,1]$ for all $i$, we can exploit the structure of the linear regression to constrain parameter updates in the Gibbs sampler. In particular, letting $\X = [\one_n \hspace{0.08in} \x_1]$, we show in Appendix \ref{app:BeSh_bounds} that $0 \leq \x_1^\top \x_1 \leq \x_1^\top \one \leq n$, $0 \leq \Y^\top \Y \leq \Y^\top \one \leq n$, $\x_1^\top \Y \leq \min\{\x_1^\top \one, \Y^\top \one \}$, and $0 \leq \sigma^2 \leq 1/4$.  Letting the regression coefficients be $\t = [\theta_0 \hspace{0.08in} \theta_1]^\top$, we know that $(\theta_1, \theta_0)$ must lie in the region depicted in the right panels of Figure \ref{fig:BeSh_comparison}. We create a constrained Gibbs sampler by drawing each iterate from the unconstrained full conditional distribution and rejecting and resampling whenever all constraints are not satisfied. We also reject and resample if $\bZ^\top \bZ$ is not PSD.

Figure \ref{fig:BeSh_comparison} demonstrates some of the effects of enforcing these constraints for a particular set\footnote{The addition of Laplace noise to the fourth sample moments of $\X$ adds additional uncertainty which is not accounted for in the procedure in \cite{bernstein2019differentially}. Because of this, our rejection sampling approach struggles when the released noisy moments are far from their nonprivate counterparts, leading to the Gibbs sampling procedure being unable to find a draw for the next Gibbs iterate satisfying the constraints and getting ``stuck.'' For demonstrative purposes, we choose a seed where this issue is not encountered.} of released values with $\varepsilon = 0.1$ for each of the 11 queries. The shaded regions in the left panels represent the constraint $0 \leq \Y^\top \Y \leq \Y^\top \one \leq n$.
As indicated by the red points, nearly 20\% of posterior draws without accounting for constraints do not satisfy this inequality. The shaded regions in the right panels represent the feasible region for $\t$.
Less than 5\% of the posterior draws did not satisfy the constraint in $\t$. But notably, when all relevant constraints are enforced, the posterior distribution is more spread out throughout the feasible region. That is, not enforcing the constraints leads artificially to a 
model with too little uncertainty about its parameter estimates and thus underestimated uncertainty around downstream predictions. 

\section{DISCUSSION}

A major advantage of the Bayesian paradigm generally is that it facilitates convenient propagation of uncertainty. Indeed, for these reasons, Bayesian inference has been a focus of recent work in the DP literature, e.g.,  \cite{bernstein2018differentially}, \cite{gong2022exact}, and \cite{ju2022data}.
However, inference from these methods may be 
sensitive to the analyst's prior selection. As far as we are aware, our work is the first to examine the effect of the prior distribution on inference in settings with substantial noise due to DP and constraints on data values.

Our case studies suggest that incorporating constraints in Bayesian DP inference, thereby respecting the actual support of the parameters, can result in potentially more accurate representations of posterior distributions, as well as more accurate point estimates and predictions. 
The use of weakly informative or default prior distributions may be advantageous, although analysts must ensure posterior distributions are not improper, as demonstrated in Section \ref{sec:default}. Finally, these results suggest there is scope for further development of default prior distributions to obtain theoretically valid and accurate posterior distributions with desirable frequentist properties.

In this work, we used the simple setting of 
bounded data modeled as following a  
Gaussian distribution as a running case study. Settings where data can be reasonably approximated by a Gaussian distribution are extremely common in applied statistics, but relatively under-explored in the DP literature. This setting also forms the basis for methods applied to a wide array of common tasks in machine learning, such as A/B testing, hierarchical methods, and regression.
Additionally, settings where confidential data values  are assumed to have an inherent or effective bound are widespread in the DP literature. As far as we are aware, this work is the first to examine the effect of incorporating the bounds as a constraint on a downstream analyst's model. Extension of the ideas in this work to settings beyond univariate Gaussian and linear regression models is a promising avenue for future work.

\subsubsection*{Acknowledgements}
This research was supported by NSF grant SES-2217456.

\bibliography{bib}

\begin{thebibliography}{40}
\providecommand{\natexlab}[1]{#1}
\providecommand{\url}[1]{\texttt{#1}}
\expandafter\ifx\csname urlstyle\endcsname\relax
  \providecommand{\doi}[1]{doi: #1}\else
  \providecommand{\doi}{doi: \begingroup \urlstyle{rm}\Url}\fi

\bibitem[Bernstein \& Sheldon(2018)Bernstein and Sheldon]{bernstein2018differentially}
Garrett Bernstein and Daniel~R Sheldon.
\newblock Differentially private {B}ayesian inference for exponential families.
\newblock \emph{Advances in Neural Information Processing Systems}, 31, 2018.

\bibitem[Bernstein \& Sheldon(2019)Bernstein and Sheldon]{bernstein2019differentially}
Garrett Bernstein and Daniel~R Sheldon.
\newblock Differentially private {B}ayesian linear regression.
\newblock \emph{Advances in Neural Information Processing Systems}, 32, 2019.

\bibitem[Brownlee(1965)]{data_x20}
Kenneth~Alexander Brownlee.
\newblock Statistical theory and methodology in science and engineering.
\newblock \url{https://people.sc.fsu.edu/~jburkardt/datasets/regression/x20.txt}, 1965.
\newblock Accessed on 4 May 2024.

\bibitem[Couch et~al.(2019)Couch, Kazan, Shi, Bray, and Groce]{couch2019differentially}
Simon Couch, Zeki Kazan, Kaiyan Shi, Andrew Bray, and Adam Groce.
\newblock Differentially private nonparametric hypothesis testing.
\newblock In \emph{Proceedings of the 2019 ACM SIGSAC Conference on Computer and Communications Security}, pp.\  737--751, 2019.

\bibitem[Diez et~al.(2012)Diez, Barr, and Cetinkaya-Rundel]{diez2012openintro}
David~M Diez, Christopher~D Barr, and Mine Cetinkaya-Rundel.
\newblock \emph{OpenIntro statistics}, volume~4.
\newblock OpenIntro Boston, MA, USA:, 2012.

\bibitem[Dimitrakakis et~al.(2017)Dimitrakakis, Nelson, Zhang, Mitrokotsa, and Rubinstein]{dimitrakakis2017differential}
Christos Dimitrakakis, Blaine Nelson, Zuhe Zhang, Aikaterini Mitrokotsa, and Benjamin~IP Rubinstein.
\newblock Differential privacy for {B}ayesian inference through posterior sampling.
\newblock \emph{Journal of Machine Learning Research}, 18\penalty0 (11):\penalty0 1--39, 2017.

\bibitem[D'Orazio et~al.(2015)D'Orazio, Honaker, and King]{d2015differential}
Vito D'Orazio, James Honaker, and Gary King.
\newblock Differential privacy for social science inference.
\newblock \emph{Sloan Foundation Economics Research Paper}, \penalty0 (2676160), 2015.

\bibitem[Du et~al.(2020)Du, Foot, Moniot, Bray, and Groce]{du2020differentially}
Wenxin Du, Canyon Foot, Monica Moniot, Andrew Bray, and Adam Groce.
\newblock Differentially private confidence intervals.
\newblock \emph{arXiv preprint arXiv:2001.02285}, 2020.

\bibitem[Dwork et~al.(2006)Dwork, McSherry, Nissim, and Smith]{dwork2006calibrating}
Cynthia Dwork, Frank McSherry, Kobbi Nissim, and Adam Smith.
\newblock Calibrating noise to sensitivity in private data analysis.
\newblock In \emph{Theory of Cryptography: Third Theory of Cryptography Conference, TCC 2006, New York, NY, USA, March 4-7, 2006. Proceedings 3}, pp.\  265--284. Springer, 2006.

\bibitem[Ferrando et~al.(2022)Ferrando, Wang, and Sheldon]{ferrando2022parametric}
Cecilia Ferrando, Shufan Wang, and Daniel Sheldon.
\newblock Parametric bootstrap for differentially private confidence intervals.
\newblock In \emph{International Conference on Artificial Intelligence and Statistics}, pp.\  1598--1618. PMLR, 2022.

\bibitem[Gelman(2006)]{gelman2006prior}
Andrew Gelman.
\newblock Prior distributions for variance parameters in hierarchical models (comment on article by {B}rowne and {D}raper).
\newblock \emph{Bayesian Analysis}, 1\penalty0 (3):\penalty0 515--534, 2006.

\bibitem[Geumlek et~al.(2017)Geumlek, Song, and Chaudhuri]{geumlek2017renyi}
Joseph Geumlek, Shuang Song, and Kamalika Chaudhuri.
\newblock R\'enyi differential privacy mechanisms for posterior sampling.
\newblock \emph{Advances in Neural Information Processing Systems}, 30, 2017.

\bibitem[Gong(2022)]{gong2022exact}
Ruobin Gong.
\newblock Exact inference with approximate computation for differentially private data via perturbations.
\newblock \emph{Journal of Privacy and Confidentiality}, 12\penalty0 (2), 2022.

\bibitem[Hoff(2009)]{hoff2009first}
Peter~D Hoff.
\newblock \emph{A First Course in Bayesian Statistical Methods}, volume 580.
\newblock Springer, 2009.

\bibitem[Jewson et~al.(2024)Jewson, Ghalebikesabi, and Holmes]{jewson2024differentially}
Jack~E Jewson, Sahra Ghalebikesabi, and Chris~C Holmes.
\newblock Differentially private statistical inference through beta-divergence one posterior sampling.
\newblock \emph{Advances in Neural Information Processing Systems}, 36, 2024.

\bibitem[Ju et~al.(2022)Ju, Awan, Gong, and Rao]{ju2022data}
Nianqiao Ju, Jordan Awan, Ruobin Gong, and Vinayak Rao.
\newblock Data augmentation mcmc for {B}ayesian inference from privatized data.
\newblock \emph{Advances in neural information processing systems}, 35:\penalty0 12732--12743, 2022.

\bibitem[Karwa \& Vadhan(2017)Karwa and Vadhan]{karwa2017finite}
Vishesh Karwa and Salil Vadhan.
\newblock Finite sample differentially private confidence intervals.
\newblock \emph{arXiv preprint arXiv:1711.03908}, 2017.

\bibitem[Kasiviswanathan \& Smith(2014)Kasiviswanathan and Smith]{kasiviswanathan2014semantics}
Shiva~P Kasiviswanathan and Adam Smith.
\newblock On the'semantics' of differential privacy: A {B}ayesian formulation.
\newblock \emph{Journal of Privacy and Confidentiality}, 6\penalty0 (1), 2014.

\bibitem[Kazan \& Reiter(2023)Kazan and Reiter]{kazan2023prior}
Zeki Kazan and Jerome~P Reiter.
\newblock Prior-itizing privacy: A {B}ayesian approach to setting the privacy budget in differential privacy.
\newblock \emph{arXiv preprint arXiv:2306.13214}, 2023.

\bibitem[Kazan \& Reiter(2024)Kazan and Reiter]{kazan2022assessing}
Zeki Kazan and Jerome~P Reiter.
\newblock Assessing statistical disclosure risk for differentially private, hierarchical count data, with application to the 2020 us decennial census.
\newblock \emph{Statistica Sinica}, 34:\penalty0 to appear, 2024.

\bibitem[Kazan et~al.(2023)Kazan, Shi, Groce, and Bray]{kazan2023test}
Zeki Kazan, Kaiyan Shi, Adam Groce, and Andrew~P Bray.
\newblock The test of tests: A framework for differentially private hypothesis testing.
\newblock In \emph{International Conference on Machine Learning}, pp.\  16131--16151. PMLR, 2023.

\bibitem[Kifer \& Machanavajjhala(2014)Kifer and Machanavajjhala]{kifer2014pufferfish}
Daniel Kifer and Ashwin Machanavajjhala.
\newblock Pufferfish: A framework for mathematical privacy definitions.
\newblock \emph{ACM Transactions on Database Systems (TODS)}, 39\penalty0 (1):\penalty0 1--36, 2014.

\bibitem[Kifer et~al.(2022)Kifer, Abowd, Ashmead, Cumings-Menon, Leclerc, Machanavajjhala, Sexton, and Zhuravlev]{kifer2022bayesian}
Daniel Kifer, John~M Abowd, Robert Ashmead, Ryan Cumings-Menon, Philip Leclerc, Ashwin Machanavajjhala, William Sexton, and Pavel Zhuravlev.
\newblock Bayesian and frequentist semantics for common variations of differential privacy: Applications to the 2020 census.
\newblock \emph{arXiv preprint arXiv:2209.03310}, 2022.

\bibitem[Kruschke(2015)]{kruschke2014doing}
John~K Kruschke.
\newblock Bayesian approaches to testing a point (“null”) hypothesis.
\newblock In John~K Kruschke (ed.), \emph{Doing Bayesian Data Analysis (Second Edition)}, pp.\  335--358. Academic Press, Boston, second edition edition, 2015.
\newblock ISBN 978-0-12-405888-0.
\newblock \doi{https://doi.org/10.1016/B978-0-12-405888-0.00012-X}.
\newblock URL \url{https://www.sciencedirect.com/science/article/pii/B978012405888000012X}.

\bibitem[Li \& Reiter(2022)Li and Reiter]{li2022bayesian}
Linlin Li and Jerome~P Reiter.
\newblock Bayesian inference for estimating subset proportions using differentially private counts.
\newblock \emph{Journal of Survey Statistics and Methodology}, 10\penalty0 (3):\penalty0 785--803, 2022.

\bibitem[McClure \& Reiter(2012)McClure and Reiter]{mcclure2012differential}
David McClure and Jerome~P Reiter.
\newblock Differential privacy and statistical disclosure risk measures: An investigation with binary synthetic data.
\newblock \emph{Trans. Data Priv.}, 5\penalty0 (3):\penalty0 535--552, 2012.

\bibitem[Meredith \& Kruschke(2020)Meredith and Kruschke]{hdi}
Mike Meredith and John Kruschke.
\newblock \emph{HDInterval: Highest (Posterior) Density Intervals}, 2020.
\newblock URL \url{https://CRAN.R-project.org/package=HDInterval}.
\newblock R package version 0.2.2.

\bibitem[Minami et~al.(2016)Minami, Arai, Sato, and Nakagawa]{minami2016differential}
Kentaro Minami, Hitomi Arai, Issei Sato, and Hiroshi Nakagawa.
\newblock Differential privacy without sensitivity.
\newblock \emph{Advances in Neural Information Processing Systems}, 29, 2016.

\bibitem[Mortada et~al.(2001)Mortada, Sobh, El-Defrawy, and Farahat]{mortada2001study}
Wael Mortada, Mohamed Sobh, M~M El-Defrawy, and S~E Farahat.
\newblock Study of lead exposure from automobile exhaust as a risk for nephrotoxicity among traffic policemen.
\newblock \emph{American Journal of Nephrology}, 21\penalty0 (4):\penalty0 274--279, 2001.

\bibitem[Novomestky \& Nadarajah(2016)Novomestky and Nadarajah]{truncdist}
Frederick Novomestky and Saralees Nadarajah.
\newblock \emph{truncdist: Truncated Random Variables}, 2016.
\newblock URL \url{https://CRAN.R-project.org/package=truncdist}.
\newblock R package version 1.0-2.

\bibitem[Park \& Casella(2008)Park and Casella]{park2008bayesian}
Trevor Park and George Casella.
\newblock The {B}ayesian lasso.
\newblock \emph{Journal of the American Statistical Association}, 103\penalty0 (482):\penalty0 681--686, 2008.

\bibitem[Pe{\~n}a \& Barrientos(2024)Pe{\~n}a and Barrientos]{pena2022differentially}
V{\'\i}ctor Pe{\~n}a and Andr{\'e}s~F Barrientos.
\newblock Differentially private hypothesis testing with the subsampled and aggregated randomized response mechanism.
\newblock \emph{Statistica Sinica}, 34:\penalty0 to appear, 2024.

\bibitem[Schein et~al.(2019)Schein, Wu, Schofield, Zhou, and Wallach]{schein2019locally}
Aaron Schein, Zhiwei~Steven Wu, Alexandra Schofield, Mingyuan Zhou, and Hanna Wallach.
\newblock Locally private {B}ayesian inference for count models.
\newblock In \emph{International Conference on Machine Learning}, pp.\  5638--5648. PMLR, 2019.

\bibitem[Sheffet(2017)]{sheffet2017differentially}
Or~Sheffet.
\newblock Differentially private ordinary least squares.
\newblock In \emph{International Conference on Machine Learning}, pp.\  3105--3114. PMLR, 2017.

\bibitem[Sun \& Berger(2007)Sun and Berger]{sun2007objective}
Dongchu Sun and James~O Berger.
\newblock Objective {B}ayesian analysis for the multivariate normal model.
\newblock \emph{Bayesian Statistics}, 8:\penalty0 525--562, 2007.

\bibitem[Wang(2018)]{wang2018revisiting}
Yu-Xiang Wang.
\newblock Revisiting differentially private linear regression: optimal and adaptive prediction \& estimation in unbounded domain.
\newblock \emph{arXiv preprint arXiv:1803.02596}, 2018.

\bibitem[Wang et~al.(2015)Wang, Fienberg, and Smola]{wang2015privacy}
Yu-Xiang Wang, Stephen Fienberg, and Alex Smola.
\newblock Privacy for free: Posterior sampling and stochastic gradient {M}onte {C}arlo.
\newblock In \emph{International Conference on Machine Learning}, pp.\  2493--2502. PMLR, 2015.

\bibitem[Zhang et~al.(2012)Zhang, Zhang, Xiao, Yang, and Winslett]{zhang2012functional}
Jun Zhang, Zhenjie Zhang, Xiaokui Xiao, Yin Yang, and Marianne Winslett.
\newblock Functional mechanism: regression analysis under differential privacy.
\newblock \emph{Proceedings of the VLDB Endowment}, 5\penalty0 (11):\penalty0 1364--1375, 2012.

\bibitem[Zhang \& Zhang(2023)Zhang and Zhang]{zhang2023dp}
Wanrong Zhang and Ruqi Zhang.
\newblock Dp-fast mh: Private, fast, and accurate {M}etropolis-{H}astings for large-scale {B}ayesian inference.
\newblock In \emph{International Conference on Machine Learning}, pp.\  41847--41860. PMLR, 2023.

\bibitem[Zhang et~al.(2016)Zhang, Rubinstein, and Dimitrakakis]{zhang2016differential}
Zuhe Zhang, Benjamin Rubinstein, and Christos Dimitrakakis.
\newblock On the differential privacy of {B}ayesian inference.
\newblock In \emph{Proceedings of the AAAI Conference on Artificial Intelligence}, volume~30, 2016.

\end{thebibliography}
\bibliographystyle{iclr2025_conference}


\newpage
\onecolumn
\appendix
\section{RE-SCALING THE DATA} \label{app:scale}

Throughout the document, we assume that the data $Y_1, \ldots, Y_n$ are contained in the interval $[a,b]$ for known, public $a$ and $b$. In this section, we prove that,   without loss of generality, we may consider data on the interval $[0,1]$ by re-scaling the data as follows to obtain $\tilde{Y}_i \in [0,1]$,
\begin{align}
    \tilde{Y}_i = \frac{Y_i - a}{b - a}.
\end{align}
We can convert back to the original scale via the relationship $Y_i = (b-a) \tilde{Y}_i + a$.

We now show that the sufficient statistics released via the Laplace Mechanism on the $[0,1]$ scale are re-scaled versions of the sufficient statistics released via the Laplace Mechanism on the $[a,b]$ scale. Thus no information is lost from scaling before the DP release and re-scaling afterwards. 

\begin{reptheorem}{thm:scale}
    Let $Y_1, \ldots, Y_n \in [a,b]$ and let $\tilde{Y}_i = (Y_i - a)/(b - a) \in [0,1]$. Let $\bar{Y}$ and $S^2$ be 
    the sample mean and variance for $\{Y_i\}$ and let $\tilde{\bar{Y}}$ and $\tilde{S}^2$ be the sample mean and variance for $\{ \tilde{Y}_i\}$. Suppose each statistic is released via the Laplace Mechanism under $\varepsilon$-DP and denote the DP statistics $\bar{Y}^*$, $S^{2*}$, $\tilde{\bar{Y}}^*$, and $\tilde{S}^{2*}$. Then, $\bar{Y}^* \overset{d}{=} (b-a)\tilde{\bar{Y}}^* + a$ and $S^{2*} \overset{d}{=} (b-a)^2 \tilde{S}^{2*}$.
\end{reptheorem}

\begin{proof}
    To begin, note that we may relate the sample means on the two scales as follows
    \begin{align}
        \bar{Y} &= \frac{1}{n}\sum_{i=1}^n Y_i = \frac{1}{n}\sum_{i=1}^n \left[(b-a)\tilde{Y}_i + a\right] = (b-a)\frac{1}{n}\sum_{i=1}^n \tilde{Y}_i + a = (b-a) \tilde{\bar{Y}} + a.
    \end{align}
    Similarly, we may relate the sample variances on the two scales as follows.
    \begin{align}
        S^2 &= \frac{1}{n-1}\sum_{i=1}^n (Y_i - \bar{Y})^2 \\
        &= \frac{1}{n-1}\sum_{i=1}^n ([(b-a) \tilde{Y}_i + a] - [(b-a) \tilde{\bar{Y}} + a])^2 \\
        &= (b-a)^2\frac{1}{n-1}\sum_{i=1}^n (\tilde{Y}_i -\tilde{\bar{Y}})^2 \\
        &= (b-a)^2\tilde{S}^2.
    \end{align}

    We now consider the sensitivities of each of the four statistics. By Lemmas 11 and 12 in \cite{du2020differentially}, the sensitivity of $\bar{Y}$ is $(b-a)/n$ and the sensitivity of $S^2$ is $(b-a)^2/n$. Similarly, the sensitivity of $\tilde{\bar{Y}}$ is $1/n$ and the sensitivity of $\tilde{S}^2$ is $1/n$. Thus, when released under $\varepsilon$-DP via the Laplace Mechanism, the released statistics have distribution
    \begin{align}
        \bar{Y}^* &\sim \textsf{Lap}\left(\bar{Y}, \frac{b-a}{\varepsilon n} \right), \qquad S^{2*} \sim \textsf{Lap}\left(S^2, \frac{(b-a)^2}{\varepsilon n} \right) \\
        \tilde{\bar{Y}}^* &\sim \textsf{Lap}\left(\tilde{\bar{Y}}, \frac{1}{\varepsilon n} \right), \qquad \hspace{0.18in} \tilde{S}^{2*} \sim \textsf{Lap}\left(\tilde{S^2}, \frac{1}{\varepsilon n} \right).
    \end{align}

    Recall that if $X$ has a Laplace distribution with location parameter $m$ and scale parameter $b$, i.e., $X\sim \textsf{Lap}(m,b)$, then if $k > 0$ and $c \in \R$, it follows that $kX + c \sim \textsf{Lap}(km + c, kb)$. By this property, $(b-a)\tilde{\bar{Y}}^* + a$ has location parameter $(b-a)\tilde{\bar{Y}} + a = \bar{Y}$ and scale parameter $(b-a)/(\varepsilon n)$. Similarly, $(b-a)^2 \tilde{S}^{2*}$ has location parameter $(b-a)^2 \tilde{S^2} = S^2$ and scale parameter $(b-a)^2/(\varepsilon n)$. The result follows.
\end{proof}

By this result, it is equivalent to release $\bar{Y}^*$ and $S^{2*}$ on the $[a,b]$ scale directly via the Laplace distribution or to re-scale the data to the $[0,1]$ scale, release $\tilde{\bar{Y}}^*$ and $\tilde{S}^{2*}$, and then use the relationships in Theorem \ref{thm:scale} to convert back to the $[a,b]$ scale.

\section{THE TRUNCATED GAMMA MIXTURE DISTRIBUTION} \label{app:TGM}

This section provides additional details related to the truncated gamma mixture distribution (abbreviated TGM). The TGM distribution has four parameters: $\alpha$ is analogous to the gamma distribution's scale parameter, $\beta$ is analogous to the gamma distribution's rate parameter, $\lambda$ determines how far the TGM distribution is from a gamma distribution, and $\tau$ determines the point of truncation. When $\tau > 0$, the distribution is a mixture of $\textsf{Gam}_{(0,\tau]}(\alpha, \beta - \lambda)$ and $\textsf{Gam}_{(\tau, \infty)}(\alpha, \beta + \lambda)$ with the mixture weights in (\ref{eq:weights}). When $\tau \leq 0$, the distribution is equivalent to $\textsf{Gam}(\alpha, \beta + \lambda)$.

Figure \ref{fig:TGM} compares the probability density of a TGM distribution to the probability density of the gamma distribution with the same $\alpha$ and $\beta$. We see that the shapes are similar, but the rates are different on each side of $\tau$. Notably, the distribution is continuous but is clearly not differentiable at $\tau$.

\begin{figure}[t]
    \centering
    \includegraphics{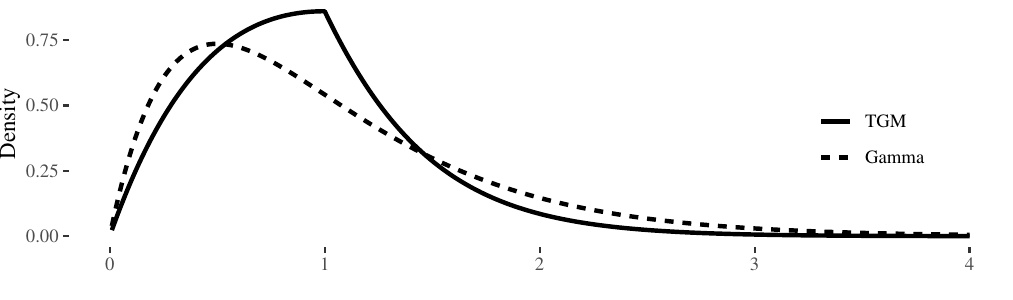}
    \caption{Comparison of the probability density functions of $\textsf{TGM}(\alpha = 2, \beta = 2, \lambda = 1, \tau = 1)$ (the solid line) and $\textsf{Gam}(\alpha = 2, \beta = 2)$ (the dashed line).}
    \label{fig:TGM}
\end{figure}

Algorithm \ref{alg:TGM} provides a straightforward procedure for sampling from the TGM distribution. It assumes one is able to sample from a truncated gamma distribution, which can be done via standard tools such as the \texttt{rtrunc} function from the \texttt{truncdist} package in \texttt{R} \citep[License:~GPL$\geq2$]{truncdist}.

\begin{algorithm}
\label{alg:TGM}
\DontPrintSemicolon
\KwIn{$\alpha$, $\beta$, $\lambda$, $\tau$}
    \uIf{$\tau \leq 0$}{Sample $X \sim \textsf{Gam}(\alpha, \beta + \lambda)$}
    \uElse{Compute $\pi_1$ via (\ref{eq:weights})\;
    Sample $U \sim \textsf{Unif}(0,1)$\;
    \uIf{$U \leq \pi_1$}{
    Sample $X \sim \textsf{Gam}(\alpha, \beta - \lambda)$ truncated to $(0, \tau]$\;
    }
    \uElse{
    Sample $X \sim \textsf{Gam}(\alpha, \beta + \lambda)$ truncated to $(\tau, \infty)$\;
    }}
\KwOut{$X$}
\caption{Sample $X \sim \textsf{TGM}(\alpha, \beta, \lambda, \tau)$}
\end{algorithm}

We now prove that Algorithm \ref{alg:TGM} correctly samples from the TGM distribution.

\begin{thm}\label{thm:TGM_alg}
    $X$ sampled via Algorithm \ref{alg:TGM} has distribution $X \sim \textsf{TGM}(\alpha, \beta, \lambda, \tau)$.
\end{thm}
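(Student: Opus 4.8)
The plan is to verify that the density of the random variable $X$ produced by Algorithm \ref{alg:TGM} matches the piecewise density in the definition of $\textsf{TGM}(\alpha,\beta,\lambda,\tau)$, splitting into the two cases $\tau \le 0$ and $\tau > 0$. The case $\tau \le 0$ is immediate: the algorithm returns a draw from $\textsf{Gam}(\alpha,\beta+\lambda)$, which is exactly what the definition prescribes, so nothing is left to check there (one should note in passing that $\beta + \lambda > 0$ since $\beta > \lambda \ge 0$, so the gamma draw is well-defined).

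For the case $\tau > 0$, I would first recall the density of a gamma distribution truncated to a subinterval. If $W \sim \textsf{Gam}(\alpha, r)$ with $r > 0$, then $W$ has density $\frac{r^\alpha}{\Gamma(\alpha)} w^{\alpha-1} e^{-rw}$ on $(0,\infty)$, and conditioning on $W \le \tau$ gives density $\frac{r^\alpha}{\gamma(\alpha, r\tau)} w^{\alpha-1} e^{-rw}$ on $(0,\tau]$, since $\int_0^\tau w^{\alpha-1} e^{-rw}\,dw = \gamma(\alpha, r\tau)/r^\alpha$ by the substitution $u = rw$ in the definition of the lower incomplete gamma function. Similarly, conditioning on $W > \tau$ gives density $\frac{r^\alpha}{\Gamma(\alpha, r\tau)} w^{\alpha-1} e^{-rw}$ on $(\tau,\infty)$, using $\int_\tau^\infty w^{\alpha-1} e^{-rw}\,dw = \Gamma(\alpha, r\tau)/r^\alpha$. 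Applying these with $r = \beta - \lambda$ on the lower piece and $r = \beta + \lambda$ on the upper piece (note $\beta - \lambda > 0$ is needed for the lower truncated gamma to be defined, which holds by the parameter restriction $\beta > \lambda$) recovers exactly the two bracketed expressions multiplying $\pi_1$ and $\pi_2$ in the definition.

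Finally, by the law of total probability applied to the branch taken (which is $U \le \pi_1$ with probability $\pi_1$ and $U > \pi_1$ with probability $1 - \pi_1 = \pi_2$), the density of $X$ is $\pi_1$ times the lower-piece density on $(0,\tau]$ plus $\pi_2$ times the upper-piece density on $(\tau,\infty)$. This is precisely the stated TGM density, so it only remains to confirm $\pi_1 + \pi_2 = 1$, which is visible directly from the expressions in \eqref{eq:weights} since they share a common denominator equal to the sum of their numerators. I do not anticipate a genuine obstacle here; the only mild subtlety is bookkeeping the incomplete gamma function conventions correctly (lower vs.\ upper, and the scaling by $r^\alpha$ versus $r$), and making sure the parameter constraints $\alpha > 0$ and $\beta > \lambda \ge 0$ are invoked exactly where the truncated gamma densities and their normalizing constants require positivity of the rate parameters.
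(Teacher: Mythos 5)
Your proposal is correct and follows essentially the same route as the paper: handle $\tau \le 0$ trivially, then for $\tau > 0$ express the law of $X$ as a $\pi_1,\pi_2$ mixture of the gamma distributions truncated to $(0,\tau]$ and $(\tau,\infty)$ via the law of total probability (the paper phrases this at the level of CDFs and then differentiates away from $x=\tau$, whereas you write the mixture of densities directly, an immaterial difference). Your explicit verification of the truncated-gamma normalizing constants and of $\pi_1+\pi_2=1$ matches what the paper's definition already builds in.
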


\begin{proof}
    If $\tau \leq 0$, then $X  \sim \textsf{Gam}(\alpha, \beta + \lambda)$, as desired.

    We now consider the case where $\tau > 0$. Let $F_1(x)$ and $f_1(x)$ be the CDF and PDF, respectively, of $\textsf{Gam}(\alpha, \beta - \lambda)$ truncated to $(0, \tau]$. Let $F_2(x)$ and $f_2(x)$ be the CDF and PDF, respectively, of $\textsf{Gam}(\alpha, \beta + \lambda)$ truncated to $(\tau, \infty)$. By the law of total probability, the CDF of $X$ is
    \begin{align}
        P[X \leq x] &= P[X \leq x \mid U \leq \pi_1] \, P[U \leq \pi_1] + P[X \leq x \mid U > \pi_1] \, P[U > \pi_1] \\
        &= \pi_1 \, F_1(x) + \pi_2 \, F_2(x).
    \end{align}
    This function is differentiable everywhere except $x = \tau$. When $x \neq \tau$, we may take the derivative with respect to $x$ to obtain the PDF
    \begin{align}
        p(x) &= \frac{\textrm{d}}{\textrm{d}x}\left[ \pi_1 \, F_1(x) + \pi_2 \, F_2(x)\right] \\
        &= \pi_1 f_1(x) + \pi_2 f_2(x) \\
        &= \pi_1 \frac{(\beta - \lambda)^\alpha}{\gamma(\alpha, (\beta - \lambda)\tau)} x^{\alpha-1} e^{-(\beta - \lambda)x} \one[x \leq \tau] + \pi_2 \frac{(\beta + \lambda)^\alpha}{\Gamma(\alpha, (\beta + \lambda)\tau)} x^{\alpha-1} e^{-(\beta + \lambda)x} \one[x > \tau].
    \end{align}
    Thus, we have shown that $X \sim \textsf{TGM}(\alpha, \beta, \lambda, \tau)$ for all $X \neq \tau$. Since the event $X = \tau$ occurs with probability zero, this completes the proof.
\end{proof}

The TGM distribution primarily arises in the following setting.

\begin{thm} \label{thm:TGM}
    If $\tau \mid \mu \sim \textsf{Lap}(\mu, 1/\lambda)$ and $\mu \sim \textsf{Gam}(\alpha, \beta)$ for $\beta > \lambda$, then $\mu \mid \tau \sim \textsf{TGM}(\alpha, \beta, \lambda, \tau)$.
\end{thm}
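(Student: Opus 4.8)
The plan is to apply Bayes' rule directly and then resolve the absolute value in the Laplace density by a case analysis on the sign of $\tau$. The prior density is $p(\mu) \propto \mu^{\alpha-1} e^{-\beta\mu}\,\one[\mu > 0]$ and the likelihood is $p(\tau \mid \mu) = \tfrac{\lambda}{2} e^{-\lambda|\tau - \mu|}$, so the posterior kernel is $p(\mu \mid \tau) \propto \mu^{\alpha-1} e^{-\beta\mu - \lambda|\tau - \mu|}\,\one[\mu > 0]$. Everything then reduces to simplifying this kernel piecewise and computing its normalizing constant.

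First I would dispatch the case $\tau \leq 0$. Since $\mu > 0$ forces $|\tau - \mu| = \mu - \tau$, the kernel becomes $e^{\lambda\tau}\,\mu^{\alpha-1} e^{-(\beta + \lambda)\mu}$; as $e^{\lambda\tau}$ does not depend on $\mu$, this is proportional to a $\textsf{Gam}(\alpha, \beta + \lambda)$ density, which is well-defined because $\beta > \lambda \geq 0$ gives $\beta + \lambda > 0$. This is exactly the TGM definition for $\tau \leq 0$.

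Next, for $\tau > 0$, I would split the support at $\mu = \tau$: on $(0,\tau]$ the kernel is $e^{-\lambda\tau}\,\mu^{\alpha-1} e^{-(\beta - \lambda)\mu}$, and on $(\tau,\infty)$ it is $e^{\lambda\tau}\,\mu^{\alpha-1} e^{-(\beta + \lambda)\mu}$. Integrating each branch via the change of variables $u = (\beta \mp \lambda)\mu$ and the definitions of the lower and upper incomplete gamma functions gives $\int_0^\tau \mu^{\alpha-1} e^{-(\beta-\lambda)\mu}\,d\mu = \gamma(\alpha, (\beta-\lambda)\tau)/(\beta-\lambda)^\alpha$ (using $\beta - \lambda > 0$) and $\int_\tau^\infty \mu^{\alpha-1} e^{-(\beta+\lambda)\mu}\,d\mu = \Gamma(\alpha, (\beta+\lambda)\tau)/(\beta+\lambda)^\alpha$. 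The total posterior mass is therefore precisely the quantity $C := e^{-\lambda\tau}\gamma(\alpha,(\beta-\lambda)\tau)/(\beta-\lambda)^\alpha + e^{\lambda\tau}\Gamma(\alpha,(\beta+\lambda)\tau)/(\beta+\lambda)^\alpha$ that appears as the common denominator of $\pi_1$ and $\pi_2$ in \eqref{eq:weights}. Dividing the branch kernels by $C$ and then multiplying and dividing the $(0,\tau]$ branch by $\gamma(\alpha,(\beta-\lambda)\tau)/(\beta-\lambda)^\alpha$ rewrites its prefactor as $\pi_1\,(\beta-\lambda)^\alpha/\gamma(\alpha,(\beta-\lambda)\tau)$, matching the first branch of the TGM density; the same manipulation on $(\tau,\infty)$ yields $\pi_2\,(\beta+\lambda)^\alpha/\Gamma(\alpha,(\beta+\lambda)\tau)$, matching the second branch. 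Since these branches together with the $\tau \leq 0$ case exhaust all possibilities, this shows $\mu \mid \tau \sim \textsf{TGM}(\alpha,\beta,\lambda,\tau)$.

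The computation is routine; the only points needing care are the bookkeeping with the incomplete gamma normalizations and checking the integrability conditions — $\alpha > 0$ for integrability at the origin, $\beta - \lambda > 0$ for the lower integral, and $\beta + \lambda > 0$ for the upper tail — all of which follow immediately from the hypotheses $\alpha > 0$ and $\beta > \lambda \geq 0$. I do not anticipate any substantive obstacle.
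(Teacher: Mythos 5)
Your proposal is correct and follows essentially the same route as the paper: apply Bayes' rule, resolve the absolute value in the Laplace density by splitting the support at $\mu = \tau$, and recognize the two branch integrals as incomplete gamma functions whose sum is the common denominator of $\pi_1$ and $\pi_2$ in \eqref{eq:weights}. Your explicit treatment of the $\tau \leq 0$ case is a small completeness improvement over the paper's argument, which implicitly assumes $\tau > 0$ throughout.
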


\begin{proof}
    By Bayes' Theorem, the desired distribution is
    \begin{align}
        p(\mu \mid \tau) = \frac{p(\tau \mid \mu) \, p(\mu)}{\int_0^\infty p(\tau \mid \mu) \, p(\mu) \, \textrm{d}\mu}. \label{eq:bayes}
    \end{align}
    We begin by investigating the numerator of (\ref{eq:bayes}), which has the following form.
    \begin{align}
        p(\tau \mid \mu) \, p(\mu) &= \frac{\lambda}{2}e^{-\lambda | \tau - \mu |} \, \frac{\beta^\alpha}{\Gamma(\alpha)} \mu^{\alpha - 1} e^{-\beta \mu} \\
        &= \frac{\lambda \beta^\alpha}{2\Gamma(\alpha)} \mu^{\alpha - 1} \left[ e^{-\lambda\tau} e^{-(\beta - \lambda)\mu} \one[\mu \leq \tau] +  e^{\lambda\tau} e^{-(\beta + \lambda)\mu} \one[\mu > \tau]\right] \\
        &= \frac{\lambda \beta^\alpha}{2\Gamma(\alpha)}\left[e^{-\lambda\tau} \, \mu^{\alpha - 1}  e^{-(\beta - \lambda)\mu} \one[\mu \leq \tau] + e^{\lambda\tau} \, \mu^{\alpha - 1} e^{-(\beta + \lambda)\mu} \one[\mu > \tau] \right] \\
        &= \frac{\lambda \beta^\alpha}{2\Gamma(\alpha)}\bigg[e^{-\lambda\tau} \frac{\gamma(\alpha, (\beta - \lambda)\tau)}{(\beta - \lambda)^\alpha} \, \frac{(\beta - \lambda)^\alpha}{\gamma(\alpha, (\beta - \lambda)\tau)} \mu^{\alpha - 1}  e^{-(\beta - \lambda)\mu} \one[\mu \leq \tau] \nonumber \\
        &\qquad \qquad \quad  + e^{\lambda\tau} \frac{\Gamma(\alpha, (\beta + \lambda)\tau)}{(\beta + \lambda)^\alpha} \, \frac{(\beta + \lambda)^\alpha}{\Gamma(\alpha, (\beta + \lambda)\tau)} \mu^{\alpha - 1} e^{-(\beta + \lambda)\mu} \one[\mu > \tau] \bigg].
    \end{align}
    The denominator of (\ref{eq:bayes}) is then as follows.
    \begin{align}
        \int_0^\infty p(\tau \mid \mu) \, p(\mu) \, \textrm{d}\mu 
        &= \frac{\lambda \beta^\alpha}{2\Gamma(\alpha)}\bigg[e^{-\lambda\tau} \frac{\gamma(\alpha, (\beta - \lambda)\tau)}{(\beta - \lambda)^\alpha} \int_0^\tau \frac{(\beta - \lambda)^\alpha}{\gamma(\alpha, (\beta - \lambda)\tau)} \mu^{\alpha - 1}  e^{-(\beta - \lambda)\mu} \, \textrm{d}\mu \nonumber \\
        &\qquad \qquad \quad  + e^{\lambda\tau} \frac{\Gamma(\alpha, (\beta + \lambda)\tau)}{(\beta + \lambda)^\alpha} \int_\tau^\infty  \frac{(\beta + \lambda)^\alpha}{\Gamma(\alpha, (\beta + \lambda)\tau)} \mu^{\alpha - 1} e^{-(\beta + \lambda)\mu} \, \textrm{d}\mu \bigg] \\
        &= \frac{\lambda \beta^\alpha}{2\Gamma(\alpha)}\bigg[e^{-\lambda\tau} \frac{\gamma(\alpha, (\beta - \lambda)\tau)}{(\beta - \lambda)^\alpha}  + e^{\lambda\tau} \frac{\Gamma(\alpha, (\beta + \lambda)\tau)}{(\beta + \lambda)^\alpha} \bigg].
    \end{align}
    The second equality follows by recognizing the integrands as the PDFs of truncated gamma distributions. Thus, the distribution of $\mu \mid \tau$ is
    \begin{align}
        p(\mu \mid \tau) &= \frac{e^{-\lambda\tau} \frac{\gamma(\alpha, (\beta - \lambda)\tau)}{(\beta - \lambda)^\alpha}}{e^{-\lambda\tau} \frac{\gamma(\alpha, (\beta - \lambda)\tau)}{(\beta - \lambda)^\alpha}  + e^{\lambda\tau} \frac{\Gamma(\alpha, (\beta + \lambda)\tau)}{(\beta + \lambda)^\alpha}} \frac{(\beta - \lambda)^\alpha}{\gamma(\alpha, (\beta - \lambda)\tau)} \mu^{\alpha - 1}  e^{-(\beta - \lambda)\mu} \one[\mu \leq \tau] \nonumber \\
        &\qquad  + \frac{e^{\lambda\tau} \frac{\Gamma(\alpha, (\beta + \lambda)\tau)}{(\beta + \lambda)^\alpha}}{e^{-\lambda\tau} \frac{\gamma(\alpha, (\beta - \lambda)\tau)}{(\beta - \lambda)^\alpha}  + e^{\lambda\tau} \frac{\Gamma(\alpha, (\beta + \lambda)\tau)}{(\beta + \lambda)^\alpha}} \frac{(\beta + \lambda)^\alpha}{\Gamma(\alpha, (\beta + \lambda)\tau)} \mu^{\alpha - 1} e^{-(\beta + \lambda)\mu} \one[\mu > \tau],
    \end{align}
    which we recognize as the PDF of a $\textsf{TGM}(\alpha, \beta, \lambda, \tau)$.
\end{proof}

\section{DERIVATION OF FULL CONDITIONALS FOR UNIVARIATE GAUSSIAN GIBBS SAMPLER} \label{app:Gibbs}

In this section, we derive the full conditionals for the Gibbs sampler in Section \ref{sec:gibbs}. Recall that if $Y_1, \ldots, Y_n \overset{iid}{\sim} \norm(\mu, \sigma^2)$, then the distributions of the sufficient statistics are
\begin{align}
    \bar{Y} \mid \mu, \sigma^2 \sim \norm\left(\mu, \frac{\sigma^2}{n} \right), \qquad S^2 \mid \sigma^2 \sim \textsf{Gam}\left(\frac{n-1}{2}, \frac{n-1}{2\sigma^2} \right),
\end{align}
where $(\bar{Y} \perp\!\!\!\perp S^2) \mid \mu, \sigma^2$. Assuming that the $Y_i \in [0,1]$, then the statistics released via the Laplace Mechanism have distributions
\begin{align}
    \bar{Y}^* \mid \bar{Y} \sim \textsf{Lap}\left(\bar{Y}, \frac{1}{\varepsilon_1 n}\right), \qquad S^{2*} \mid S^2 \sim \textsf{Lap}\left(S^2, \frac{1}{\varepsilon_2 n}\right).
\end{align}
The release mechanism is such that $(\bar{Y}^* \perp\!\!\!\perp S^{2*}) \mid \bar{Y}, S^2$. It is convenient to replace the above distribution of $\bar{Y}^*$ with the following equivalent formulation, as proposed by \cite{park2008bayesian} and used in \cite{bernstein2018differentially, bernstein2019differentially}.
\begin{align}
    \bar{Y}^* \mid \bar{Y}, \omega^2 \sim \norm\left(\bar{Y}, \omega^2 \right), \qquad \omega^2 \sim \textsf{Exp}\left(\frac{\varepsilon_1^2 n^2}{2} \right).
\end{align}
Finally, we take the conjugate prior from the public setting
\begin{align}
    \sigma^2 \sim \textsf{IG}\left(\frac{\nu_0}{2}, \frac{\nu_0\sigma_0^2}{2} \right), \qquad \mu \mid \sigma^2 \sim \norm\left(\mu_0, \frac{\sigma^2}{\kappa_0} \right).
\end{align}

\begin{figure}
    \centering
    \begin{tikzcd}
        &\mu \arrow[d] & \sigma^2 \arrow[d] \arrow[dl] & \textit{Parameters} \\
        \omega^2 \arrow[dr] & \bar{Y} \arrow[d] & S^2 \arrow[d] & \textit{Augmented Variables} \\
        & \bar{Y}^* & S^{2*} & \textit{Released DP Statistics}
    \end{tikzcd}
    \caption{A graphical model representing the structure of the univariate Gaussian setting.}
    \label{fig:graph}
\end{figure}

Figure \ref{fig:graph} presents the graphical model corresponding to this likelihood, which we can factor as follows.
\begin{align}
    p(\bar{Y}^*, S^{2*}, \bar{Y}, S^2, \omega^2 \mid \mu, \sigma^2) &= p(\bar{Y}^* \mid \bar{Y}, \omega^2) \, p(S^{2*} \mid S^2) \, p(\bar{Y} \mid \mu, \sigma^2) \, p(S^2 \mid \sigma^2) \, p(\omega^2) \\
    &= \norm(\bar{Y}^*; \bar{Y}, \omega^2) \, \textsf{Lap}\left(S^{2*}; S^2, \frac{1}{\varepsilon_2 n}\right) \, \norm\left(\bar{Y}; \mu, \frac{\sigma^2}{n}\right) \nonumber \\
    &\qquad \qquad \textsf{Gam}\left(S^2; \frac{n-1}{2}, \frac{n-1}{2\sigma^2}\right) \, \textsf{Exp}\left(\omega^2; \frac{\varepsilon_1^2 n^2}{2} \right)
\end{align}

We now examine each full conditional. The full conditional for $\mu$ is
\begin{align}
    p(\mu \mid \bar{Y}^*, S^{2*}, \bar{Y}, S^2, \omega^2, \sigma^2) &\propto p(\bar{Y} \mid \mu, \sigma^2) \, p(\mu \mid \sigma^2) = \norm\left(\bar{Y}; \mu, \frac{\sigma^2}{n}\right) \, \norm\left(\mu ; \mu_0, \frac{\sigma^2}{\kappa_0}\right),
\end{align}
which we recognize as the full conditional from the public setting, as derived in Chapter 5 of \cite{hoff2009first}. Similarly, the full conditional for $\sigma^2$ is
\begin{align}
    p(\sigma^2 \mid \bar{Y}^*, S^{2*}, \bar{Y}, S^2, \omega^2, \mu) &\propto p(\bar{Y} \mid \mu, \sigma^2) \, p(S^2 \mid \sigma^2) \, p(\sigma^2) \\
    &= \norm\left(\bar{Y}; \mu, \frac{\sigma^2}{n}\right)\, \textsf{Gam}\left(S^2; \frac{n-1}{2}, \frac{n-1}{2\sigma^2}\right) \, \textsf{IG}\left( \frac{\nu_0}{2}, \frac{\nu_0\sigma_0^2}{2} \right).
\end{align}
This is the same full conditional as in the public setting, as derived in Chapter 6 of \cite{hoff2009first}. Thus, the full conditionals have the form
\begin{align}
    \mu \mid \sigma^2, \omega^2, \bar{Y}, S^2, \bar{Y}^*, S^{2*} &\sim \norm\left(\frac{n\bar{Y} + \kappa_0\mu_0}{n + \kappa_0}, \frac{\sigma^2}{n +\kappa_0} \right) \\ 
    \sigma^2 \mid \mu, \omega^2, \bar{Y}, S^2, \bar{Y}^*, S^{2*} &\sim \textsf{IG}\left(\frac{n+\nu_0}{2}, \frac{\nu_0\sigma_0^2 + (n-1)S^2 + n(\bar{Y} - \mu)^2}{2} \right).
\end{align}
We next examine the full conditional for $\bar{Y}$, which is 
\begin{align}
    p(\bar{Y} \mid \bar{Y}^*, S^{2*}, S^2, \omega^2, \mu, \sigma^2) &\propto p(\bar{Y}^* \mid \bar{Y}, \omega^2) \, p(\bar{Y} \mid \mu, \sigma^2) = \norm(\bar{Y}^*; \bar{Y}, \omega^2) \, \norm\left(\bar{Y}; \mu, \frac{\sigma^2}{n}\right).
\end{align}
We recognize this is equivalent to a Gaussian model with known variance and a Gaussian prior on the mean. By Chapter 5 of \cite{hoff2009first}, the full conditional for $\bar{Y}$ is
\begin{align}
    \bar{Y} \mid \mu, \sigma^2, \omega^2, S^2, \bar{Y}^*, S^{2*} &\sim \norm\left(\frac{\frac{\bar{Y}^*}{\omega^2}+ \frac{n\mu}{\sigma^2}}{\frac{1}{\omega^2}+ \frac{n}{\sigma^2}}, \frac{1}{\frac{1}{\omega^2}+ \frac{n}{\sigma^2}} \right).
\end{align}
The full conditional for $\omega^2$ is as follows, expressed in terms of its inverse, $1/\omega^2$.
\begin{align}
    p(1/\omega^2 \mid \bar{Y}^*, S^{2*}, \bar{Y}, S^2, \mu, \sigma^2) &\propto p(\bar{Y}^* \mid \bar{Y}, 1/\omega^2) \, p(1/\omega^2) \\
    &= \norm(\bar{Y}^*; \bar{Y}, (1/\omega^2)^{-1}) \, \textsf{IG}\left(1/\omega^2; 1, \frac{\varepsilon_1^2 n^2}{2} \right).
\end{align}
We recognize this form from the Bayesian LASSO \citep{park2008bayesian}, which yields the distribution
\begin{align}
    1/\omega^2 \mid \mu, \sigma^2, \bar{Y}, S^2, \bar{Y}^*, S^{2*} &\sim \textsf{InvGaus}\left(\frac{\varepsilon_1 n}{|\bar{Y}^* - \bar{Y}|}, \varepsilon_1^2 n^2 \right).
\end{align}
Finally, the full conditional for $S^2$ is
\begin{align}
    p(S^2 \mid \bar{Y}^*, S^{2*}, \bar{Y}, \omega^2, \mu, \sigma^2) &\propto p(S^{2*} \mid S^2) \, p(S^2 \mid \sigma^2) \\
    &= \textsf{Lap}\left(S^{2*}; S^2, \frac{1}{\varepsilon_2 n}\right) \, \textsf{Gam}\left(S^2; \frac{n-1}{2}, \frac{n-1}{2\sigma^2}\right).
\end{align}
By Theorem \ref{thm:TGM} when $(n-1)/(2\sigma^2) > \varepsilon_2 n$, the distribution is
\begin{align}
    S^2 \mid \mu, \sigma^2, \omega^2, \bar{Y}, \bar{Y}^*, S^{2*} &\sim \textsf{TGM}\left(\frac{n-1}{2}, \frac{n-1}{2\sigma^2}, n\varepsilon_2, S^{2*} \right). \label{eq:S_sq}
\end{align}
Note that by Corollary \ref{cor:sigma_sq} of the main text, which is repeated in Section \ref{app:bound} below, since $Y_i \in [0,1]$ it follows that $\sigma^2 \leq 1/4$. Thus, we may use this full conditional when
\begin{align}
    \varepsilon_2 < \frac{n-1}{n} \cdot \frac{1}{2\sigma^2} \leq 2 \, \frac{n-1}{n}.
\end{align}
If $\varepsilon_2 > 2(n-1)/n$, then the full conditional for $S^2$ is not necessarily a TGM without additional assumptions. If reasonable for the application, the analyst could impose the additional constraint that $\sigma^2 < (n-1)/(n\varepsilon_2)$, which is sufficient to ensure the full conditional for $S^2$ is of the form in (\ref{eq:S_sq}). Alternatively, if $n$ is large, an analyst could use an approximation of the full conditional for $S^2$, such as the proposals of \cite{bernstein2018differentially, bernstein2019differentially}.

\section{ALGORITHM AND FULL CONDITIONALS} \label{app:full_cond}

In this section, we provide an explicit algorithm for implementing our Gibbs sampler and enumerate all full conditionals examined in this work. The algorithm is presented below. In the algorithm, we denote the $t^{\textrm{th}}$ draw from the posterior for a variable $x$ as $x_{(t)}$.

\begin{algorithm}
\label{alg:Gibbs}
\DontPrintSemicolon
\KwIn{Data $(\bar{Y}^*, S^{2*})$, privacy budget $(\varepsilon_1, \varepsilon_2)$, sample size $n$, iterations $T$, hyperparameters $(\mu_0, \sigma_0^2, \kappa_0, \nu_0)$}
Initialize $\sigma^2_{(0)}$, $\bar{Y}_{(0)}$, $S^2_{(0)}$, and $\omega^2_{(0)}$\;
\For{$t$ in $1:T$}{
    Sample $\mu_{(t)}$ from full conditional $(\mu \mid \sigma^2 = \sigma^{2}_{(t-1)}, \omega^2 = \omega^{2}_{(t-1)}, \bar{Y} =  \bar{Y}_{(t-1)}, S^2 = S^{2}_{(t-1)}, \bar{Y}^*, S^{2*})$\;
    Sample $\sigma^2_{(t)}$ from full conditional $(\sigma^2 \mid \mu = \mu_{(t)}, \omega^2 = \omega^{2}_{(t-1)}, \bar{Y} =  \bar{Y}_{(t-1)}, S^2 = S^{2}_{(t-1)}, \bar{Y}^*, S^{2*})$\;
    Sample $\bar{Y}_{(t)}$ from full conditional $(\bar{Y} \mid \mu = \mu_{(t)}, \sigma^2 = \sigma^2_{(t)}, \omega^2 = \omega^{2}_{(t-1)}, S^2 = S^{2}_{(t-1)}, \bar{Y}^*, S^{2*})$\;
    Sample $S^2_{(t)}$ from full conditional $(S^2 \mid \mu = \mu_{(t)}, \sigma^2 = \sigma^2_{(t)}, \omega^2 = \omega^{2}_{(t-1)}, \bar{Y} = \bar{Y}_{(t)}, \bar{Y}^*, S^{2*})$\;
    Sample $\omega^2_{(t)}$ from full conditional $(\omega^2 \mid \mu = \mu_{(t)}, \sigma^2 = \sigma^2_{(t)}, \bar{Y} = \bar{Y}_{(t)}, S^2 = S^2_{(t)}, \bar{Y}^*, S^{2*})$\;
}
\KwOut{$(\mu_{(1)}, \ldots, \mu_{(T)})$, $(\sigma^2_{(1)}, \ldots, \sigma^2_{(T)})$}
\caption{Run our proposed Gibbs sampler.}
\end{algorithm}

In Algorithm \ref{alg:Gibbs}, the first step is to initialize values for each of the variables, which we describe further in the following paragraph. The analyst then repeatedly samples from the full conditionals for each of the variables. The forms of the full conditionals under different assumptions are described below. The final step is to output the posterior draws for the two parameters.

The posterior distribution is not very sensitive to the initial values, but well-thought-out values may reduce the runtime. In our examples, we set $\sigma^2_{(0)} = S^2_{(0)} = S^{2*}$ if $S^{2*} \in (0, 1/4)$ and $\bar{Y}_{(0)} = \bar{Y}^*$ if $\bar{Y}^* \in [0,1]$. If $\bar{Y}^*$ or $S^{2*}$ are not in these regions, they are set near the closest boundary (although note that $\sigma^2_{(0)}$ and $S^2_{(0)}$ should not be set exactly equal to zero). We set $\omega^2_{(0)} = 2/(\varepsilon_1^2 n^2)$.

We now summarize the full conditional under different prior distributions for the DP univariate Gaussian setting with likelihood given by $Y_i \overset{iid}{\sim} \norm_{[0,1]}(\mu, \sigma^2)$, $\bar{Y}^* \sim \textsf{Lap}(\bar{Y}, 1/(\varepsilon_1 n))$, and $S^{2*} \sim \textsf{Lap}(S^2, 1/(\varepsilon_2 n))$. Subscripts on a distribution denote truncation of the distribution to an interval.

If the prior is $\sigma^2 \sim \textsf{IG}\left(\nu_0/2, \nu_0\sigma_0^2/2 \right)$ and $\mu \mid \sigma^2 \sim \norm\left(\mu_0, \sigma^2/\kappa_0\right)$ with no constraints enforced, then
\begin{align}
    &\mu \mid \sigma^2, \omega^2, \bar{Y}, S^2, \bar{Y}^*, S^{2*} \sim \norm\left(\frac{n\bar{Y} + \kappa_0\mu_0}{n + \kappa_0}, \frac{\sigma^2}{n +\kappa_0} \right) \\ 
    &\sigma^2 \mid \mu, \omega^2, \bar{Y}, S^2, \bar{Y}^*, S^{2*} \sim \textsf{IG}_{\left(0, \frac{n-1}{2n\varepsilon_2}\right)}\left(\frac{n+\nu_0}{2}, \frac{\nu_0\sigma_0^2 + (n-1)S^2 + n(\bar{Y} - \mu)^2}{2} \right) \\
    &\bar{Y} \mid \mu, \sigma^2, \omega^2, S^2, \bar{Y}^*, S^{2*} \sim \norm\left(\frac{\frac{\bar{Y}^*}{\omega^2}+ \frac{n\mu}{\sigma^2}}{\frac{1}{\omega^2}+ \frac{n}{\sigma^2}}, \frac{1}{\frac{1}{\omega^2}+ \frac{n}{\sigma^2}} \right) \\
    &1/\omega^2 \mid \mu, \sigma^2, \bar{Y}, S^2, \bar{Y}^*, S^{2*} \sim \textsf{InvGaus}\left(\frac{\varepsilon_1 n}{|\bar{Y}^* - \bar{Y}|}, \varepsilon_1^2 n^2 \right) \\
    &S^2 \mid \mu, \sigma^2, \omega^2, \bar{Y}, \bar{Y}^*, S^{2*} \sim \textsf{TGM}\left(\frac{n-1}{2}, \frac{n-1}{2\sigma^2}, n\varepsilon_2, S^{2*} \right).
\end{align}

If the prior is $\sigma^2 \sim \textsf{IG}\left(\nu_0/2, \nu_0\sigma_0^2/2 \right)$ and $\mu \mid \sigma^2 \sim \norm\left(\mu_0, \sigma^2/\kappa_0\right)$ with constraints enforced, then
\begin{align}
    &\mu \mid \sigma^2, \omega^2, \bar{Y}, S^2, \bar{Y}^*, S^{2*} \sim \norm_{\left[1/2 - \sqrt{1/4 - \sigma^2},1/2 + \sqrt{1/4 - \sigma^2}\right]}\left(\frac{n\bar{Y} + \kappa_0\mu_0}{n + \kappa_0}, \frac{\sigma^2}{n +\kappa_0} \right) \\
    &\sigma^2 \mid \mu, \omega^2, \bar{Y}, S^2, \bar{Y}^*, S^{2*} \sim \textsf{IG}_{[0,\min\{\mu(1-\mu), \frac{n-1}{2n\varepsilon_2})}\left(\frac{n+\nu_0}{2}, \frac{\nu_0\sigma_0^2 + (n-1)S^2 + n(\bar{Y} - \mu)^2}{2} \right) \\
    &\bar{Y} \mid \mu, \sigma^2, \omega^2, S^2, \bar{Y}^*, S^{2*} \sim \norm_{\left[1/2 - \sqrt{1/4 - \frac{n-1}{n}S^2},1/2 + \sqrt{1/4 - \frac{n-1}{n}S^2}\right]}\left(\frac{\frac{\bar{Y}^*}{\omega^2}+ \frac{n\mu}{\sigma^2}}{\frac{1}{\omega^2}+ \frac{n}{\sigma^2}}, \frac{1}{\frac{1}{\omega^2}+ \frac{n}{\sigma^2}} \right) \\
    &1/\omega^2 \mid \mu, \sigma^2, \bar{Y}, S^2, \bar{Y}^*, S^{2*} \sim \textsf{InvGaus}\left(\frac{\varepsilon_1 n}{|\bar{Y}^* - \bar{Y}|}, \varepsilon_1^2 n^2 \right) \\
    &S^2 \mid \mu, \sigma^2, \omega^2, \bar{Y}, \bar{Y}^*, S^{2*} \sim \textsf{TGM}_{(0,\frac{n}{n-1}\bar{Y}(1-\bar{Y})]}\left(\frac{n-1}{2}, \frac{n-1}{2\sigma^2}, n\varepsilon_2, S^{2*} \right). 
\end{align}

If the prior is $p(\mu, \sigma^2) \propto 1$ with no constraints enforced, then
\begin{align}
    &\mu \mid \sigma^2, \omega^2, \bar{Y}, S^2, \bar{Y}^*, S^{2*} \sim \norm\left(\bar{Y}, \frac{\sigma^2}{n} \right) \\
    &\sigma^2 \mid \mu, \omega^2, \bar{Y}, S^2, \bar{Y}^*, S^{2*} \sim \textsf{IG}_{\left(0, \frac{n-1}{2n\varepsilon_2}\right)}\left(\frac{n-2}{2}, \frac{(n-1)S^2 + n(\bar{Y} - \mu)^2}{2} \right) \\
    &\bar{Y} \mid \mu, \sigma^2, \omega^2, S^2, \bar{Y}^*, S^{2*} \sim \norm\left(\frac{\frac{\bar{Y}^*}{\omega^2}+ \frac{n\mu}{\sigma^2}}{\frac{1}{\omega^2}+ \frac{n}{\sigma^2}}, \frac{1}{\frac{1}{\omega^2}+ \frac{n}{\sigma^2}} \right) \\
    &1/\omega^2 \mid \mu, \sigma^2, \bar{Y}, S^2, \bar{Y}^*, S^{2*} \sim \textsf{InvGaus}\left(\frac{\varepsilon_1 n}{|\bar{Y}^* - \bar{Y}|}, \varepsilon_1^2 n^2 \right) \\
    &S^2 \mid \mu, \sigma^2, \omega^2, \bar{Y}, \bar{Y}^*, S^{2*} \sim \textsf{TGM}\left(\frac{n-1}{2}, \frac{n-1}{2\sigma^2}, n\varepsilon_2, S^{2*} \right).
\end{align}

If the prior is $p(\mu, \sigma^2) \propto 1$ with constraints enforced, then
\begin{align}
    &\mu \mid \sigma^2, \omega^2, \bar{Y}, S^2, \bar{Y}^*, S^{2*} \sim \norm_{\left[1/2 - \sqrt{1/4 - \sigma^2},1/2 + \sqrt{1/4 - \sigma^2}\right]}\left(\bar{Y}, \frac{\sigma^2}{n} \right) \\
    &\sigma^2 \mid \mu, \omega^2, \bar{Y}, S^2, \bar{Y}^*, S^{2*} \sim \textsf{IG}_{[0,\min\{\mu(1-\mu), \frac{n-1}{2n\varepsilon_2})}\left(\frac{n-2}{2}, \frac{(n-1)S^2 + n(\bar{Y} - \mu)^2}{2} \right) \\
    &\bar{Y} \mid \mu, \sigma^2, \omega^2, S^2, \bar{Y}^*, S^{2*} \sim \norm_{\left[1/2 - \sqrt{1/4 - \frac{n-1}{n}S^2},1/2 + \sqrt{1/4 - \frac{n-1}{n}S^2}\right]}\left(\frac{\frac{\bar{Y}^*}{\omega^2}+ \frac{n\mu}{\sigma^2}}{\frac{1}{\omega^2}+ \frac{n}{\sigma^2}}, \frac{1}{\frac{1}{\omega^2}+ \frac{n}{\sigma^2}} \right) \\
    &1/\omega^2 \mid \mu, \sigma^2, \bar{Y}, S^2, \bar{Y}^*, S^{2*} \sim \textsf{InvGaus}\left(\frac{\varepsilon_1 n}{|\bar{Y}^* - \bar{Y}|}, \varepsilon_1^2 n^2 \right) \\
    &S^2 \mid \mu, \sigma^2, \omega^2, \bar{Y}, \bar{Y}^*, S^{2*} \sim \textsf{TGM}_{(0,\frac{n}{n-1}\bar{Y}(1-\bar{Y})]}\left(\frac{n-1}{2}, \frac{n-1}{2\sigma^2}, n\varepsilon_2, S^{2*} \right).
\end{align}

\section{CONSTRAINTS FOR THE UNIVARIATE GAUSSIAN SETTING} \label{app:bound}

In this section, we provide proofs for results in Section \ref{sec:bounds} of the main text regarding constraints on parameters and statistics in the univariate Gaussian setting. The result regarding constraints on the parameters $\mu$ and $\sigma^2$ is restated below.

\begin{reptheorem}{thm:par_bounds}
    Let $Y_i \in [0,1]$ have moments $E[Y_i] = \mu$ and $V[Y_i] = \sigma^2$.
    It follows that $\sigma^2 \in [0, \mu(1-\mu)]$ and $\mu \in [1/2 - \sqrt{1/4 - \sigma^2}, 1/2 + \sqrt{1/4 - \sigma^2}]$.
\end{reptheorem}

\begin{proof}
    Since $Y_i \in [0,1]$, it follows that $Y_i^2 \leq Y_i$. and so by monotonicity of expectation, 
    \begin{align} \label{eq:sigma_mu}
        \sigma^2 = Var[Y_i] = E[Y_i^2] - E[Y_i]^2 \leq E[Y_i] - E[Y_i]^2 = \mu-\mu^2.
    \end{align}
    Thus, since $Y_i$ must have non-negative variance, if $\mu$ is known then $\sigma^2 \in [0, \mu(1-\mu)]$. On the other hand, if $\sigma^2$ is known then by (\ref{eq:sigma_mu}), $\mu$ must be such that $\mu^2 - \mu + \sigma^2 \leq 0$. Applying the quadratic formula, we find that $\mu^2 - \mu + \sigma^2$ has roots $\mu = 1/2 \pm \sqrt{1/4 - \sigma^2}$. Recognizing that $\mu^2 - \mu + \sigma^2 \leq 0$ when $\mu$ is between these roots yields the desired result.
\end{proof}

The corollary of Theorem \ref{thm:par_bounds} is restated below.

\begin{repcorollary}{cor:sigma_sq}
    If $Y_i \in [0,1]$, then $V[Y_i] = \sigma^2 \leq 1/4$.
\end{repcorollary}

\begin{proof}
    By Theorem \ref{thm:par_bounds} $\sigma^2 \leq \mu(1-\mu)$ and since $f(\mu) = \mu(1-\mu)$ attains its maximum value at $\mu = 1/2$, it follows that $f(\mu) \leq f(1/2) = 1/4$ for all $\mu$. The result follows.
\end{proof}

The result regarding constraints on the sufficient statistics $\bar{Y}$ and $S^2$ is restated below.

\begin{reptheorem}{thm:stat_bounds}
    Let $Y_1, \ldots, Y_n$ be such that each $Y_i \in [0,1]$. Then $S^2 \in [0, n/(n-1)\bar{Y}(1-\bar{Y})]$ and $\bar{Y} \in [1/2 - \sqrt{1/4 - (n-1)/n \cdot S^2}, 1/2 + \sqrt{1/4 - (n-1)/n \cdot S^2}]$.
\end{reptheorem}

\begin{proof}
    We begin by expressing $S^2$ in a more convenient form, as follows.
    \begin{align}
        (n-1)S^2 = \sum_{i=1}^n (Y_i - \bar{Y})^2 = \sum_{i=1}^n Y_i^2 + n\bar{Y}^2 - 2\bar{Y}\sum_{i=1}^n Y_i = \sum_{i=1}^n Y_i^2 - n \bar{Y}^2.
    \end{align}
    Since $Y_i \in [0,1]$, we have that $Y_i \leq Y_i^2$ and so
    \begin{align}
        (n-1)S^2 = \sum_{i=1}^n Y_i^2 - n \bar{Y}^2 \leq \sum_{i=1}^n Y_i - n\bar{Y}^2 = n\bar{Y}(1 - \bar{Y}).
    \end{align}
    Since $(n-1)S^2$ is a sum of squares, it is non-negative and so if $\bar{Y}$ is known, $S^2 \in [0, n/(n-1)\bar{Y}(1-\bar{Y})]$. On the other hand, if $S^2$ is known then $\bar{Y}$ must be such that $\bar{Y}^2 - \bar{Y} + (n-1)/n \cdot S^2 \leq 0$. By an analogous argument to the proof of Theorem \ref{thm:par_bounds}, it follows that $\bar{Y} \in [1/2 - \sqrt{1/4 - (n-1)/n \cdot S^2}, 1/2 + \sqrt{1/4 - (n-1)/n \cdot S^2}]$.
\end{proof}

\section{DETERMINING WHETHER THE POSTERIOR DISTRIBUTION IS PROPER} \label{app:proper}

In this section, we prove that the posterior distribution in the differentially private univariate Gaussian setting is not a proper probability distribution when $p(\mu, \sigma^2) \propto (\sigma^2)^{-1}$, but is a proper probability distribution when $p(\mu, \sigma^2) \propto 1$. The results are reproduced below.

\begin{reptheorem}{thm:IJ_proper}
    For confidential data $Y_1, \ldots, Y_n \overset{iid}{\sim} \norm(\mu, \sigma^2)$ where $\bar{Y}^* \sim \textsf{Lap}(\bar{Y}, 1/(\varepsilon_1 n))$ and $S^{2*} \sim \textsf{Lap}(S^2, 1/(\varepsilon_2 n))$ are released, if an analyst has prior $p(\mu, \sigma^2) \propto (\sigma^2)^{-1}$, then their posterior $p(\mu, \sigma^2 \mid \bar{Y}^*, S^{2*})$ is not a proper probability distribution.
\end{reptheorem}

\begin{proof}
    To begin, note that under the prior $p(\mu, \sigma^2) \propto (\sigma^2)^{-1}$, for observed $\bar{Y}^* = \bar{y}^*$ and $S^{2*} = s^{2*}$ the posterior distribution, if proper, should be 
    \begin{align}
        p(\mu, \sigma^2 \mid \bar{y}^*, s^{2*}) &= \frac{p(\bar{y}^*, s^{2*} \mid \mu, \sigma^2) \, p(\mu, \sigma^2)}{\int_0^\infty \int_{-\infty}^\infty p(\bar{y}^*, s^{2*} \mid \mu, \sigma^2) \, p(\mu, \sigma^2) \, \textrm{d}\mu \, \textrm{d}\sigma^2} \\
        &= \frac{p(\bar{y}^*, s^{2*} \mid \mu, \sigma^2) \, (\sigma^2)^{-1}}{\int_0^\infty \int_{-\infty}^\infty p(\bar{y}^*, s^{2*} \mid \mu, \sigma^2)  \, (\sigma^2)^{-1} \, \textrm{d}\mu \, \textrm{d}\sigma^2}. \label{eq:posterior_IJ}
    \end{align}
    The posterior is a proper probability distribution if the integral in the denominator of (\ref{eq:posterior_IJ}) is finite. We now examine this integral. Introducing the latent variables $\bar{Y} = \bar{y}$ and $S^2 = s^2$ and exploiting the conditional independence structure, this integral is equivalent to
    \begin{align}
        &\int_0^\infty  \int_{-\infty}^\infty p(\bar{y}^*, s^{2*} \mid \mu, \sigma^2)  \, (\sigma^2)^{-1} \, \textrm{d}\mu \, \textrm{d}\sigma^2 \\
        &= \int_0^\infty \int_{-\infty}^\infty \int_0^\infty  \int_{-\infty}^\infty p(\bar{y}^*, s^{2*}, \bar{y}, s^2 \mid \mu, \sigma^2)  \, (\sigma^2)^{-1} \, \textrm{d}\bar{y} \, \textrm{d}s^2 \, \textrm{d}\mu \, \textrm{d}\sigma^2 \\
        &= \int_0^\infty (\sigma^2)^{-1} \int_{-\infty}^\infty \int_0^\infty  \int_{-\infty}^\infty p(\bar{y}^* \mid \bar{y}) \, p(s^{2*} \mid s^2) \, p(\bar{y} \mid \mu, \sigma^2) \, p(s^{2} \mid \sigma^2)  \,  \, \textrm{d}\bar{y} \, \textrm{d}s^2 \, \textrm{d}\mu \, \textrm{d}\sigma^2 \\
        &= \int_0^\infty (\sigma^2)^{-1} \left[\int_0^\infty p(s^{2*} \mid s^2) \, p(s^{2} \mid \sigma^2) \, \textrm{d}s^2 \right] \left[\int_{-\infty}^\infty\int_{-\infty}^\infty p(\bar{y}^* \mid \bar{y})  \, p(\bar{y} \mid \mu, \sigma^2) \, \textrm{d}\bar{y} \, \textrm{d}\mu \right] \, \textrm{d}\sigma^2.
    \end{align}
    We examine the double integral in the second bracket. If we exchange the order of integration,
    \begin{align}
        \int_{-\infty}^\infty\int_{-\infty}^\infty p(\bar{y}^* \mid \bar{y})  \, p(\bar{y} \mid \mu, \sigma^2) \, \textrm{d}\mu \, \textrm{d}\bar{y} 
        &= \int_{-\infty}^\infty p(\bar{y}^* \mid \bar{y}) \int_{-\infty}^\infty  \frac{1}{\sqrt{2\pi\sigma^2}}e^{-\frac{(\bar{y} - \mu)^2}{2\sigma^2}} \, \textrm{d}\mu \, \textrm{d}\bar{y},
    \end{align}
    since we recognize the quantity under the inner integral as the density function of $\norm(\bar{y}, \sigma^2)$, the integral is equal to $1$. Thus,
    \begin{align}
        \int_{-\infty}^\infty\int_{-\infty}^\infty p(\bar{y}^* \mid \bar{y})  \, p(\bar{y} \mid \mu, \sigma^2) \, \textrm{d}\mu \, \textrm{d}\bar{y} &= \int_{-\infty}^\infty p(\bar{y}^* \mid \bar{y}) \, \textrm{d}\bar{y} = \int_{-\infty}^\infty \frac{\varepsilon_1 n}{2} e^{- \varepsilon_1 n |\bar{y}^* - \bar{y}| } \, \textrm{d}\bar{y} = 1, \label{eq:inner_int}
    \end{align}
    since we recognize the quantity under the integral as the density function of $\textsf{Lap}(\bar{y}^*, 1/(\varepsilon_1 n))$. By Fubini's Theorem, since this integral is finite and the quantity being integrated is nonnegative, it is equivalent to the quantity in brackets above. Thus, the quantity of interest reduces to
    \begin{align}
        \int_0^\infty & \int_{-\infty}^\infty p(\bar{y}^*, s^{2*} \mid \mu, \sigma^2)  \, (\sigma^2)^{-1} \, \textrm{d}\mu \, \textrm{d}\sigma^2 = \int_0^\infty (\sigma^2)^{-1} \int_0^\infty p(s^{2*} \mid s^2) \, p(s^{2} \mid \sigma^2) \, \textrm{d}s^2 \, \textrm{d}\sigma^2.
    \end{align}
    If $s^{2*} > 0$, then we bound the inner integral below as follows.
    \begin{align}
        \int_0^\infty p(s^{2*} \mid s^2) \, p(s^{2} \mid \sigma^2) \, \textrm{d}s^2 
        &= \frac{\varepsilon_2 n \left(\frac{n-1}{2\sigma^2} \right)^{\frac{n-1}{2}}}{2 \, \Gamma(\frac{n-1}{2})}\int_0^\infty e^{-\varepsilon_2 n |s^{2*} - s^2|} \, (s^2)^{\frac{n-1}{2} - 1} e^{-\frac{n-1}{2\sigma^2}s^2} \, \textrm{d}s^2 \\
        &= \frac{\varepsilon_2 n \left(\frac{n-1}{2\sigma^2} \right)^{\frac{n-1}{2}}}{2 \, \Gamma(\frac{n-1}{2})}\Bigg[e^{-\varepsilon_2 n s^{2*}}\int_0^{s^{2*}} (s^2)^{\frac{n-1}{2} - 1} e^{-(\frac{n-1}{2\sigma^2} - \varepsilon_2 n)s^2} \, \textrm{d}s^2 \nonumber \\
        & \hspace{1.1in}+ e^{\varepsilon_2 n s^{2*}}\int_{s^{2*}}^\infty (s^2)^{\frac{n-1}{2} - 1} e^{-(\frac{n-1}{2\sigma^2} + \varepsilon_2 n)s^2} \, \textrm{d}s^2 \Bigg] \\
        &\geq \frac{\varepsilon_2 n \left(\frac{n-1}{2\sigma^2} \right)^{\frac{n-1}{2}}}{2 \, \Gamma(\frac{n-1}{2})}\Bigg[e^{-\varepsilon_2 n s^{2*}}\int_0^{s^{2*}} (s^2)^{\frac{n-1}{2} - 1} e^{-(\frac{n-1}{2\sigma^2} + \varepsilon_2 n)s^2} \, \textrm{d}s^2 \nonumber \\
        & \hspace{1.1in}+ e^{-\varepsilon_2 n s^{2*}}\int_{s^{2*}}^\infty (s^2)^{\frac{n-1}{2} - 1} e^{-(\frac{n-1}{2\sigma^2} + \varepsilon_2 n)s^2} \, \textrm{d}s^2 \Bigg] \\
        &= \frac{\varepsilon_2 n \left(\frac{n-1}{2\sigma^2} \right)^{\frac{n-1}{2}}}{2 \, \Gamma(\frac{n-1}{2})}e^{-\varepsilon_2 n s^{2*}}\int_0^\infty (s^2)^{\frac{n-1}{2} - 1} e^{-(\frac{n-1}{2\sigma^2} + \varepsilon_2 n)s^2} \, \textrm{d}s^2 \\
        &= \frac{\varepsilon_2 n}{2}\left(\frac{\frac{n-1}{2\sigma^2}}{\frac{n-1}{2\sigma^2} + \varepsilon_2 n}\right)^{\frac{n-1}{2}} \\
        &= \frac{\varepsilon_2 n}{2}\left(\frac{\frac{n-1}{2\varepsilon_2 n}}{\frac{n-1}{2\varepsilon_2 n} + \sigma^2}\right)^{\frac{n-1}{2}}.
    \end{align}
    The inequality follows since $e^{x} \geq e^{-x}$ for $x > 0$ and since $e^{-(a-b)x} \geq e^{-(a+b)x}$ for $a > b > 0$ and $x > 0$. If $s^{2*} \leq 0$, then the same bound holds by an analogous argument (with the integral from $0$ to $s^{2*}$ omitted). Thus, the quantity of interest is bounded as follows.
    \begin{align}
        \int_0^\infty \int_{-\infty}^\infty p(\bar{y}^*, s^{2*} \mid \mu, \sigma^2)  \, (\sigma^2)^{-1} \, \textrm{d}\mu \, \textrm{d}\sigma^2 &\geq \frac{\varepsilon_2 n}{2} \int_0^\infty (\sigma^2)^{-1}\left(\frac{\frac{n-1}{2\varepsilon_2 n}}{\frac{n-1}{2\varepsilon_2 n} + \sigma^2}\right)^{\frac{n-1}{2}} \, \textrm{d}\sigma^2 \\
        &\geq \frac{\varepsilon_2 n}{2} \int_0^1 (\sigma^2)^{-1}\left(\frac{\frac{n-1}{2\varepsilon_2 n}}{\frac{n-1}{2\varepsilon_2 n} + \sigma^2}\right)^{\frac{n-1}{2}} \, \textrm{d}\sigma^2 \\
        &\geq \frac{\varepsilon_2 n}{2} \int_0^1 (\sigma^2)^{-1}\left(\frac{\frac{n-1}{2\varepsilon_2 n}}{\frac{n-1}{2\varepsilon_2 n} + 1}\right)^{\frac{n-1}{2}} \, \textrm{d}\sigma^2 \\
        &= \frac{\varepsilon_2 n}{2} \left(\frac{\frac{n-1}{2\varepsilon_2 n}}{\frac{n-1}{2\varepsilon_2 n} + 1}\right)^{\frac{n-1}{2}} \int_0^1 (\sigma^2)^{-1} \, \textrm{d}\sigma^2.
    \end{align}
    Since this integral diverges, the quantity of interest diverges. Thus, \eqref{eq:posterior_IJ} is not a proper probability distribution.
\end{proof}

\begin{reptheorem}{thm:Unif_proper}
    For confidential data $Y_1, \ldots, Y_n \overset{iid}{\sim} \norm(\mu, \sigma^2)$ where $n > 3$ and $\bar{Y}^* \sim \textsf{Lap}(\bar{Y}, 1/(\varepsilon_1 n))$ and $S^{2*} \sim \textsf{Lap}(S^2, 1/(\varepsilon_2 n))$ are released, if an analyst has prior $p(\mu, \sigma^2) \propto 1$, then their posterior $p(\mu, \sigma^2 \mid \bar{Y}^*, S^{2*})$ is a proper probability distribution.
\end{reptheorem}

\begin{proof}
    To begin, note that under the prior $p(\mu, \sigma^2) \propto 1$, for observed $\bar{Y}^* = \bar{y}^*$ and $S^{2*} = s^{2*}$ the posterior distribution, if proper, should be
    \begin{align}
        p(\mu, \sigma^2 \mid \bar{y}^*, s^{2*}) &= \frac{p(\bar{y}^*, s^{2*} \mid \mu, \sigma^2) \, p(\mu, \sigma^2)}{\int_0^\infty \int_{-\infty}^\infty p(\bar{y}^*, s^{2*} \mid \mu, \sigma^2) \, p(\mu, \sigma^2) \, \textrm{d}\mu \, \textrm{d}\sigma^2} \\
        &= \frac{p(\bar{y}^*, s^{2*} \mid \mu, \sigma^2)}{\int_0^\infty \int_{-\infty}^\infty p(\bar{y}^*, s^{2*} \mid \mu, \sigma^2)  \, \textrm{d}\mu \, \textrm{d}\sigma^2}. \label{eq:posterior_Unif}
    \end{align}
    The posterior is a proper probability distribution if the integral in the denominator of (\ref{eq:posterior_Unif}) is finite. By analogy to the proof of Theorem \ref{thm:IJ_proper}, this integral is equivalent to
    \begin{align}
        \int_0^\infty & \int_{-\infty}^\infty p(\bar{y}^*, s^{2*} \mid \mu, \sigma^2)  \, \textrm{d}\mu \, \textrm{d}\sigma^2 \\
        &= \int_0^\infty\left[\int_0^\infty p(s^{2*} \mid s^2) \, p(s^{2} \mid \sigma^2) \, \textrm{d}s^2 \right] \left[\int_{-\infty}^\infty\int_{-\infty}^\infty p(\bar{y}^* \mid \bar{y})  \, p(\bar{y} \mid \mu, \sigma^2) \, \textrm{d}\bar{y} \, \textrm{d}\mu \right] \, \textrm{d}\sigma^2.
    \end{align}
    By (\ref{eq:inner_int}), the double integral in the second bracket is equal to 1. Thus, the quantity of interest reduces to
    \begin{align}
        \int_0^\infty & \int_{-\infty}^\infty p(\bar{y}^*, s^{2*} \mid \mu, \sigma^2) \, \textrm{d}\mu \, \textrm{d}\sigma^2 = \int_0^\infty \int_0^\infty p(s^{2*} \mid s^2) \, p(s^{2} \mid \sigma^2) \, \textrm{d}s^2 \, \textrm{d}\sigma^2.
    \end{align}
    Exchanging the order of integration yields
    \begin{align}
        \int_0^\infty p(s^{2*} \mid s^2)\int_0^\infty \, p(s^{2} \mid \sigma^2) \, \textrm{d}\sigma^2 \, \textrm{d}s^2.
    \end{align}
    Since $n > 3$, the inner integral is the kernel of an inverse-gamma distribution and so
    \begin{align}
        \int_0^\infty \, p(s^{2} \mid \sigma^2) \, \textrm{d}\sigma^2 &= \int_0^\infty \frac{(\frac{n-1}{2\sigma^2})^{\frac{n-1}{2}}}{\Gamma(\frac{n-1}{2})} (s^2)^{\frac{n-1}{2}-1} e^{-\frac{(n-1)s^2}{2\sigma^2}} \, \textrm{d}\sigma^2 \\
        &= \frac{(\frac{n-1}{2})^{\frac{n-1}{2}}}{\Gamma(\frac{n-1}{2})}(s^2)^{\frac{n-1}{2}-1} \int_0^\infty (\sigma^2)^{-\frac{n-3}{2}-1}  e^{-\frac{(n-1)s^2}{2\sigma^2}} \, \textrm{d}\sigma^2 \\
        &= \frac{(\frac{n-1}{2})^{\frac{n-1}{2}}}{\Gamma(\frac{n-1}{2})}(s^2)^{\frac{n-1}{2}-1} \frac{\Gamma(\frac{n-3}{2})}{(\frac{(n-1)s^2}{2})^{\frac{n-3}{2}}} \\
        &= \frac{\frac{n-1}{2}\Gamma(\frac{n-3}{2})}{\Gamma(\frac{n-1}{2}) } \\
        &= \frac{n-1}{n-3}.
    \end{align}
    Thus,
    \begin{align}
        \int_0^\infty p(s^{2*} \mid s^2)\int_0^\infty \, p(s^{2} \mid \sigma^2) \, \textrm{d}\sigma^2 \, \textrm{d}s^2 &= \frac{n-1}{n-3}\int_0^\infty \frac{\varepsilon_2 n}{2}e^{-\varepsilon_2 n |s^{2*} - s^2|} \, \textrm{d}s^2 \\
        &= \frac{n-1}{n-3} \left(\frac{1}{2} + \frac{1}{2}\textup{sgn}\left(s^{2*}\right)\left(1 - e^{-\varepsilon_2 n |s^{2*}|} \right) \right),
    \end{align}
    since we recognize the quantity under the integral as the density function of $\textsf{Lap}(s^{2*}, 1/(\varepsilon_2 n))$, we plug in the form of its CDF at zero. By Fubini's Theorem, since this integral is finite and the quantity being integrated is nonnegative, it is equivalent to the original quantity of interest. Thus,
    \begin{align}
        \int_0^\infty & \int_{-\infty}^\infty p(\bar{y}^*, s^{2*} \mid \mu, \sigma^2)  \, (\sigma^2)^0 \, \textrm{d}\mu \, \textrm{d}\sigma^2 = \frac{n-1}{n-3} \left(\frac{1}{2} + \frac{1}{2}\textup{sgn}\left(s^{2*}\right)\left(1 - e^{-\varepsilon_2 n |s^{2*}|} \right) \right) < \infty
    \end{align}
    and so the posterior is a proper probability distribution.
\end{proof}

\section{NONPRIVATE LAPLACE-UNIFORM POSTERIOR DISTRIBUTION} \label{app:lap_unif}

In this section, we demonstrate that if $X \sim \textsf{Lap}(\xi, 1/\lambda)$ and $\lambda$ is known, then a uniform prior on $\xi$ yields a proper, frequentist matching posterior. The main result is as follows.

\begin{thm} \label{thm:lap_unif}
    Let $X \sim \textsf{Lap}(\xi, 1/\lambda)$, where $\lambda$ is known. Then the prior $p(\xi) \propto 1$ yields a proper posterior distribution that is frequentist matching.
\end{thm}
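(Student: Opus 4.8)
The plan is to compute the posterior in closed form and then to exhibit an exact pivot. Writing the Laplace likelihood for an observation $X = x$ as $p(x \mid \xi) = \tfrac{\lambda}{2} e^{-\lambda |x - \xi|}$, under the flat prior $p(\xi) \propto 1$ the posterior density is proportional to $e^{-\lambda |x - \xi|}$, now viewed as a function of $\xi$. Since $\int_{-\infty}^{\infty} \tfrac{\lambda}{2} e^{-\lambda |x - \xi|}\, \textrm{d}\xi = 1$, this kernel integrates to a finite constant, so the posterior is proper; indeed $\xi \mid x \sim \textsf{Lap}(x, 1/\lambda)$, the same family as the sampling model. This disposes of the propriety claim immediately.

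For frequentist matching I would exploit the symmetry of the Laplace law. Under the posterior, $\xi - x \mid x \sim \textsf{Lap}(0, 1/\lambda)$, whereas under the sampling model $x - \xi \mid \xi \sim \textsf{Lap}(0, 1/\lambda)$; because $\textsf{Lap}(0, 1/\lambda)$ is symmetric about $0$, the quantity $\xi - x$ has the same distribution whether one conditions on $x$ (Bayesian) or on $\xi$ (frequentist). Let $F$ denote the CDF of $\textsf{Lap}(0, 1/\lambda)$ and $q_p = F^{-1}(p)$. Then the posterior CDF is $P(\xi \le t \mid x) = F(t - x)$, so the $p$-th posterior quantile of $\xi$ given $x$ is $t_p(x) = x + q_p$.

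Next I would compute the frequentist coverage of the induced one-sided credible interval $(-\infty,\, t_{1-\alpha}(X)]$ at the true $\xi$: it equals $P_\xi(\xi \le X + q_{1-\alpha}) = P_\xi(X - \xi \ge -q_{1-\alpha}) = 1 - F(-q_{1-\alpha}) = F(q_{1-\alpha}) = 1 - \alpha$, where the last equality uses symmetry of $F$. Running the same argument on both tails shows the equal-tailed credible interval $[X + q_{\alpha/2},\, X + q_{1-\alpha/2}]$ has exact frequentist coverage $1 - \alpha$, which is precisely the frequentist-matching property; more generally any credible region built from the pivot $\xi - x$ inherits exactly its nominal coverage.

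I do not expect a genuine obstacle here. The only points requiring care are stating precisely which notion of ``frequentist matching'' is in force (exact coverage of posterior quantile / equal-tailed intervals) and correctly converting the Bayesian statement $\xi - x \sim \textsf{Lap}(0, 1/\lambda)$ into the frequentist statement about $x - \xi$ via the symmetry of the Laplace distribution; everything else is a one-line verification.
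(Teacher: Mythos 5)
Your proposal is correct and follows essentially the same route as the paper's proof: both identify the posterior as $\textsf{Lap}(x, 1/\lambda)$ under the flat prior and use the pivot $\xi - X \sim \textsf{Lap}(0, 1/\lambda)$, whose distribution is the same under the sampling model and under the posterior, so that credible and confidence intervals coincide. Your version is slightly more explicit about the symmetry of the Laplace distribution needed to pass between $X - \xi$ and $\xi - X$, a step the paper's proof uses implicitly, but this is a presentational difference rather than a different argument.
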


\begin{proof}
    Applying properties of the Laplace distribution, if $X \sim \textsf{Lap}(\xi, 1/\lambda)$, then $\xi - X \sim \textsf{Lap}(0, 1/\lambda)$. Note that this is a pivotal quantity, since its distribution does not depend on any unknown parameters. Let $\ell_{\alpha/2}$ and $\ell_{1-\alpha/2}$ be the $\alpha/2$ and $(1-\alpha/2)$ quantiles of $ \textsf{Lap}(\xi, 1/\lambda)$. Then, 
    \begin{align}
       P[X + \ell_{\alpha/2} \leq \xi \leq X + \ell_{1-\alpha/2}] = P[\ell_{\alpha/2} \leq \xi - X \leq \ell_{1-\alpha/2}] = 1-\alpha
    \end{align}
    and so $[X + \ell_{\alpha/2}, X + \ell_{1-\alpha/2}]$ is a $(1-\alpha)$ confidence interval for $\xi$.

    In the Bayesian setting, if $p(\xi) \propto 1$ then
    \begin{align*}
        p(\xi \mid X) \propto p(X \mid \xi) \, p(\xi) \propto \exp\left\{-\lambda | X - \xi | \right\} = \exp\left\{-\lambda | \xi - X | \right\}.
    \end{align*}
    Thus, the posterior distribution is $(\xi \mid X) \sim \textsf{Lap}(X, 1/\lambda)$. Again applying properties of the Laplace distribution, $(\xi - X \mid X) \sim \textsf{Lap}(0, 1/\lambda)$. Thus,
    \begin{align}
       P[X + \ell_{\alpha/2} \leq \xi \leq X + \ell_{1-\alpha/2} \mid X] = P[\ell_{\alpha/2} \leq \xi - X \leq \ell_{1-\alpha/2} \mid X] = 1-\alpha
    \end{align}
    and so $[X + \ell_{\alpha/2}, X + \ell_{1-\alpha/2}]$ is a $(1-\alpha)$ credible interval. Since this interval is the same as the confidence interval above, this posterior distribution is frequentist matching.
\end{proof}

\section{ADDITIONAL SIMULATION STUDIES} \label{app:add_sims}

\begin{figure}
    \centering
    \includegraphics{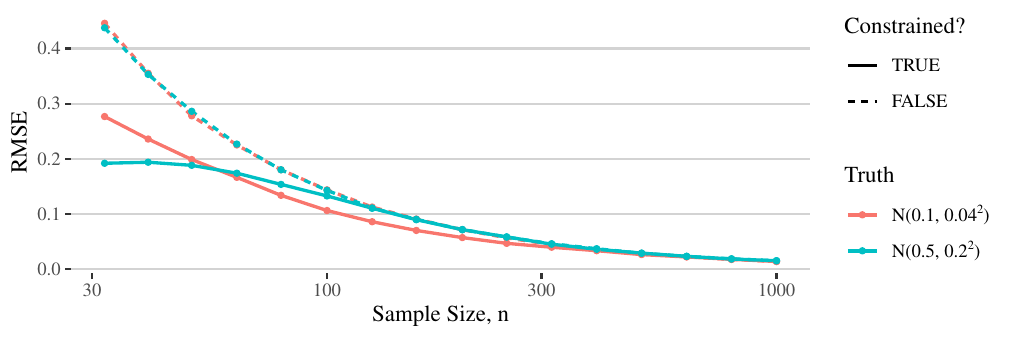}
    \caption{Root mean square error (RMSE) for estimating $\mu$ with the posterior mode for different sample sizes for the simulation in Section 4 of the main text. Results based on 10,000 simulated datasets $Y_i \in [0,1]$ released with $\varepsilon_1 = \varepsilon_2 = 0.1$ and analyzed with prior $p(\mu, \sigma^2) \propto 1$. Analyses with constraints accounted for are solid lines and not accounted for are dashed lines. Data generating model is either $\norm(\mu = 0.1, \sigma^2 = 0.04^2)$ (red) or $\norm(\mu = 0.5, \sigma^2 = 0.2^2)$ (blue). Each Gibbs sampler is run for 20,000 iterations.}
    \label{fig:coverage_RMSE}
\end{figure}

Figure \ref{fig:coverage_RMSE} presents the RMSE from estimating $\mu$ with the posterior mode in the simulation study from Section 4 used to make Figure \ref{fig:coverage} of the main text. The RMSE is uniformly lower in the constrained analysis than in the unconstrained analysis. For the unconstrained analysis, a regression of log RMSE on log $n$ has a slope of $-1$ (with $R^2 > 0.999$), indicating that RMSE decays proportionally to $1/n$

Figure \ref{fig:coverage_2} presents the results of a simulation study identical to that of Figure \ref{fig:coverage} and Figure \ref{fig:coverage_RMSE} of the main text, but with $\varepsilon$ increased by a factor of $10$. There is little practical difference between the constrained and unconstrained analysis at this larger $\varepsilon$. 

\begin{figure}[ht]
    \centering
    \includegraphics{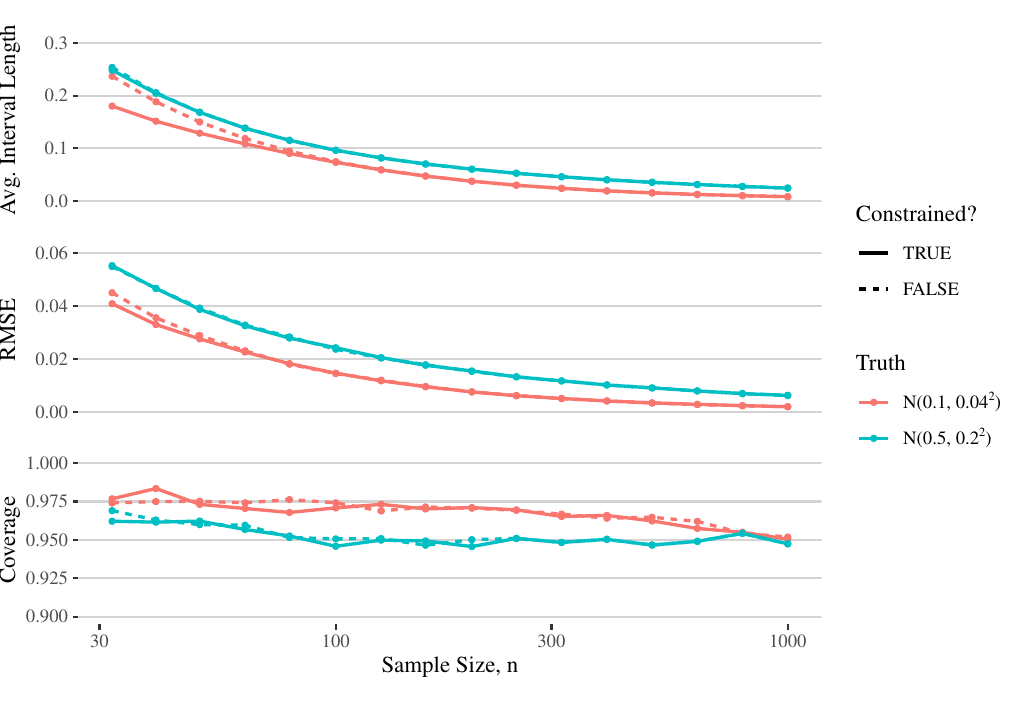}
    \caption{Average 95\% HPD interval length for $\mu$ (top), root mean square error for estimating $\mu$ (middle), and empirical coverage rate of 95\% HPD intervals for $\mu$ (bottom) for different sample sizes. Results based on 10,000 simulated datasets $Y_i \in [0,1]$ released with $\varepsilon_1 = \varepsilon_2 = 1$ and analyzed with prior $p(\mu, \sigma^2) \propto 1$. Analyses with constraints accounted for are solid lines and not accounted for are dashed lines. Data generating model is either $\norm(\mu = 0.1, \sigma^2 = 0.04^2)$ (red) or $\norm(\mu = 0.5, \sigma^2 = 0.2^2)$ (blue). Each Gibbs sampler is run for 20,000 iterations.}
    \label{fig:coverage_2}
\end{figure}

\section{CONSTRAINTS COMPARISON} \label{app:constraints_comp}

In the main text, we focus on accounting for bounds on the data values via constraints on the sample moments (and the model parameters). An alternative method is to enforce the constraint on the data values themselves via the likelihood function instead of via the sample moments
(while still constraining the parameters to the feasible region determined by Theorem \ref{thm:par_bounds}).
In this section, we compare inferences from this alternative method of accounting for the constraints to  inferences from our approach. 

The Gibbs samplers in Section \ref{sec:gibbs}, as well as methods based on sufficient statistics like the samplers of \cite{bernstein2018differentially,bernstein2019differentially}, are unable to easily incorporate constraints on likelihoods. Thus, to make this comparison, we develop and utilize a constraint-aware version of the Gibbs sampler of \cite{ju2022data}. 

\subsection{Enforcing Constraints on the Likelihood} \label{app:Ju_samp}

To begin, we describe the Gibbs sampler proposed by \cite{ju2022data} for the univariate Gaussian setting where  
we do not have constraints on data values. 
At the $t^{\textrm{th}}$ iteration, we sample from $n+2$ full conditionals: those of $\mu$, $\sigma^2$, and each of the confidential $Y_i$. The full conditionals for $\mu$ and $\sigma^2$ are identical to the public setting (and so have the forms in Appendix \ref{app:full_cond}). For each $Y_i$, a proposal is drawn from $Y_i' \sim \norm(\mu_{(t)}, \sigma^2_{(t)})$. Let $\bar{Y}_{\textsf{prev}}$ and $\bar{Y}'$  be the sample mean with and without the proposal, respectively, i.e.,
\begin{align}
    \bar{Y}_{\textsf{prev}} = \frac{1}{n}\left(\sum_{j=1}^{i-1} Y_{i(t)} + Y_{i(t-1)} + \sum_{j=i+1}^n Y_{i(t-1)} \right), \qquad \bar{Y}' = \frac{1}{n}\left(\sum_{j=1}^{i-1} Y_{i(t)} + Y_{i}' + \sum_{j=i+1}^n Y_{i(t-1)} \right).
\end{align}
Let $S^2_{\textsf{prev}}$ and $S^{2\prime}$ be defined similarly for the sample variance. Notably, for a given proposal $Y_i'$, the updated moments $(\bar{Y}', S^{2\prime})$ can be computed from $(\bar{Y}_{\textsf{prev}}, S^2_{\textsf{prev}})$ in constant time. Then, the proposed $Y_i$ is accepted with probability, 
\begin{align}
 r =  \min\left\{1, \exp\left[-\varepsilon_1 n \left(|\bar{Y}^* - \bar{Y}'| - |\bar{Y}^* - \bar{Y}_{\textsf{prev}}|\right) - \varepsilon_2 n \left(|S^{2*} - S^{2\prime}| - |S^{2*} - S^2_{\textsf{prev}}| \right) \right] \right\}.
\end{align}

One can verify empirically that this Gibbs sampler produces a posterior distribution identical to that of our proposed Gibbs sampler without the constraints on the sample moments. Nonetheless, as described in Section \ref{sec:gibbs}, the sampler of \cite{ju2022data} has computational complexity $\mathcal{O}(nT)$, where $T$ is the number of iterations of the Gibbs sampler, whereas our proposed sampler has computational complexity $\mathcal{O}(T)$.

We now describe how to modify the sampler of \cite{ju2022data} to enforce the constraint that each $Y_i \in [a,b]$. Basically, we apply methods similar to those described in Section \ref{sec:bounds}. For $\mu$ and $\sigma^2$, we constrain the full conditionals as in Theorem \ref{thm:par_bounds}.  We also specify a proper prior distribution,  which is a required assumption to utilize the sampler of \cite{ju2022data}. For the illustrations below, we utilize the proper prior $p(\mu, \sigma^2) \propto 1$ over the feasible region of the parameter space.  For each $Y_i$, we enforce the constraint on the likelihood as part of the sampler by rejecting any proposed draws outside of $[a,b]$. 

\subsection{Comparing the Constrained Posterior Distributions}\label{sec:compareconst}

Consider a setting with  sample size $n = 50$ and privacy budget $\varepsilon_1 = \varepsilon_2 = 0.25$  where it is known that $Y_i \in [0,1]$ for all $i$. As a first illustration, we suppose that $\bar{Y}=0.5$ is in the center of the constrained region and $S^2=0.2^2$ is fairly large. Figure \ref{fig:constraints_comp_1} plots the posterior distribution of $(\mu, \sigma^2 \mid \bar{Y}^*, S^{2*})$ for a particular draw of $(\bar{Y}^*, S^{2*})$ from the Laplace mechanism. As shown in the left panel of Figure \ref{fig:constraints_comp_1},  our proposed Gibbs sampler from Section \ref{sec:gibbs} yields a posterior distribution with mode around the released values.  The posterior has considerable uncertainty and places probability density throughout the feasible region. As shown in the right panel of Figure \ref{fig:constraints_comp_1}, the constrained sampler described above in Section \ref{app:Ju_samp} also yields a posterior with its estimate for $\mu$ close to the released $\bar{Y}^*$.  However, its estimate for $\sigma^2$ is considerably smaller than the released $S^{2*}$. The posterior distribution has substantially less uncertainty, concentrating the probability density in the region where $\sigma^2$ is small and $\mu$ is near $\bar{Y}^*$. 

The difference in the two posterior distributions stems from the implications of constructing draws of the sample moments $(\bar{Y}, S^2)$ by individually sampling constrained $Y_i$, as in our adaptation of the sampler of \cite{ju2022data},
or sampling directly from the sample moments' full conditional distributions, as in our proposed approach. 
Recall from Theorem \ref{thm:stat_bounds} that any data bounded in $[a,b]$ has bounded sample variance, $S^2$. Since the modified sampler of \cite{ju2022data} favors values of $Y_i$ that have high density under a constrained normal model, i.e, are far from the bounds $a$ and $b$, the $Y_i$ have a sample variance in the lower part of the feasible range for $S^2$.
As a result, the modified sampler of \cite{ju2022data} favors relatively small values of $\sigma^2$, as observed in Figure \ref{fig:constraints_comp_1}. In contrast, our sampler gives posterior density to values of $S^2$ in the upper part of the feasible range indicated by the bound in Theorem \ref{thm:stat_bounds}, as it is not tied to finding Gaussian models that allow for high density samples of each individual $Y_i$.

\begin{figure}[t]
    \centering
    \includegraphics{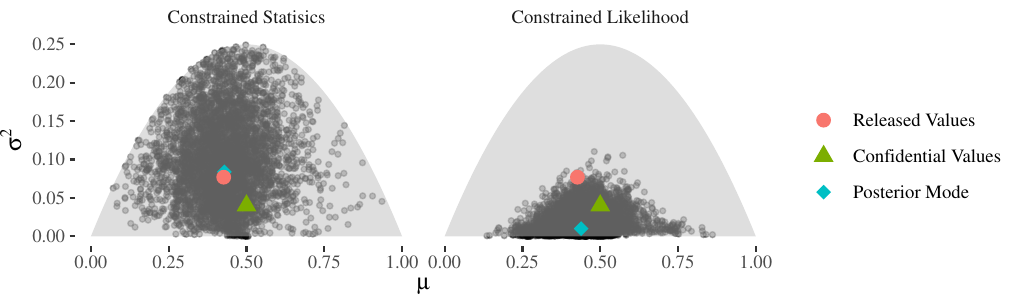}
    \caption{The joint posterior distribution for $(\mu,\sigma^2)$ under a uniform prior with $n = 50$, $\varepsilon_1 = \varepsilon_2 = 0.25$, $a = 0$, and $b = 1$. The unreleased confidential statistics $(\bar{Y} = 0.5, S^2 = 0.2^2)$ are represented by the green triangle, the released statistics $(\bar{Y}^* = 0.43, S^2 = 0.28^2)$ are represented by the red circle, and the analyst's posterior mode is represented by the blue diamond. The left and right panels provide the posterior when the statistics are constrained and when the likelihood is constrained, respectively. The shaded area represents the feasible region for $(\mu, \sigma^2)$ from Theorem \ref{thm:par_bounds}. This plot is based on 25,000 Gibbs iterations, pruned to every 5th draw.}
    \label{fig:constraints_comp_1}
\end{figure}

\begin{figure}[t]
    \centering
    \includegraphics{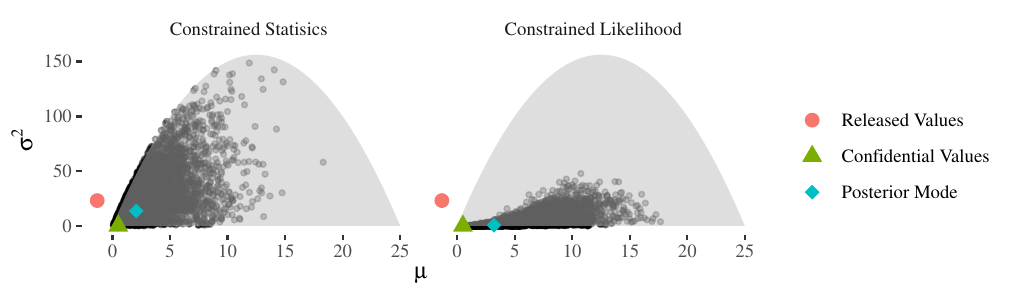}
    \caption{The joint posterior distribution for $(\mu,\sigma^2)$ under a uniform prior with $n = 50$, $\varepsilon_1 = \varepsilon_2 = 0.25$, $a = 0$, and $b = 25$. The unreleased confidential statistics $(\bar{Y} = 0.5, S^2 = 0.2^2)$ are represented by the green triangle, the released statistics $(\bar{Y}^* = -1.3, S^2 = 4.8^2)$ are represented by the red circle, and the analyst's posterior mode is represented by the blue diamond. The left and right panels provide the posterior when the statistics are constrained and when the likelihood is constrained, respectively. The shaded area represents the feasible region for $(\mu, \sigma^2)$ from Theorem \ref{thm:par_bounds}. This plot is based on 25,000 Gibbs iterations, pruned to every 5th draw.}
    \label{fig:constraints_comp_2}
\end{figure}

The tendency of the modified sampler of \cite{ju2022data} to favor values of $(\mu, \sigma^2)$ that generate samples of $Y_i$ relatively far from  $[a, b]$ can have advantages in situations where in fact the data values are far from the bounds.  However, this tendency can be problematic when data are close to the bounds, as we  now illustrate. Consider the same dataset with the same privacy budget, except we suppose the data curator assumes a much wider region for the confidential data: all $Y_i \in [0, 25]$. We keep the observed statistics as $\bar{Y} = 0.5$ and $S^2 = 0.2^2$, which are now very close to the boundary of the feasible region.  

Figure \ref{fig:constraints_comp_2} plots the posterior distributions for the same draw from the Laplace mechanism as in Figure \ref{fig:constraints_comp_1} (with updated scale parameter). Our proposed Gibbs sampler from Section \ref{sec:gibbs} yields a posterior distribution with its mode near the boundary. Since $\bar{Y}^*$ is negative, it focuses the probability density in the part of the feasible region closest to the released value, with substantial uncertainty. 
However, the modified sampler of \cite{ju2022data} 
still yields a posterior distribution with substantial probability density near the center of the feasible region.
The posterior distribution places fairly little probability density in the region where $\mu$ is near zero, despite this being the most likely region for $\bar{Y}$ under the Laplace Mechanism when the released $\bar{Y}^*$ is negative. As discussed above, the need to sample $Y_i$ with high density has the strong effect of favoring $(\mu, \sigma^2)$ values that make it less likely to generate $Y_i$ near the boundaries. Additionally, as discussed in Section \ref{sec:flat_priors}, the constrained uniform prior on $(\mu, \sigma^2)$ induces a marginal prior for $\mu$ with more probability mass in the center of the feasible region, heightening the discrepancy between the observed $(\bar{Y}^*, S^{2*})$ and the posterior inference for $(\mu, \sigma^2)$.

\subsection{Coverage and Error Comparison}

Of course, the previous section considered only two examples to provide intuition. 
We now compare the behaviors under repeated sampling of our proposed sampler from Section \ref{sec:gibbs} and the modified sampler of \cite{ju2022data} described  in Section \ref{app:Ju_samp}. For this comparison, we use the simulation study described in Section \ref{sec:flat_priors}; the results are presented in Figure \ref{fig:coverage_comp}. Once $n$ is large enough (e.g., $n\geq 100$), we see little difference in the average interval lengths or RMSEs of the point estimators for the two methods.  However, until $n$ reaches $500$, the confidence interval coverage rate for the modified sampler of \cite{ju2022data} is notably lower than 95\% when the parameters are near the boundary.  In contrast, the coverage rates for our sampler are nearly 95\% for all values of $n\geq 100$,  with similar average lengths and RMSEs as the modified sampler of \cite{ju2022data}.

When $n<100$, we see sharper differences in the performances of the two methods. 
The biggest difference is with respect to the coverage rates. When the true $\mu$ is near the center, both methods have near 100\% coverage for small $n$. However, when the true $\mu$ is near the boundary, the likelihood constraint produces interval estimates with very poor coverage for small $n$. The coverage is below 70\% for $n < 100$ and near zero when $n$ is very small. Our sampler offers higher coverage rates, although they remain below nominal when $n<100$.  We also see that, for the smaller values of $n$, the modified sampler of \cite{ju2022data} has shorter average interval lengths than our approach, reflecting its tendency to favor small values of $\sigma^2$ as discussed in Section \ref{sec:compareconst}. Additionally, for $\mu$ near the boundary of the parameter space and small $n$, the modified sampler of \cite{ju2022data}  tends to have larger RMSE, reflecting the tendency of the sampler (and the prior) to favor central values of $\mu$ as discussed in Section \ref{sec:compareconst}. This natural tendency is a benefit when $\mu=0.5$, but a deficiency when $\mu = 0.1$.

In practice, of course, an analyst does not know whether $\mu$ is near the center of the region or near the boundary. 
One possibility is to spend some privacy budget to assess which model may be preferred. For example, data curators may provide some differentially private measure of how many observations are close to the bounds, which analysts might be able to use to decide which of the models is likely to be more accurate. Additionally, analysts might  consider utilizing a different default prior distribution, such as one where the marginal distribution for $\mu$ is uniform.
Development of these ideas
is an interesting avenue for future work.

\begin{figure}[t]
    \centering
    \includegraphics{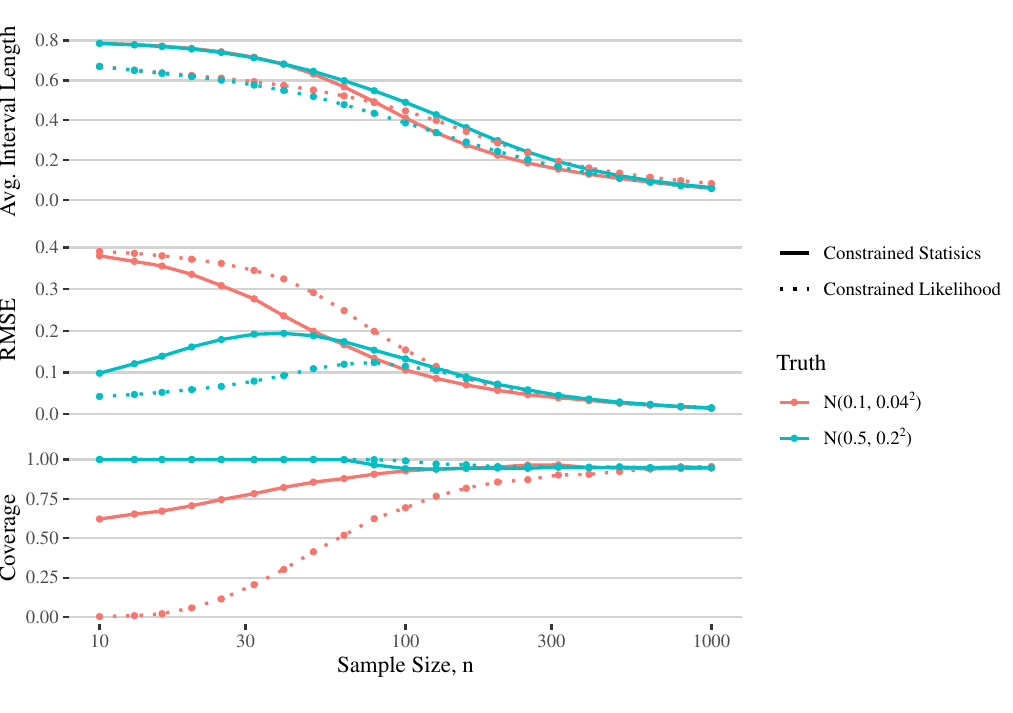}
    \caption{Average 95\% HPD interval length for $\mu$ (top), root mean square error for estimating $\mu$ (middle), and empirical coverage rate of 95\% HPD intervals for $\mu$ (bottom) for different sample sizes. Results based on 10,000 simulated datasets $Y_i \in [0,1]$ released with $\varepsilon_1 = \varepsilon_2 = 1$ and analyzed with prior $p(\mu, \sigma^2) \propto 1$. Analyses with the statistic constraints from Section \ref{sec:bounds} are solid lines and with the likelihood constraints from Section \ref{app:Ju_samp} are dotted lines. Data generating model is either $\norm(\mu = 0.1, \sigma^2 = 0.04^2)$ (red) or $\norm(\mu = 0.5, \sigma^2 = 0.2^2)$ (blue). Each Gibbs sampler is run for 20,000 iterations.}
    \label{fig:coverage_comp}
\end{figure}

\section{CONSTRAINTS IN SIMPLE LINEAR REGRESSION} \label{app:BeSh_bounds}

In this section, we derive constraints on the model proposed by \cite{bernstein2019differentially} in the simple linear regression setting. Letting $\X = [\one_n \hspace{0.08in} \x_1]$, the authors release $\X^\top \X$, $\X^\top \Y$, and $\Y^\top \Y$, which include the following elements.
\begin{align}
    \X^\top \X = \begin{bmatrix}
        \one_n^\top \one_n & \one_n^\top \x_1 \\
        \x_1^\top \one_n & \x_1^\top \x_1
    \end{bmatrix}, \qquad \X^\top \Y = \begin{bmatrix}
        \one_n^\top \Y \\ \x_1^\top \Y
    \end{bmatrix}.
\end{align}
Note that $\one_n^\top \one_n = n$ and as sums of squares, $\x_1^\top \x_1 = \sum_{i=1}^n x_i^2 \geq 0$ and $\Y^\top \Y = \sum_{i=1}^n y_i^2 \geq 0$. The requirement that $x_i, y_i \in [0,1]$ for all $i \in \{1, \ldots, n\}$ implies the following additional constraints.
\begin{enumerate}
    \item Since $x_i \leq 1$ for all $i$, it follows that $\x_1^\top \one_n = \one_n^\top \x_1 = \sum_{i=1}^n x_i \leq n$.
    \item Since $y_i \leq 1$ for all $i$, it follows that $\one_n^\top \Y = \sum_{i=1}^n y_i \leq n$.
    \item Since $x_i,y_i \geq 0$ for all $i$, it follows that $\x_1^\top \Y = \sum_{i=1}^n x_i y_i \geq 0$.
    \item Since $x_i \leq 1$ for all $i$, it follows that $\x_1^\top \Y = \sum_{i=1}^n x_i y_i \leq \sum_{i=1}^n y_i = \one_n^\top \Y$.
    \item Since $y_i \leq 1$ for all $i$, it follows that $\x_1^\top \Y = \sum_{i=1}^n x_i y_i \leq \sum_{i=1}^n x_i = \one_n^\top \x_1$.
    \item Since $0 \leq x_i \leq 1$ for all $i$, $x_i^2 \leq x_i$ and so $\x_1^\top \x_1 = \sum_{i=1}^n x_i^2 \leq \sum_{i=1}^n x_i = \one_n^\top \x_1$.
    \item Since $0 \leq y_i \leq 1$ for all $i$, $y_i^2 \leq y_i$ and so $\Y^\top \Y = \sum_{i=1}^n y_i^2 \leq \sum_{i=1}^n y_i = \one_n^\top \Y$.
\end{enumerate}

The form of the model implies constraints on the parameters $\theta_0$, $\theta_1$, and $\sigma^2$. We can rewrite the model in the following form, where for each $i \in \{1, \ldots, n\}$,
\begin{align}
    y_i \sim \norm(\theta_0 + \theta_1 x_i, \sigma^2).
\end{align}
For the regression coefficients, $\theta_0$ and $\theta_1$, we observe that by monotonicity of expectation, since $y_i \in [0,1]$, it follows that $E[y_i] = \theta_0 + \theta_1 x_i \in [0,1]$. Using this fact, for a given value of $\theta_1$, we obtain the following constraints on $\theta_0$.
\begin{enumerate}
    \item When $\theta_1 \geq 0$, since $x_i \geq 0$ for all $i$, it follows that $\theta_0 \leq \theta_0 + \theta_1 x_i \leq 1$.
    \item When $\theta_1 \geq 0$, since $x_i \leq 1$ for all $i$, it follows that $\theta_0 + \theta_1 \geq \theta_0 + \theta_1 x_i \geq 0$ and so $\theta_0 \geq -\theta_1$.
    \item When $\theta_1 < 0$, since $x_i \geq 0$ for all $i$, it follows that $\theta_0 \geq \theta_0 + \theta_1 x_i \geq 0$.
    \item When $\theta_1 < 0$, since $x_i \leq 1$ for all $i$, it follows that $\theta_0 + \theta_1 \leq \theta_0 + \theta_1 x_i \leq 1$ and so $\theta_0 \leq 1 - \theta_1$.
\end{enumerate}

We summarize these constraints as follows.

\begin{align}
    \begin{cases}
        0 \leq \theta_0 \leq 1 - \theta_1, & \mbox{if } \theta_1 < 0; \\
        -\theta_1 \leq \theta_0 \leq 1, & \mbox{if } \theta_1 \geq 0.
    \end{cases}
\end{align}

Finally, we use a similar argument to Corollary \ref{cor:sigma_sq} to obtain a constraint on $\sigma^2$. By monotonicity of expectation and since $y_i \in [0,1]$ implies $y_i^2 \leq y_i$, for all $i$,
\begin{align}
    \sigma^2 = Var[y_i] = E[y_i^2] - E[y_i]^2 \leq E[y_i] - E[y_i]^2 \leq 1/4.
\end{align}
The final inequality is obtained by observing that the quadratic $f(x) = x-x^2$ is minimized at $x = 1/2$ and has minimum $f(1/2) = 1/4$. Since the variance of $y_i$ must be nonnegative, it follows that $\sigma^2 \in [0, 1/4]$.

\section{COMPUTE DETAILS} \label{app:compute}

In this section, we describe information on the computer resources needed to reproduce the experiments. We note that Figures \ref{fig:bounding}, \ref{fig:bounding_flat}, \ref{fig:predictive}, \ref{fig:TGM}, \ref{fig:constraints_comp_1}, and \ref{fig:constraints_comp_2} can be run on a personal computer in a few seconds and do not require any additional computing resources. Figure \ref{fig:BeSh_comparison} can be run on a personal computer in under five minutes and so also does not require any additional computing resources.

Figures \ref{fig:coverage}, \ref{fig:coverage_RMSE}, and \ref{fig:coverage_2} are based on a large simulation study run on an internal shared compute cluster. The cluster was primarily used for parallelization of computations over a large number of cores; no substantial memory or storage was required. Simulations were run for two values of $\varepsilon$, two data generating distributions, and 21 sample sizes for each the unconstrained and constrained analysis. This yielded 168 input combinations, each of which was ran on a different core. The compute time required to run the Gibbs sampler on $10{,}000$ simulated datasets varied for each input combination, but was generally between 10 and 20 hours.

Figure \ref{fig:coverage_comp} is also based on the large simulation study described above. Simulations from Figures \ref{fig:coverage} and \ref{fig:coverage_RMSE} are reused for the constrained statistics values. The constrained likelihood values were again run on an internal shared compute cluster, this time with 84 input combinations. The compute time required to run the Gibbs sampler on $10{,}000$ simulated datasets increased with $n$, requiring less than 24 hours for $n \leq 50$, 5 days for $n = 500$, and 10 days for $n = 1{,}000$.  The increased compute time is due primarily to repeatedly running the modified sampler of \cite{ju2022data}, which has computational complexity $\mathcal{O}(nT)$.

\section{CODE} \label{app:code}

\texttt{R} code to produce all figures in the main document is provided below. All code, including the code to produce figures in the appendix can be found at \url{https://github.com/zekicankazan/dp_priors}. We import the following \texttt{R} packages.

\small
\begin{verbatim}
library(tidyverse)
library(purrr)
library(ggthemes)
library(rmutil)
library(truncdist)
library(truncnorm)
library(mgcv)
library(MASS)
library(patchwork)
library(Matrix)
library(matrixcalc)
library(HDInterval)
library(MCMCglmm)
\end{verbatim}

\normalsize

\normalsize
\subsection{Gibbs Sampler}
We now provide the two functions necessary to implement the Gibbs sampler described in this work. The following function draws a sample from the Truncated Gamma Mixture distribution.

\small
\begin{verbatim}
rTGM <- function(alpha, beta, lambda, tau, upper_bound = Inf){
  # Function to draw a random variable from a Truncated Gamma Mixture Distribution
  if(is.infinite(upper_bound) & tau <= 0){
    return(rgamma(1, shape = alpha, rate = beta + lambda))
  }
  else if(!is.infinite(upper_bound) & tau <= 0){
    return(rtrunc(1, "gamma", b = upper_bound, shape = alpha, rate = beta + lambda))
  }
  else if(tau >= upper_bound){
    return(rtrunc(1, "gamma", b = upper_bound, shape = alpha, rate = beta - lambda))
  }
  else{ #w_1, w_2 s.t. pi_1 = w_1/(w_1+w_2)
    log_w_1 <- -lambda*tau + pgamma(tau, alpha, beta - lambda, log.p = T) +
      lgamma(alpha) - alpha*log(beta - lambda)
    if(!is.infinite(upper_bound)){
      log_w_2 <- lambda*tau + log(diff(exp(pgamma(c(upper_bound, tau), alpha, 
                                                  beta + lambda, lower=F, log.p = T)))) +
        lgamma(alpha) - alpha*log(beta + lambda)
    }
    else{
      log_w_2 <- lambda*tau + pgamma(tau, alpha, beta + lambda, lower=F, log.p = T) +
        lgamma(alpha) - alpha*log(beta + lambda)
    }
    # Sample 1 with prob pi_1 = w_1/(w_1+w_2)
    ratio <- exp(log_w_1 - log_w_2) # ratio = w_1/w_2
    if(is.infinite(ratio)){ samp <- 1}
    else{ samp <- sample(c(1,2), 1, prob = c(ratio, 1))}
    if(samp == 1){
      return(rtrunc(1, "gamma", b = tau, shape = alpha, rate = beta - lambda))
    }
    else{ # samp == 2
      return(rtrunc(1, "gamma", a = tau, b = upper_bound, shape = alpha, rate = beta + lambda))
    }
  }
}
\end{verbatim}

\normalsize
The following function implements all combinations of the Gibbs sampler discussed in Sections \ref{sec:gibbs}, \ref{sec:Gauss_bounds}, and \ref{sec:flat_priors}.

\small
\begin{verbatim}
Gibbs <- function(nsims, n, priv_Y_bar, priv_S_sq, eps_1, eps_2, flat_prior = T, 
                  bounded = F, mu_0 = NA, kappa_0 = NA, nu_0 = NA, sigma_0_sq = NA){
  # Function to run our proposed Gibbs Sampler
  
  # Create vectors to store samples
  mu_samp <- rep(NA, nsims); sigma_sq_samp <- rep(NA, nsims); Y_bar_samp <- rep(NA, nsims); 
  S_sq_samp <- rep(NA, nsims); omega_sq_samp <- rep(NA, nsims)
  
  # Initialize parameters
  sigma_sq <- S_sq <- case_when(priv_S_sq <= 0 ~ 1e-4, priv_S_sq >= 1/4 ~ 1/4, T ~ priv_S_sq)
  Y_bar <- if_else(priv_Y_bar < 0, 0, min(priv_Y_bar, 1)); omega_sq_inv <- eps_1^2*n^2/2
  
  for(t in 1:nsims){
    # Sample mu from full conditional
    if(flat_prior & !bounded){
      mu <- rnorm(1, mean = Y_bar, sd = sqrt(sigma_sq/n))
    }
    else if(!flat_prior & !bounded){
      mu <- rnorm(1, mean = (n*Y_bar + kappa_0*mu_0)/(n+kappa_0), sd = sqrt(sigma_sq/(n+kappa_0)))
    }
    else if(sigma_sq == 1/4){
      mu <- 1/2
    }
    else if(flat_prior & bounded){
      mu <- rtruncnorm(1, a = 1/2 - sqrt(1/4 - sigma_sq), b = 1/2 + sqrt(1/4 - sigma_sq), 
                       mean = Y_bar, sd = sqrt(sigma_sq/n))
    }
    else if(!flat_prior & bounded){
      mu <- rtruncnorm(1, a = 1/2 - sqrt(1/4 - sigma_sq), b = 1/2 + sqrt(1/4 - sigma_sq), 
                       mean = (n*Y_bar + kappa_0*mu_0)/(n+kappa_0), sd = sqrt(sigma_sq/(n+kappa_0)))
    }
    mu_samp[t] <- mu
    
    # Sample sigma^2 from full conditional
    if(flat_prior & !bounded){
      sigma_sq <- 1/rtrunc(1, "gamma", a = n/(n-1)*2*eps_2, shape = (n-2)/2, 
                           rate = ((n-1)*S_sq + n*(Y_bar - mu)^2)/2)
    }
    else if(!flat_prior & !bounded){
      sigma_sq <- 1/rtrunc(1, "gamma", a = n/(n-1)*2*eps_2, shape = (n+nu_0)/2, 
                           rate = (nu_0*sigma_0_sq + (n-1)*S_sq + n*(Y_bar - mu)^2)/2)
    }
    else if(flat_prior & bounded){
      sigma_sq <- 1/rtrunc(1, "gamma", a = max(1/(mu*(1-mu)), 2*eps_2*n/(n-1)), 
                           shape = (n-2)/2, rate = ((n-1)*S_sq + n*(Y_bar - mu)^2)/2)
    }
    else if(!flat_prior & bounded){
      sigma_sq <- 1/rtrunc(1, "gamma", a = max(1/(mu*(1-mu)), 2*eps_2*n/(n-1)), shape = (n+nu_0)/2, 
                           rate = (nu_0*sigma_0_sq + (n-1)*S_sq + n*(Y_bar - mu)^2)/2)
    }
    sigma_sq_samp[t] <- sigma_sq
    
    # Sample bar{Y} from full conditional
    if(!bounded){ 
      Y_bar <- rnorm(1, sd = sqrt(1/(omega_sq_inv + n/sigma_sq)),
                     mean = (priv_Y_bar*omega_sq_inv + n*mu/sigma_sq)/(omega_sq_inv + n/sigma_sq))
    }
    else{
      Y_bar <- rtruncnorm(1, a = 1/2 - sqrt(1/4 - (n-1)/n*S_sq), b = 1/2 + sqrt(1/4 - (n-1)/n*S_sq), 
                          mean = (priv_Y_bar*omega_sq_inv + n*mu/sigma_sq)/(omega_sq_inv + n/sigma_sq),
                          sd = sqrt(1/(omega_sq_inv + n/sigma_sq)))
    }
    Y_bar_samp[t] <- Y_bar
    
    # Sample omega^2 from full conditional
    omega_sq_inv <- rig(1, mean = eps_1*n/abs(priv_Y_bar - Y_bar), scale = 1/(eps_1^2*n^2))
    omega_sq_samp[t] <- omega_sq_inv
    
    # Sample S^2 from full conditional
    if(!bounded){
      S_sq <- rTGM(alpha = (n-1)/2, beta = (n-1)/(2*sigma_sq), lambda = eps_2*n,  tau = priv_S_sq)
    }
    else{
      S_sq <- rTGM(alpha = (n-1)/2, beta = (n-1)/(2*sigma_sq), lambda = eps_2*n, 
                   tau = priv_S_sq, upper_bound = n/(n-1)*Y_bar*(1-Y_bar))
    }
    S_sq_samp[t] <- S_sq
  }
  
  df <- data.frame(t = 1:nsims, mu = mu_samp, sigma_sq = sigma_sq_samp, 
                   Y_bar = Y_bar_samp, omega_sq = omega_sq_samp, S_sq = S_sq_samp,
                   bounded = bounded, flat_prior = flat_prior)
  return(df)
}
\end{verbatim}

\normalsize
\subsection{Figures \ref{fig:bounding} and \ref{fig:bounding_flat}}

The following code uses the above functions to produce Figures \ref{fig:bounding} and \ref{fig:bounding_flat}.

\small
\begin{verbatim}
set.seed(6)
a <- 0; b <- 100
true_Y_bar <- (32.08 - a)/(b-a); true_S_sq <- (16.98/(b-a))^2; nsims <- 5e3
eps_1 <- 0.25; eps_2 <- 0.25; n <- 43
priv_Y_bar <- rlaplace(1, true_Y_bar, 1/(eps_1*n)); priv_S_sq <- rlaplace(1, true_S_sq, 1/(eps_2*n)) 

mu_0 <- (12.5 - a)/(b-a); sigma_0_sq <- (3.8/(b-a))^2; kappa_0 <- 1; nu_0 <- 1

# Run each of the four Gibbs Samplers
df_bd <- Gibbs(nsims, n, priv_Y_bar, priv_S_sq, eps_1, eps_2, bounded = T,flat_prior = F, 
               mu_0 = mu_0, kappa_0 = kappa_0, nu_0 = nu_0, sigma_0_sq = sigma_0_sq)
df_ub <- Gibbs(nsims, n, priv_Y_bar, priv_S_sq, eps_1, eps_2, bounded = F, flat_prior = F, 
               mu_0 = mu_0, kappa_0 = kappa_0, nu_0 = nu_0, sigma_0_sq = sigma_0_sq)
df_bd_flat <- Gibbs(nsims, n, priv_Y_bar, priv_S_sq, eps_1, eps_2, bounded = T, flat_prior = T)
df_ub_flat <- Gibbs(nsims, n, priv_Y_bar, priv_S_sq, eps_1, eps_2, bounded = F, flat_prior = T)

# Create Shaded Region
df_fill <- mutate(data.frame(mu = b*seq(0,1,0.005), sigma_sq = 0), ymin = 0, ymax = mu*(b-mu))

# Compute the Four Shapes for Figure 1
add_points <- data.frame(name = "Posterior Mode", bounded = c(T, F),
                         mu = c(posterior.mode(df_bd$mu, adjust = 1), 
                                posterior.mode(df_ub$mu, adjust = 1)),
                         sigma_sq = c(posterior.mode(df_bd$sigma_sq, adjust = 1), 
                                      posterior.mode(df_ub$sigma_sq, adjust = 1))) %>%
  rbind(data.frame(name = "Prior Values", bounded = c(T,F), mu = mu_0, sigma_sq = sigma_0_sq)) %>%
  rbind(data.frame(name = "Confidential Values", bounded = c(T,F), 
                   mu = true_Y_bar, sigma_sq = true_S_sq)) %>%
  rbind(data.frame(name = "Released Values", bounded = c(T,F), 
                   mu = priv_Y_bar, sigma_sq = priv_S_sq)) %>%
  mutate(bounded = factor(bounded, c(F, T), c("Unconstrained", "Constrained")),
         name = factor(name, levels = c("Released Values", "Confidential Values",
                                        "Posterior Mode", "Prior Values")),
         mu = mu*b, sigma_sq = sigma_sq*b^2)

# Compute the Three Shapes for Figure 3
add_points_flat <- data.frame(name = "Posterior Mode", bounded = c(T, F),
                              mu = c(posterior.mode(df_bd_flat$mu, adjust = 1), 
                                     posterior.mode(df_ub_flat$mu, adjust = 1)),
                              sigma_sq = c(posterior.mode(df_bd_flat$sigma_sq, adjust = 1), 
                                           posterior.mode(df_ub_flat$sigma_sq, adjust = 1))) %>%
  rbind(data.frame(name = "Confidential Values", bounded = c(T,F), 
                   mu = true_Y_bar, sigma_sq = true_S_sq)) %>%
  rbind(data.frame(name = "Released Values", bounded = c(T,F), 
                   mu = priv_Y_bar, sigma_sq = priv_S_sq)) %>%
  mutate(bounded = factor(bounded, c(F, T), c("Unconstrained", "Constrained")),
         name = factor(name, levels = c("Released Values", "Confidential Values",
                                        "Posterior Mode", "Prior Values")),
         mu = mu*(b-a)+a, sigma_sq = sigma_sq*(b-a)^2)

# Create Figure 1
df_bd %>%
  rbind(df_ub) %>%
  mutate(bounded = factor(bounded, c(F, T), c("Unconstrained", "Constrained")),
         mu = mu*b, sigma_sq = sigma_sq*b^2) %>%
  ggplot(aes(x = mu, y = sigma_sq)) +
  geom_point(alpha = 0.3, size = 1) + 
  geom_ribbon(data = df_fill, aes(ymin = ymin, ymax = ymax), fill = "gray", alpha = 0.5) +
  geom_point(aes(x = mu, y = sigma_sq, shape = name, color = name), data = add_points, size = 3) +
  facet_grid(rows = vars(bounded)) +
  scale_shape_manual(values=c(16,17,18,15)) +
  labs(x = expression("Average Blood Lead ("*mu*"g/dL)"), shape = "", color = "",
       y = expression("Blood Lead Variance ("*mu*"g"^2*"/dL"^2*")")) +
  theme_tufte() +
  theme(plot.margin=grid::unit(c(0,0,0,0), "mm"), legend.position="bottom",
        legend.margin=margin(0,0,0,0), legend.box.margin=margin(-5,0,0,0),
        legend.spacing.y = unit(-0.5, "mm")) +
  guides(colour=guide_legend(ncol=2,nrow=2,byrow=TRUE), shape = guide_legend(ncol=2,nrow=2,byrow=TRUE))

# Create Figure 3
df_bd_flat %>%
  rbind(df_ub_flat) %>%
  mutate(bounded = factor(bounded, c(F, T), c("Unconstrained", "Constrained")),
         mu = mu*b, sigma_sq = sigma_sq*b^2) %>%
  ggplot(aes(x = mu, y = sigma_sq)) +
  geom_point(alpha = 0.3, size = 1) + 
  geom_ribbon(data = df_fill, aes(ymin = ymin, ymax = ymax), fill = "gray", alpha = 0.5) +
  geom_point(aes(x = mu, y = sigma_sq, shape = name, color = name), data = add_points_flat, size = 3) +
  facet_grid(rows = vars(bounded)) +
  scale_y_continuous(limits = c(NA, 10000)) +
  scale_shape_manual(values=c(16,17,18,15)) +
  scale_color_manual(values = c("#F8766D", "#7CAE00", "#00BFC4")) +
  labs(x = expression("Average Blood Lead ("*mu*"g/dL)"), shape = "", color = "",
       y = expression("Blood Lead Variance ("*mu*"g"^2*"/dL"^2*")")) +
  theme_tufte() +
  theme(plot.margin=grid::unit(c(0,0,0,0), "mm"), legend.position="bottom",
        legend.margin=margin(0,0,0,0), legend.box.margin=margin(-5,0,0,0),
        legend.spacing.y = unit(-0.5, "mm")) +
  guides(colour=guide_legend(ncol=2,nrow=2,byrow=TRUE), shape = guide_legend(ncol=2,nrow=2,byrow=TRUE))
\end{verbatim}

\normalsize
\subsection{Figure \ref{fig:coverage}}

The following code uses the above functions to produce Figure \ref{fig:coverage}. The first section of code, which we run on a shared compute cluster, creates a dataframe with the relevant quantities for desired combinations of $n$, $\varepsilon$, $(\mu, \sigma^2)$, and whether the constraint is enforced.

{\small
\begin{verbatim}
n_vec <- round(10^seq(1,3, 0.1)); eps_vec <- c(0.2, 2); 
bdd_vec <- c(FALSE, TRUE); mu_vec <- c(0.1, 0.5)
grid <- expand.grid(n = n_vec, eps = eps_vec, bdd = bdd_vec, mu = mu_vec)
reps <- 10000; nsims <- 20000

len <- nrow(grid)
plot_df <- mutate(grid, coverage = NA, CI_len = NA, RMSE = NA)
for(num in 1:len){
  n <- grid$n[num]; eps <- grid$eps[num]; bounded <- grid$bdd[num]; mu <- grid$mu[num]
  sigma_sq <- (mu*0.4)^2; eps_1 <- eps_2 <- eps/2
  
  df <- data.frame(posterior_mode = rep(NA, reps), lower = NA, upper = NA)
  set.seed(2024)
  for(i in 1:reps){
    Y <- rtruncnorm(n, a = 0, b = 1, mean = mu, sd = sqrt(sigma_sq))
    priv_Y_bar <- rlaplace(1, mean(Y), 1/(eps_1*n))
    priv_S_sq <- rlaplace(1, var(Y), 1/(eps_2*n))
    
    out <- Gibbs(nsims, n, priv_Y_bar, priv_S_sq, eps_1, eps_2, flat_prior = T, 
                bounded)
    
    df$posterior_mode[i] <- posterior.mode(out$mu, adjust = 1)
    df[i,c("lower", "upper")] <- hdi(out$mu, credMass = 0.95)
  }
  plot_df$CI_len[num] <- mean(df$upper - df$lower)
  plot_df$coverage[num] <- mean(mu <= df$upper & mu >= df$lower)
  plot_df$RMSE[num] <- sqrt(mean((df$posterior_mode - mu)^2))
}
\end{verbatim}
}

The next section of code uses the computed quantities to create the figure.

{\small
\begin{verbatim}
plot_eps <- 0.2

# Create Top Panel
p1 <- plot_df %>%
  filter(n >= 30, eps == plot_eps) %>%
  mutate(bdd = factor(bdd, c(T, F))) %>%
  ggplot(aes(x = n, y = CI_len, color = factor(mu), linetype = bdd)) +
  geom_line(linewidth = 0.75) + geom_point(size = 1) +
  labs(color = "Truth", linetype = "Constrained?", x = "Sample Size, n", y = "Avg. Interval Length") +
  scale_x_log10() +
  scale_color_discrete(labels = c(expression("N(0.1, 0.04"^2*")"), expression("N(0.5, 0.2"^2*")  "))) +
  theme_tufte()  +
  theme(axis.title.x=element_blank(), axis.text.x=element_blank(), axis.ticks.x=element_blank())

# Create Bottom Panel
p2 <- plot_df %>%
  filter(n >= 30, eps == plot_eps) %>%
  mutate(bdd = factor(bdd, c(T, F))) %>%
  ggplot(aes(x = n, y = coverage, color = factor(mu), linetype = bdd)) +
  geom_line(linewidth = 0.75) + geom_point(size = 1) +
  labs(color = "Truth", linetype = "Constrained?", x = "Sample Size, n", y = "Coverage") +
  scale_x_log10() +
  scale_color_discrete(labels = c(expression("N(0.1, 0.04"^2*")"), expression("N(0.5, 0.2"^2*")  "))) +
  theme_tufte()

p1/p2 + plot_layout(guides = "collect") & 
    theme(plot.margin=grid::unit(c(1,0.1,4,0), "mm"), legend.position="bottom",
          legend.margin=margin(0,0,0,0), legend.box.margin=margin(-5,0,-25,0),
          legend.box="vertical", legend.spacing.y = unit(0, "mm"),
          panel.grid.major.y = element_line(color = "lightgray", linewidth = 0.5))
\end{verbatim}
}

\subsection{Figure \ref{fig:predictive}}

The following code uses the above functions to produce Figure \ref{fig:predictive}.

{\small
\begin{verbatim}
set.seed(6)
a <- 0; b <- 100; 
true_Y_bar <- (32.08 - a)/(b-a); true_S_sq <- (16.98/(b-a))^2; nsims <- 1e5
eps_1 <- 0.25; eps_2 <- 0.25; n <- 43; 
priv_Y_bar <- rlaplace(1, true_Y_bar, 1/(eps_1*n)); 
priv_S_sq <- rlaplace(1, true_S_sq, 1/(eps_2*n)) 

df_bd_flat <- Gibbs(nsims, n, priv_Y_bar, priv_S_sq, eps_1, eps_2, bounded = T, flat_prior = T)
df_ub_flat <- Gibbs(nsims, n, priv_Y_bar, priv_S_sq, eps_1, eps_2, bounded = F, flat_prior = T)

df_fill_flat <- mutate(data.frame(preds = seq(a,b)), ymin = 0, ymax = Inf)

preds_ub <- rnorm(nsims, df_ub_flat$mu, sqrt(df_ub_flat$sigma_sq))
preds_bd <- rtruncnorm(nsims, a = 0, b = 1, df_bd_flat$mu, sqrt(df_bd_flat$sigma_sq))

add_lines <- data.frame(name = "Posterior Mode", bounded = c(T, F), 
                        mu = c(posterior.mode(df_bd_flat$mu, adjust = 1), 
                               posterior.mode(df_ub_flat$mu, adjust = 1))) %>%
  rbind(data.frame(name = "Confidential Value", bounded = c(T,F), mu = true_Y_bar)) %>%
  rbind(data.frame(name = "Released Value", bounded = c(T,F), mu = priv_Y_bar)) %>%
  mutate(bounded = factor(bounded, c(F, T), c("Unconstrained", "Constrained")),
         name = factor(name, levels = c("Released Value", "Confidential Value", "Posterior Mode")),
         mu = mu*b)

data.frame(preds = c(preds_ub, preds_bd),
           bounded = c(rep("Unconstrained", nsims), rep("Constrained", nsims))) %>%
  mutate(bounded = factor(bounded, c("Unconstrained", "Constrained")), preds = preds*(b-a) + a) %>%
  ggplot(aes(x = preds)) + 
  geom_histogram(binwidth = 0.1*b) +
  geom_ribbon(data = df_fill_flat, aes(ymin = ymin, ymax = ymax), fill = "gray", alpha = 0.5) +
  facet_grid(bounded~ .) +
  scale_x_continuous(limits = c(-1*b, 2*b)) +
  labs(x = expression("New Observation's Blood Lead ("*mu*"g/dL)"),
       color = "", linetype = "", y = "Count") +
  theme_tufte() +
  theme(plot.margin=grid::unit(c(0,0,0,0), "mm"), legend.position="bottom",
        legend.margin=margin(0,0,0,0), legend.box.margin=margin(-5,0,0,0)) +
  guides(colour=guide_legend(ncol=2,nrow=2,byrow=TRUE),
         linetype=guide_legend(ncol=2,nrow=2,byrow=TRUE))
\end{verbatim}
}

\subsection{Figure \ref{fig:BeSh_comparison}}

Finally, we provide code to produce Figure \ref{fig:BeSh_comparison}. The following function implements the method proposed by \cite{bernstein2019differentially} with and without constraints enforced.

{\small
\begin{verbatim}
Gibbs_BeSh <- function(Z, n, d, eps, Delta, eta, xi, mu_0, Lambda_0, a_0, b_0, nsims, bounded = F){
  # Implementation of Bernstein & Sheldon (2018)'s Gibbs-SS-Noisy Method
  
  #### Z = px1, Delta = px1, eta = dxd, xi = d*(d+1)/2 x d*(d+1)/2, mu = dx1, Lambda_0 = dxd
  #### All others constants
  p <- length(Z) # Number of releases
  
  # Create vectors to store samples
  theta_samp <- matrix(NA, nsims, d); sigma_sq_samp <- rep(NA, nsims);
  omega_sq_inv_samp <- matrix(NA, nsims, p); S_samp <- matrix(NA, nsims, p)
  
  # Initialize Parameters
  theta <- mu_0; sigma_sq <- b_0; omega_sq_inv <- (eps/p)^2/Delta^2/2
  
  for(t in 1:nsims){
    # Sample s from full Conditional
    ### Compute mu_t and Sigma_t, hardcoded for d = 2
    if(!bounded){
      mu_t <- c(eta[lower.tri(eta,diag=T)], #Expectation for XTX
                theta %*% eta, #Expectation for XTY
                sigma_sq + theta %*% eta %*% theta) #Expectation for YTY
      active <- c(1:2, 4)
    }
    else if(bounded){
      mu_t <- c(eta[,2], #Expectation for XTX
                theta %*% eta, #Expectation for XTY
                sigma_sq + theta %*% eta %*% theta) #Expectation for YTY
      active <- 3:4
    }
    Sigma_11 <- xi[active, active]
    Sigma_12 <- matrix(matrix(xi,8,2) %*% theta,4,2)[active,]
    Sigma_13 <- (xi %*% matrix(theta %*% t(theta)))[active,]
    Sigma_22 <- sigma_sq*eta + matrix(xi %*% matrix(theta %*% t(theta)),ncol = 2)
    Sigma_23 <- 2*sigma_sq*eta %*% theta  + 
      matrix(xi,2,8,T) %*% matrix(matrix(theta %*% t(theta)) %*% theta)
    Sigma_33 <- 2*sigma_sq^2 + 4*sigma_sq*theta %*% eta %*% theta +
      matrix(xi,1) %*% matrix(matrix(theta %*% t(theta)) %*% matrix(theta %*% t(theta),1))
    Sigma_t <- nearPD(rbind(cbind(Sigma_11, Sigma_12, Sigma_13),
                            cbind(t(Sigma_12), Sigma_22, Sigma_23),
                            c(Sigma_13, t(Sigma_23), Sigma_33)))$mat
    
    ### Sample S
    Sigma_3 <- solve(solve(Sigma_t)/n + diag(omega_sq_inv))
    mu_3 <- Sigma_3 %*% (solve(Sigma_t) %*% mu_t + diag(omega_sq_inv) %*% Z)
    if(!bounded){
      S <- mvrnorm(1, mu = mu_3, Sigma = Sigma_3)
    }
    else if(bounded){
      flag <- F
      while(!flag){
        S <- mvrnorm(1, mu = mu_3, Sigma = Sigma_3)
        B <- matrix(c(n, S[c(1,3,1,2,4,3:5)]),3)
        flag <- (S[1] >= 0 & S[1] <= n) & (S[2] >= 0 & S[2] <= S[1]) &
          (S[3] >= 0 & S[3] <= n) & (S[4] >= 0 & S[4] <= S[1] & S[4] <= S[3]) &
          (S[5] >= 0 & S[5] <= S[3]) & is.positive.definite(B)
      }}
    S_samp[t,] <- as.vector(S)
    
    # Sample theta, sigma_sq from full Conditional
    ### Obtain current sufficient statistics
    XTX <- matrix(NA,d,d); 
    if(!bounded){
      XTX[lower.tri(t(XTX), diag = T)] <- S[1:(d*(d+1)/2)]
      XTY <- matrix(S[(d*(d+1)/2 + 1):(d*(d+1)/2 + d)],d,1)
    }
    else if(bounded){
      XTX[lower.tri(t(XTX), diag = T)] <- c(n, S[1:(d*(d+1)/2 - 1)])
      XTY <- matrix(S[(d*(d+1)/2):(d*(d+1)/2 + d - 1)],d,1)
    }
    XTX[upper.tri(XTX)] <- (t(XTX))[upper.tri(XTX)]
    YTY <- S[length(S)]
    
    if(!bounded){
      B <- as.matrix(nearPD(rbind(cbind(XTX, XTY), cbind(t(XTY), YTY)))$mat)
      XTX <- B[1:d, 1:d]; XTY <- matrix(B[1:d, d+1]); YTY <- B[d+1, d+1]
    }
    
    ### Compute hyperparameters
    mu_n <- solve(XTX + Lambda_0) %*% (XTY + Lambda_0 %*% mu_0)
    Lambda_n <- XTX + Lambda_0
    a_n <- a_0 + n/2
    b_n <- b_0 + (YTY + t(mu_0) %*% Lambda_0 %*% mu_0 - as.numeric(t(mu_n) %*% Lambda_n %*% mu_n))/2
    
    ### Sample
    if(!bounded){
      sigma_sq <- 1/rgamma(1, shape = a_n,  rate = b_n)
      theta <- mvrnorm(1, mu = mu_n, Sigma = solve(nearPD(Lambda_n)$mat)*sigma_sq)
    }
    else if(bounded){
      flag = F
      while(!flag){
        sigma_sq <- 1/rtrunc(1, "gamma", b = 1/4, shape = a_n,  rate = b_n)
        theta <- mvrnorm(1, mu = mu_n, Sigma = solve(nearPD(Lambda_n)$mat)*sigma_sq)
        flag <- (theta[1] >= 0 & theta[1] <= 1) | (theta[1] > 1 & theta[2] <= 1 - theta[1]) |
          (theta[1] < 0 & theta[2] >= -theta[1])
      }}
    
    ### Save samples
    sigma_sq_samp[t] <- sigma_sq
    theta_samp[t,] <- theta
    
    # Sample omega_sq_inv from full conditional
    omega_sq_inv <- rig(p, mean = as.vector(eps/p/Delta/abs(Z - S)), scale = Delta^2/(eps/p)^2)
    omega_sq_inv_samp[t,] <- omega_sq_inv
  }
  return(list(t = 1:t, theta = theta_samp, sigma_sq = sigma_sq_samp, 
              omega_sq_inv = omega_sq_inv_samp, S = S_samp))
}
\end{verbatim}
}

The following code uses the above function to produce the figure.

{\small
\begin{verbatim}
# Import Data
x20 <- read.csv("https://people.sc.fsu.edu/~jburkardt/datasets/regression/x20.txt", 
                sep = "", header = F, comment.char = "#",
                col.names = c("ID", "Intercept", "pop_size", "births", "wine", "liquor", "death")) %>%
  tail(-9) %>%
  remove_rownames() %>%
  column_to_rownames(var = "ID") %>%
  mutate(across(.fns = as.numeric))

# Compute X and Y
n <- nrow(x20)
Y <- (x20$death - min(x20$death))/(max(x20$death) - min(x20$death))
drink <- x20$wine + x20$liquor; x1 <- (drink - min(drink))/(max(drink) - min(drink))
X <- matrix(c(x20$Intercept, x1), ncol = 2)

# Compute Noisy Statistics
set.seed(10)
eps <- 0.1*11; n <- nrow(X); d <- ncol(X); p <- d*(d+1)/2 + d + 1
Delta <- rep(1, p); nsims <- 5000
mu_0 <- c(1,0); Lambda_0 <- diag(c(.25, .25)); a_0 <- 20; b_0 <- 0.5

XTX_true <- t(X) %*% X; XTY_true <- t(X) %*% Y; YTY_true <- t(Y) %*% Y
S_true <- c(XTX_true[lower.tri(XTX_true,diag=T)], XTY_true, YTY_true)
Z <- rlaplace(p, S_true, Delta/(eps/p))

X_4th_priv <- rlaplace(n = 5, m = map_dbl(.x = 0:4, .f = function(i){sum(X[,2]^i)}), s = 1/(eps/11))

# Compute Quantities from Bernstein & Sheldon (2018)
eta <- matrix(X_4th_priv[c(1,2,2,3)],2,2)/n
eta_ijkl <- matrix(X_4th_priv[c(1,2,2,3,2,3,3,4,2,3,3,4,3,4,4,5)], 4, 4)/n
xi <- eta_ijkl - as.vector(eta) %*% t(as.vector(eta))

# Run Both Algorithms
l = Gibbs_BeSh(Z, n, d, eps*6/11, Delta, eta, xi, mu_0, Lambda_0, a_0, b_0, nsims)
l_bd = Gibbs_BeSh(Z[2:length(Z)], n, d, eps*6/11, Delta[2:length(Z)], eta, xi, 
                  mu_0, Lambda_0, a_0, b_0, nsims, bounded = T)

# Create Shaded Region for Left Panels
df_fill_dat <- expand.grid(YT1 = seq(0, 42, 1), YTY = seq(0, 42, 1)) %>%
  mutate(ymax = YT1*(YT1 >= 0 & YT1 <= n), ymin = 0)

# Find Infeasible Points in Left Panels
df_infeasible_dat <- as.data.frame(l$S[,2:6]) %>%
  mutate(bdd = F) %>% 
  rename(X1T1 = V1, X1TX1 = V2, YT1 = V3, X1TY = V4, YTY = V5) %>%
  mutate(bdd = factor(bdd, c(F, T), c("Unconstrained", "Constrained"))) %>%
  filter(!(0 < YTY & YTY < YT1 & YT1 < n))

# Create Left Panels
p1 <- rbind(mutate(as.data.frame(l$S[,2:6]), bdd = F),
            mutate(as.data.frame(l_bd$S), bdd = T)) %>%
  rename(X1T1 = V1, X1TX1 = V2, YT1 = V3, X1TY = V4, YTY = V5) %>%
  mutate(bdd = factor(bdd, c(F, T), c("Unconstrained", "Constrained"))) %>%
  ggplot(aes(x = YT1, y = YTY)) + 
  geom_point(size = 1, alpha = 0.3) +
  geom_point(data = df_infeasible_dat, aes(x = YT1, y = YTY), color = "#F8766D",
             size = 1, alpha = 0.5) +
  scale_y_continuous(breaks = seq(0, 90, 30)) +
  facet_grid(bdd~.) +
  geom_ribbon(data = df_fill_dat, aes(ymin = ymin, ymax = ymax), fill = "gray", alpha = 0.5) +
  labs(x = expression("Y"^"T"*"1"), y = expression("Y"^"T"*"Y")) +
  theme_tufte() +
  theme(strip.background = element_blank(), strip.text.y = element_blank())

# Create Shaded Region for Right Panels
df_fill_pars <- expand.grid(Slope = seq(-4, 4, 0.04), Intercept = seq(-5,5, 0.05)) %>%
  mutate(ymax = 1*(Slope >= 0) + (1-Slope)*(Slope < 0), ymin = (-Slope)*(Slope >= 0))

# Find Infeasible Points in Right Panels
df_infeasible_pars <- as.data.frame(l$theta) %>%
  mutate(bdd = F) %>% 
  rename(Intercept = V1, Slope = V2) %>%
  mutate(bdd = factor(bdd, c(F, T), c("Unconstrained", "Constrained"))) %>%
  filter((Slope >= 0) & !(Intercept > -Slope & Intercept < 1) |
           (Slope < 0) & !(Intercept > 0 & Intercept < 1-Slope))

# Create Right Panels
p2 <- rbind(mutate(as.data.frame(l$theta), bdd = F), mutate(as.data.frame(l_bd$theta), bdd = T)) %>%
  rename(Intercept = V1, Slope = V2) %>%
  mutate(bdd = factor(bdd, c(F, T), c("Unconstrained", "Constrained"))) %>%
  ggplot(aes(x = Slope, y = Intercept)) +
  geom_point(size = 1, alpha = 0.3) +
  geom_point(data = df_infeasible_pars, aes(x = Slope, y = Intercept), color = "#F8766D",
             size = 1, alpha = 0.5) +
  facet_grid(bdd~.) +
  xlim(-4,4) +
  labs(x = expression(theta[1]), y = expression(theta[0])) +
  geom_ribbon(data = df_fill_pars, aes(ymin = ymin, ymax = ymax), fill = "gray", alpha = 0.5) + 
  theme_tufte()

# Create Figure 5
p1 + p2 & theme(plot.margin=grid::unit(c(0.5,0,0,0), "mm"))
\end{verbatim}
}

\end{document}